\theoremstyle{plain}               
\newtheorem{thm}{Theorem}
\newtheorem{lem}{Lemma}[section]
\newtheorem{cor}{Corollary}[section]
\newtheorem{prop}{Proposition}[section]
\newtheorem{defn}{Definition}[section]
\newtheorem{example}{Example}[section]
\theoremstyle{remark}
\newcommand*{\fancyrefthmlabelprefix}{thm}
\newcommand*{\fancyreflemlabelprefix}{lem}
\newcommand*{\fancyrefcorlabelprefix}{cor}
\newcommand*{\fancyrefdefilabelprefix}{defi}
\newcommand*{\fancyrefalglabelprefix}{alg}
\newcommand*{\frefalgname}{algorithm}
\newcommand*{\Frefalgname}{Algorithm}
\def\beq{\begin{equation}}
\def\eeq{\end{equation}}
\def\bq{\begin{quote}}
\def\eq{\end{quote}}
\def\ben{\begin{enumerate}}
\def\een{\end{enumerate}}
\def\bit{\begin{itemize}}
\def\eit{\end{itemize}}
\def\lb{\left(}
\def\rb{\right)}
\def\r|{\right|}
\newcommand\C{\mathbbm{C}}
\newcommand\Z{\mathbbm{Z}}
\newcommand\R{\mathbbm{R}}
\newcommand\M{\mathcal{M}}
\newcommand{\n}{\mathcal{N}}
\newcommand{\ketbra}[2]{|#1\rangle\langle #2|}
\newcommand{\tr}[1]{\operatorname{tr}\lb#1\rb}
\newcommand{\one}{\mathds{1}}
\newcommand{\highlight}[1]{\textcolor{black}{#1}}
\newcommand{\cO}{\mathcal{O}}
\newcommand\be{\begin{equation}}
\newcommand\ee{\end{equation}}
\newcommand{\ab}{\mathbf{a},\mathbf{b}}
\begin{document}
\title{Efficient classical simulation and benchmarking of quantum processes in the Weyl basis}

\author{Daniel Stilck Fran\c{c}a$^1$, Sergii Strelchuk$^2$, Micha{\l} Studzi\'nski$^3$}
\affiliation{
	$^1$ 	QMATH, Department of Mathematical Sciences, University of Copenhagen, Universitetsparken 5, 2100
	Copenhagen, Denmark \\
	$^2$ DAMTP, Centre for Mathematical Sciences, University of Cambridge, Cambridge~CB30WA, UK \\
	$^3$  Institute of Theoretical Physics and Astrophysics, National Quantum Information Centre, Faculty of Mathematics, Physics and Informatics, University of Gda{\'n}sk, Wita Stwosza 63, 80-308 Gda{\'n}sk, Poland}

\begin{abstract}
One of the crucial steps in building a scalable quantum computer is to identify the noise sources which lead to errors in the process of quantum evolution. Different implementations come with multiple hardware-dependent sources of noise and decoherence making the problem of their detection manyfoldly more complex. We develop a randomized benchmarking algorithm which uses Weyl unitaries to efficiently identify and learn a mixture of error models which occur during the computation. We provide an efficiently computable estimate of the overhead required to compute expectation values on outputs of the noisy circuit relying only on locality of the interactions and no further assumptions on the circuit structure. The overhead decreases with the noise rate and this enables us to compute analytic noise bounds that imply efficient classical simulability.
We apply our methods to ansatz circuits that appear in the Variational Quantum Eigensolver and establish an upper bound on classical simulation complexity as a function of noise, identifying regimes when they become classically efficiently simulatable.
\end{abstract}

\maketitle

Any device designed to take advantage of quantum-mechanical features is susceptible to noise which accompanies the underlying physical realization. 
Reliable error correction is one of the major challenges which prevents us from building scalable hardware. The resource overhead to implement even the simplest error-correcting schemes that underpin fault-tolerant computation are currently prohibitively costly. This motivated a flurry of research into quantum algorithms~\cite{Preskill2018quantumcomputingin} that work on quantum computers with small, but non-negligible, error rates and take advantage of quantum information processing protocols before the era of universal, error-corrected quantum computers. One of the key challenges is to precisely understand and characterize the noise and decoherence effects affecting these devices and to investigate how the noise affects the complexity of their classical simulation.

Successful error mitigation relies on correctly identifying the parameters of the underlying error models. The latter are constructed by employing gate-dependent benchmarking suites~\cite{wallman2014randomized, helsen2018representations,Onorati_2019} which aim to characterise the singular sources of noise. One way to use the acquired knowledge about the noise in quantum computing scenarios is to introduce a quantitative measure such as quantum volume~\cite{cross2019validating, blume2019volumetric}. The latter requires to compute the largest achievable depth of a model (random) quantum circuit that can be executed on quantum hardware by estimating the number of `heavy' output strings it generates. While this may give some insight into the reliability of quantum computer, it has several apparent limitations. 

First, it utilizes Haar-random circuits and thus does not provide means to understand hardware performance when implementing a given quantum circuit. 
Second, estimating quantum volume has an unfavourable scaling with a system size because the underlying heavy output generation problem scales exponentially with a number of qubits~\cite{aaronson2016complexity}.

In our work, we introduce an approach to randomized benchmarking and classical simulation of quantum circuits that relies on Weyl unitaries. It enables us to identify a number of error models and demonstrates favourable scaling with the system size which works both for qubit and higher dimensional systems. In particular, we can identify and detect mixtures of channels such as depolarizing and dephasing channels affecting the implementation of a given gate. Having access to noiseless Clifford gates, we can also identify the parameters of other noise models including over-rotations.
 Moreover, our protocol is robust to the so-called state preparation and measurement (SPAM) errors and scalable under a natural assumption that the noise is local. 

Second, we find a surprising connection between benchmarking protocols in the Weyl basis and the ability to simulate outputs of quantum circuits on a classical computer. For a {\it given} quantum circuit with the established noise profile, we provide an analytic bound on the sufficient number of samples required to classically estimate the circuit output up to a given precision using a Feynman path-like algorithm. Thus, our methods can be used to give an upper bound on the computational power of the noisy quantum device with a clear operational interpretation: we can establish a non-trivial computable bound on the gate noise that needs to be added to each gate in the circuit in order to render it classically efficiently simulatable.  All preexisting methods of efficient classical simulation must necessarily assume a particular structure of gateset, whereas our simulation algorithm does not rely on these assumptions. Moveover, our tools do not depend on the geometry of the circuit, which provides an advantage over the state of the art tensor network methods which classically simulate quantum circuits by contracting a tensor network with cost exponential in the treewidth of the graph induced by the circuit~\cite{Markov_2008}. 

The algorithm scales particularly well for estimating the expectation value of Pauli observables on the output of local, low-depth circuits. As this is the main subroutine in quantum algorithms for near-term devices~\cite{kandala2017hardware, 
wang2019accelerated, asaad2016independent}, our tools can be readily used to bound the classical simulation complexity of a wide range of quantum devices used for example in the VQE regime.

{\it{Weyl unitaries}}.
Our protocol makes use of \emph{Weyl-Heisenberg} unitaries $\{W_{(a,b)}\}_{a,b=0}^{d-1}$, which present the generalization of the Pauli matrices higher dimensions. They are defined as $W_{(a,b)}=Z^aX^b$, where $X$ is the shift unitary $X\in U(d)$, which acts on the computational basis mapping $\ket{j}\mapsto\ket{j+1\mod{d}}$, and $Z\in U(d)$ is the phase unitary mapping $\ket{j}\mapsto e^{i\frac{2j\pi}{d}}\ket{j}$, $j=0,\ldots, d-1$. These unitaries have a number of useful properties: they are orthogonal with respect to the Hilbert-Schmidt scalar product, they form an orthogonal basis for ${\cal M}_{d}(\mathbb{C})$ and they are a (projective) representation of $\Z_d\times\Z_d$. 
When describing a system consisting of $n$ qudits, then $ W_{(a_1,b_1)}\otimes W_{(a_2,b_2)}\otimes\cdots\otimes W_{(a_n,b_n)}$ is a basis of the space $\M_{d^n}$, we will usually denote these matrices by $W_{(\textbf{a},\textbf{b})}$, where $(\textbf{a},\textbf{b})\in\lb\Z_d\rb^{2n}$. Then an arbitrary operator $X:\M_d\to\M_d$ can be expressed as a $d^n\times d^n$ matrix with entries $X((\textbf{a},\textbf{b}),(\textbf{c},\textbf{d}))=d^{-n} \tr{ W_{(\textbf{a},\textbf{b})}^\dagger X (W_{(\textbf{c},\textbf{d})})}$, we will use a short notation $\big<X\big>^{(\textbf{a},\textbf{b})}_{(\textbf{c},\textbf{d})}$ for this representation.

{\it Classical simulation of noisy quantum circuits.} 
Working with Weyl unitaries brings forth the importance of using compact yet rich representation space for studying quantum processes. In this setting we go beyond standard benchmarking and make use of the information about noise in the circuit to bound its classical simulation complexity. 

Every state $\rho$  can be represented as a vector w.r.t. this basis by setting $\rho(\mathbf{a},\mathbf{b})=d^{-n/2}\tr{W_{(\mathbf{a},\mathbf{b})}^\dagger\rho}$. The same holds for observables, and quantum channels.
Consider a (noisy) circuit  ${\cal C}_{\cal B} = \n^{(N)}\circ \cdots \circ \n^{(1)}$, where the $\n^{(i)}:\M_{d^n}\to\M_{d^n}$ are quantum channels, acting on a product input $\rho= \rho_1\otimes\cdots\otimes\rho_n$. For a given observable $E$, we can classically simulate ${\cal C}_{\cal B}$ if we can estimate $\tr{\sigma E}$ classically up to an additive error $\epsilon>0$, where $\sigma={\cal C}_B(\rho)$ (this is also known as the {\it weak} simulation). 

We will make use of the $\ell_p$-norms of matrices w.r.t. to this basis, denoted by $\|\cdot\|_{\ell_p}$(which is just the $\ell_p$ norm of $\rho$ when regarded as a vector).
This is not to be confused with the usual Schatten norms.

Let $\n(\ab)$ be the $(\ab)$-th column of the channel $\n$. 
Then $\|\n(\ab)\|_{\ell_1}=\sum\limits_{(\mathbf{a}',\mathbf{b}')\in\Z_d^{2n}}\left| \big<\n\big>^{(\textbf{a},\textbf{b})}_{(\textbf{a}',\textbf{b}')}\right| $, 
and $\|\n\|_{{\ell_1}\to {\ell_1}}$ is the maximum over all $(\ab)$.
Our algorithm is based on $\ell_1$ sampling of matrices and vectors. We will assume that given quantum channels $\n^{(k)}$ in our (noisy) circuit, one can get the sample corresponding to its action on an input basis element $(\ab)$. That is, we assume we can sample from $
p_k(\mathbf{a},\mathbf{b}|\mathbf{a}',\mathbf{b}')=|\big<\n^{(k)}\big>^{(\textbf{a},\textbf{b})}_{(\textbf{a}',\textbf{b}')}|/\|\n^{(k)}(\ab)\|_{\ell_1}
$
and from 
$p_0(\ab)=|\rho(\ab)|/\|\rho\|_{\ell_1}.
$
As highlighted in~\cite{Rall2019}, it may be convenient to work in the Heisenberg picture, which requires replacing $\rho$ with $E$ in the  equation above. For simplicity, we present the algorithm in the Schr\"odinger picture.

\begin{tcolorbox}
[ title={Circuit sampling algorithm}]
\emph{Input:} noisy quantum circuit specified by quantum channels $\n^{(1)},\ldots,\n^{(N)}$, initial quantum state $\rho$ and observable $E$.\\
\emph{Output:} complex number $x$ s.t. $\mathbb{E}(x)=\tr{E\bigcirc_{k=1}^N\n^{(k)}(\rho)}$.
\begin{enumerate}
\itemsep-0.2em 
\item Sample $(\mathbf{a}_0,\mathbf{b}_0)$ from the distribution $p_0$.
\item For $k=1,\ldots,n$:
 Sample $(\mathbf{a}_{k},\mathbf{b}_k)$ from $p_k(\mathbf{a}_{k+1},\mathbf{b}_{k+1}|\mathbf{a}_{k},\mathbf{b}_k)$
\item Output $x$ given by
\begin{align*}
x&=\operatorname{sign}(\rho(\mathbf{a}_0,\mathbf{b}_0))\|\rho\|_{\ell_1}E(\mathbf{a}_{n},\mathbf{b}_n)\times\\
&\prod\limits_{k=1}^{N}\|\n^{(k)}(\mathbf{a}_{k},\mathbf{b}_k)\|_{\ell_1} \operatorname{sign}(\big<\n^{(k)}\big>^{(\textbf{a}_k,\textbf{b}_k)}_{(\textbf{a}_{k+1},\textbf{b}_{k+1})})
\end{align*}
\end{enumerate}
\end{tcolorbox}
Here $\operatorname{sign}(\cdot)$ function denotes the phase pre-factor.
Variations of this algorithm have recently and independently been discussed in other contexts~\cite{1903.04483,Rall2019}.
The following theorem proves the correctness of the algorithm by showing that it samples from the true distribution:
\begin{thm}\label{samplingthm}
The output of the circuit sampling algorithm satisfies $\mathbb{E}(x)=\operatorname{tr}\left(\sigma E\right)$. Taking $\cO\left(\frac{1}{\epsilon^2}M_B\log\lb\frac{1}{\delta}\rb\right)$ many samples, where
\begin{align}\label{equ:upperboundoutput}
M_B=\lb\|\rho\|_{\ell_1}\|E\|_{\ell_\infty}\prod\limits_{k=1}^{N}\|\n^{(k)}\|_{{\ell_1}\to{\ell_1}} \rb^2
\end{align}
suffices to guarantee that with probability at least $1-\delta$ an empirical average $\bar{x}$ of the samples satisfies $|\bar{x}-\operatorname{tr}\left(\sigma E\right)|\leq\epsilon$.
\end{thm}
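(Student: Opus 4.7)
The plan is to establish two facts and then combine them via a standard concentration inequality: first, that $x$ is an unbiased estimator of $\tr{\sigma E}$; second, that $|x|$ is deterministically bounded by $\sqrt{M_B}$. Given these two facts, Hoeffding's inequality immediately yields the claimed sample complexity $\cO(M_B\epsilon^{-2}\log(1/\delta))$.

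For unbiasedness I would expand $\mathbb{E}(x)$ as a sum over all sampled trajectories $\bigl((\mathbf{a}_0,\mathbf{b}_0),\ldots,(\mathbf{a}_N,\mathbf{b}_N)\bigr)$. By construction, the joint probability of such a trajectory factorises as
\begin{equation*}
\frac{|\rho(\mathbf{a}_0,\mathbf{b}_0)|}{\|\rho\|_{\ell_1}}\prod_{k=1}^{N}\frac{\bigl|\big<\n^{(k)}\big>^{(\mathbf{a}_{k-1},\mathbf{b}_{k-1})}_{(\mathbf{a}_k,\mathbf{b}_k)}\bigr|}{\|\n^{(k)}(\mathbf{a}_{k-1},\mathbf{b}_{k-1})\|_{\ell_1}},
\end{equation*}
and when this probability is multiplied against the associated value of $x$, the $\ell_1$-norm denominators cancel exactly against the norm factors appearing explicitly in $x$, while the sign pre-factors combine with the absolute values to restore the signed entries $\rho(\mathbf{a}_0,\mathbf{b}_0)$ and $\big<\n^{(k)}\big>^{(\mathbf{a}_{k-1},\mathbf{b}_{k-1})}_{(\mathbf{a}_k,\mathbf{b}_k)}$. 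Summing over all trajectories then yields
\begin{equation*}
\mathbb{E}(x)=\sum \rho(\mathbf{a}_0,\mathbf{b}_0)\left(\prod_{k=1}^{N}\big<\n^{(k)}\big>^{(\mathbf{a}_{k-1},\mathbf{b}_{k-1})}_{(\mathbf{a}_k,\mathbf{b}_k)}\right)E(\mathbf{a}_N,\mathbf{b}_N),
\end{equation*}
which, read as iterated matrix-vector products in the Weyl basis, is exactly the Weyl-basis expansion of $\tr{E\,\cC_{\cal B}(\rho)}=\tr{\sigma E}$.

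For the uniform bound on $|x|$ I would simply estimate the factors one by one: $|E(\mathbf{a}_N,\mathbf{b}_N)|\leq\|E\|_{\ell_\infty}$ by definition of the $\ell_\infty$-norm; $\|\n^{(k)}(\mathbf{a}_k,\mathbf{b}_k)\|_{\ell_1}\leq\|\n^{(k)}\|_{\ell_1\to\ell_1}$ by definition of the induced norm; and each sign pre-factor has modulus one. Thus $|x|\leq\|\rho\|_{\ell_1}\|E\|_{\ell_\infty}\prod_{k=1}^N\|\n^{(k)}\|_{\ell_1\to\ell_1}=\sqrt{M_B}$ with probability one. Hoeffding's inequality, applied separately to the real and imaginary parts of the empirical mean $\bar{x}$ of $N_s$ i.i.d.\ samples, then gives $\Pr[|\bar{x}-\mathbb{E}(x)|>\epsilon]\leq 4\exp(-cN_s\epsilon^2/M_B)$, which is at most $\delta$ as soon as $N_s=\cO(M_B\epsilon^{-2}\log(1/\delta))$.

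The argument is essentially a bookkeeping exercise, and the only non-routine point is the careful propagation of the normalization convention used to vectorize $\rho$, $E$ and the channels in the Weyl basis: one has to check that the sign-and-absolute-value cancellations in the unbiasedness step reproduce exactly the matrix-product formula for $\tr{E\sigma}$ in the chosen vectorization, with none of the $d^{-n/2}$ or $d^{-n}$ prefactors left stranded. Once those conventions are fixed the remainder is standard.
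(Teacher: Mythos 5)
Your proposal is correct and follows essentially the same route as the paper's proof: unbiasedness via expanding the trajectory probabilities against the output so that the $\ell_1$ normalizations and signs cancel to reproduce the Weyl-basis matrix-product formula for $\tr{\sigma E}$, a deterministic bound $|x|\leq\sqrt{M_B}$, and Hoeffding's inequality for the sample complexity. Your explicit treatment of the real and imaginary parts in the Hoeffding step is a minor refinement the paper leaves implicit, but it does not change the argument.
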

\begin{proof}
The probability of a fixed sequence $((\mathbf{a}_0,\mathbf{b}_0),\ldots,(\mathbf{a}_N,\mathbf{b}_N))$ is given by:
\begin{align}\label{equ:proboutcome}
\frac{|\hat{\rho}(\mathbf{a}_0,\mathbf{b}_0)|}{\|\hat{\rho}\|_{\ell_1}} \prod\limits_{k=1}^{N}\frac{|\big<\n^{(k)}\big>^{(\textbf{a}_k,\textbf{b}_k)}_{(\textbf{a}_{k+1},\textbf{b}_{k+1})}|}{\|\n^{(k)}(\textbf{a}_k,\textbf{b}_k)\|_{\ell_1}}.
\end{align}
Forming a product of the probability for a given sequence in~\eqref{equ:proboutcome} and the corresponding output of the algorithm we get its expectation value by summing over all possible outcome sequences:
\begin{align}\label{equ:matrixmultiplication}
\tr{\sigma E}=\qquad\smashoperator{\sum_{\substack{(\mathbf{a}_{1},\mathbf{b}_{1}),\ldots,(\mathbf{a}_{N},\mathbf{b}_{N})}}}E(\mathbf{a}_{N},\mathbf{b}_{N}) Z_N\dots Z_1\rho(\mathbf{a}_{0},\mathbf{b}_{0}),
\end{align}
where $Z_i=\big<\n^{(i)}\big>^{(\textbf{a}_{i-1},\textbf{b}_{i-1})}_{(\textbf{a}_{i},\textbf{b}_{i})}$.
The bound on the necessary number of samples follows from Hoeffding's inequality~\cite{hoeffding1994probability} after observing that the absolute of the output in the algorithm is at most $\|\hat{\rho}\|_{\ell_1}\|\hat{E}\|_\infty\prod\limits_{k=1}^{n}\|\n^{(k)}\|_{{\ell_1}\to{\ell_1}}.
$
\end{proof}
The proof only relies on the linearity of the evolution and not any property of the basis or underlying maps and vectors. Therefore, it can be easily re-expressed in the Heisenberg picture, i.e. by replacing sampling from $\rho$ with sampling from $E$ and $(\n^{(k)})^*$ in reverse order. This is useful when $\|E\|_{\ell_1}$ is smaller than $\|\rho\|_{\ell_1}$ or when we estimate averages of strings of Pauli operators.
This sampling routine is remarkably versatile: we extend our sampling algorithm to the case of quantum circuits that made up of quantum channels of the form $e^{t\mathcal{L}}$, where $\mathcal{L}$ is a Lindbladian. In addition, it also applies to unitary evolutions defined by Hamiltonian dynamics. 

Our circuit sampling algorithm extends that of~\cite{pashayan2015estimating} in several ways. Firstly, we show that one can use the results of the randomized benchmarking experiments to bound the complexity of a given noisy device in the Weyl basis. Second, it works for noisy quantum circuits in continuous and discrete time and evolutions in the Heisenberg picture. We present a range of bases for these as well as the Lindbladian case in Section VI and VII
 of the  \highlight {Supplemental Material (SM)} respectively. 

{\it Weyl randomized benchmarking (WRB)}.  An important feature of the Weyl operators is that many practically relevant noise models, such as (local) dephasing or (local) depolarizing channels, are diagonal in the Weyl operator basis. We will refer to such channels as \emph{Weyl diagonal channels} and denote them as $\mathcal{T}$. It turns out that for $d=2$ they coincide with mixed Pauli channels and for $d>2$ correspond to convex combinations of conjugations with the Weyl operators~\cite[Chapter 4]{Watrous2018}.

Thus, given their ubiquity and the fact that randomized compiling protocols can even bring arbitrary noise to this form~\cite{Wallman2016}, the goal of our protocol will be to learn the parameters of a Weyl-diagonal channel that models the noise affecting a unitary through randomized benchmarking protocols~\cite{Knill_2008,Magesan_2012,Magesan_2011}.

When implementing a known unitary $U$ acting on $n$ qudits, the resulting transformation, due to noise effects, is described by the quantum channel $\mathcal{T}\circ \mathcal{U}$, where $\mathcal{U}$ is the channel which corresponds to conjugation with $U$ followed by $\mathcal{T}$. 
Our goal is to learn the parameters of a Weyl diagonal channel $\mathcal{T}$, i.e. its diagonal elements in the Weyl basis.

To implement the protocol we make the following assumptions about the noise: a) one can implement Weyl unitaries with negligible error, and b) successive implementations of $\mathcal{U}$ are followed by the same error channel $\mathcal{T}$. We discuss how to relax the first assumption in Section II
 of the  \highlight {SM}.  The protocol consists of the following steps:
 
\begin{tcolorbox}
[ title={Weyl randomized benchmarking (WRB) protocol}]
\emph{Input:} $(\textbf{a},\textbf{b})\in\lb\Z_d\rb^{2n}$ corresponding to the diagonal element we wish to learn and a sequence length $m$. Initial state $\rho$ and POVM element $E$ on $n$ qudits.\\ 
\emph{Output:} complex number $y$.
\begin{enumerate}
\itemsep-0.2em  
\item Draw a random $(\textbf{a}_0,\textbf{b}_0)\in\lb\Z_d\rb^{2n}$, apply $W_{(\textbf{a}_0,\textbf{b}_0)}$ followed by a sequence $\bar W = (W_{(\textbf{a}_1,\textbf{b}_1)},\ldots,W_{(\textbf{a}_m,\textbf{b}_m)})$ of uniformly random local Weyl unitaries on the $n$ qudits interspersed with the (noisy) unitary $U$.
\item Apply $\bar W^\dagger$. 
\item Measure the state with a POVM $\{E,\one-E\}$. 
\item When $E$ is measured, output $y=$ $\chi_{(\textbf{a},\textbf{b})}(\textbf{a}_0,\textbf{b}_0)=$ $\text{exp}(i\frac{2\pi}{d}\langle(\textbf{b},-\textbf{a}),(\textbf{a}_0,\textbf{b}_0)\rangle)$
 Else, output $y=0$.
\end{enumerate}
\end{tcolorbox}
The function $\chi_{(\textbf{a},\textbf{b})}$ is the character of a representation of the group $(\Z_d)^{2n}$ and it ensures that we project the initial state to $W_{(\textbf{a},\textbf{b})}$.
More precisely, it follows from standard representation theory that for any operator $Y\in\M_{d^n}$:
\begin{equation}
\begin{split}
   (1/d^{2n})\smashoperator{\sum_{\substack{(\textbf{a}_0,\textbf{b}_0)\in (\Z)^{2n}}}}&\chi_{(\textbf{a},\textbf{b})}(\textbf{a}_0,\textbf{b}_0)W_{(\textbf{a}_0,\textbf{b}_0)}YW_{(\textbf{a}_0,\textbf{b}_0)}^{\dagger}=\\
   &(1/d^n)\tr{W_{(\textbf{a},\textbf{b})}^\dagger Y}W_{(\textbf{a},\textbf{b})}.
   \end{split}
\end{equation}
This forces the expectation value for a fixed sequence length $m$ to be given by 
\begin{align}\label{equ:expectation_diagonal}
  \frac{1}{d^{2n}}\tr{ W_{(\textbf{a},\textbf{b})}^{\dagger} \mathcal{T}\circ \mathcal{U}\lb W_{(\textbf{a},\textbf{b})}\rb}^mS,
\end{align}
where $S=\tr{W_{(\textbf{a},\textbf{b})}^\dagger\rho}\tr{E W_{(\textbf{a},\textbf{b})}}$.
(See Section I 
 for a gentle introduction of character randomized benchmarking and Section II
 of  \highlight {SM} for the justification of the above).
Our protocol is related to character randomized benchmarking of~\cite{helsen2018representations}, with the distinction that we wish to determine the noise affecting a specific unitary assuming that the noise affecting Weyl operators is negligible. The protocol does not significantly depend on the particular choice of $E$ and $\rho$, as long as $\tr{W_{(\textbf{a},\textbf{b})}^\dagger\rho}\tr{E W_{(\textbf{a},\textbf{b})}}\sim d^n$. This is because we will later perform an exponential fit of Eq.~\eqref{equ:expectation_diagonal} to a curve of the form  $a\times q^m$, and the magnitude of $a$ is determined by $E$ and $\rho$. Canonical choices to achieve this scaling would be to pick $E$ as the projector onto the +1 eigenspace of $W_{(\textbf{a},\textbf{b})}$ and $\rho$ as one of its eigenvalues because for this choice we have $\tr{E W_{(\textbf{a},\textbf{b})}}\geq d^{n-1}$ and measuring these POVMs only requires product measurements. We refer to Section II.D of the \highlight {SM}  for more details.

By selecting different sequence lengths and performing an exponential fitting one gets an estimate of the diagonal in the Weyl basis of $\mathcal{T}\circ \mathcal{U}$. The maximal sequence length is determined by the spectral gap $\lambda$ of the quantum channel $\mathcal{T}$. For symmetric (i.e. $\mathcal{T}=\mathcal{T}^*$) Weyl-diagonal channels, this reduces to $1-\lambda_2$, where $\lambda_2$ is second largest eigenvalue, and is a natural measure of the noisiness of the channel. The parameter $\lambda^{-1}$ should be thought of as the depth at which the noise clearly manifests itself, as $\lambda$ can be seen as a generalized error probability of the channel. For instance, for a depolarizing channel with depolarizing probability $p$, $\lambda=p$, and we expect to see errors at depth $p^{-1}$.  We then have:
\begin{thm}\label{benchmarkingthm}
Let $\mathcal{T}:\M_{d^n}\to\M_{d^n}$ be a symmetric (i.e. $\mathcal{T}=\mathcal{T}^*$) Weyl-diagonal channel, $\mathcal{U}:\M_{d^n}\to\M_{d^n}$ a unitary channel, $\epsilon,\delta>0$ be given error parameters, and  let $\lambda$ be the spectral gap of $\mathcal{T}\circ  \mathcal{U} $.
Then we can find an estimate $\hat{\mu}(\textbf{a},\textbf{b})$ of 
$\mu(\textbf{a},\textbf{b})=\big<\mathcal{T}\circ  \mathcal{U} \big>^{(\textbf{a},\textbf{b})}_{(\textbf{a},\textbf{b})}$
satisfying
$
\left|\mu(\textbf{a},\textbf{b})-\hat{\mu}(\textbf{a},\textbf{b})\right|\leq \cO\left(\epsilon|1-|\mu(\textbf{a},\textbf{b})|^2|\right)
$
with probability at least $\delta$ by performing 
$
M=\cO(\epsilon^{-2}\log(\delta^{-1}\log(1-\lambda)^{-1}))
$
randomized benchmarking experiments each containing at most $M_{\max}=\cO\lb\lambda^{-1}\rb$ gates in the sequence.
\end{thm}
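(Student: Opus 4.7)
The plan is a standard randomized-benchmarking analysis adapted to the Weyl basis: estimate $a\mu^m$ to additive precision at a few carefully chosen sequence lengths and then recover $\mu=\mu(\mathbf{a},\mathbf{b})$ by inverting an exponential. First, I would verify that a single run of the WRB protocol at length $m$ has $\mathbb{E}[y]=a\mu^m$ for a computable SPAM-dependent prefactor $a$. This is immediate from Eq.~\eqref{equ:expectation_diagonal}: the character-weighted twirl over $(\mathbf{a}_0,\mathbf{b}_0)$ projects onto the $W_{(\mathbf{a},\mathbf{b})}$ component, and the independent uniform Weyl draws in the middle of the sequence collapse $\Tm\circ\Um$ to its diagonal entry $\mu$ raised to the $m$-th power. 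Because the character has unit modulus and the POVM outcome lies in $\{0,1\}$, one also has $|y|\le 1$.

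Second, I would apply Hoeffding's inequality (to real and imaginary parts separately) to obtain additive-$\epsilon$ estimates of $a\mu^{m_k}$ at a few sequence lengths $m_k$ geometrically spaced between $1$ and $M_{\max}=\cO(\lambda^{-1})$, the latter chosen so that $|\mu^{M_{\max}}|$ has decayed to a constant (since $\lambda$ upper-bounds $1-|\mu|$ for the symmetric Weyl-diagonal factor $\Tm$). Allocating the samples across these lengths and union-bounding contributes the $\log\log(1-\lambda)^{-1}$ correction on top of $\log\delta^{-1}$ inside the Hoeffding exponent, yielding the advertised total shot count $\cO(\epsilon^{-2}\log(\delta^{-1}\log(1-\lambda)^{-1}))$ with the per-sequence length budget chosen so that the dominant contribution is logarithmic.

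Third, I would extract $\hat\mu$ from the pairs $(m_k,\hat f(m_k))$ either by a least-squares exponential fit or, more transparently, by the explicit ratio $(\hat f(m_{k^\star})/\hat f(m_{k^\star-1}))^{1/(m_{k^\star}-m_{k^\star-1})}$ at the pair of lengths straddling the natural scale $\Delta m\asymp(1-|\mu|)^{-1}$. A derivative computation of $z\mapsto z^{1/\Delta m}$ at $z=\mu^{\Delta m}$ gives sensitivity $|\partial\mu/\partial z|\asymp(1-|\mu|)\le|1-|\mu|^2|$; propagating the additive-$\epsilon$ uncertainty on the $\hat f(m_k)$'s through this map then yields $|\hat\mu-\mu|\le\cO(\epsilon|1-|\mu|^2|)$, provided $|a|=\Omega(1)$ as ensured by the canonical choices of $\rho$ and $E$ discussed after Eq.~\eqref{equ:expectation_diagonal}.

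The main obstacle is making the sensitivity-of-fit step rigorous without a priori knowledge of $|\mu|$: the optimal scale $\Delta m\sim(1-|\mu|)^{-1}$ is itself unknown, so one either needs a preliminary round to locate it and refine adaptively, or must show that a joint fit across all geometrically spaced lengths automatically realizes this scaling. This becomes particularly delicate in the regime $|\mu|\to 1$, where the nonlinear inversion $z\mapsto z^{1/m}$ is ill-conditioned and the variance propagated through the ratio step must be controlled so that it does not overshadow the $|1-|\mu|^2|$ savings afforded by the derivative.
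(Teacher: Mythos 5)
Your route is the paper's route (an adaptation of the Harper et al.\ multiplicative-precision RB analysis), but the step you yourself flag as the ``main obstacle'' is precisely the content of the paper's proof, and it cannot be left open: without it the bound $|\hat\mu-\mu|\leq\cO(\epsilon|1-|\mu|^2|)$ is not established. The paper resolves the unknown scale $\Delta m\sim(1-|\mu|)^{-1}$ \emph{adaptively}: set $m_1=1$, keep doubling ($m_i=2^i+1$) while $|\hat q(m_i)|^2>|\hat q(1)|^2/3$, and output $|\hat\mu|=\left(\hat q(m_{i})/\hat q(m_1)\right)^{1/(2m_i)}$. The stopping rule does the work your derivative heuristic gestures at: since the previous length was $(m-1)/2+1$, termination forces $|\mu|^{2m}\geq\tfrac{1}{9}(1-17\epsilon)$, hence $m=\Theta\left(\log(|\mu|^{-1})^{-1}\right)$, i.e.\ $m^{-1}=\cO(1-|\mu|)$, and then $\left(1+\cO(\epsilon)\right)^{1/(2m)}=1+\cO\left(\epsilon(1-|\mu|)\right)$, which is exactly the multiplicative error. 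The doubling schedule gives $i_{\max}=\cO(\log\log(1-\lambda)^{-1})$ rounds; Hoeffding at accuracy $\epsilon C^2|\mu|^2$ per round plus a union bound gives the stated $M=\cO(\epsilon^{-2}\log(\delta^{-1}\log(1-\lambda)^{-1}))$ (with a hidden $(C|\mu|)^{-4}$ prefactor that your ``additive-$\epsilon$'' phrasing suppresses --- this rescaling is why $|a|=\Omega(1)$ and the canonical choice of $\rho,E$ matter), and $M_{\max}=\cO(\lambda^{-1})$ follows from $|\mu|\leq 1-\lambda$. A non-adaptive geometric grid with a joint fit might also work, but you would have to prove it realizes the $(1-|\mu|)$ sensitivity, which you do not.

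The second gap is the complex phase. Your inversion $z\mapsto z^{1/\Delta m}$ is $\Delta m$-valued over $\C$, so a single ratio cannot recover both modulus and phase of $\mu$; moreover averaging $y$ directly and splitting into real and imaginary parts does not give you $|a\mu^m|$ without knowing the phase. The paper separates the two estimates: the modulus is obtained from $|q(m)|^2$ via the pairing trick $X_k=\operatorname{Re}\left(s_k(m)\bar s_{k+l}(m)\right)$ over independent sequence pairs (a bounded real random variable with mean $|q(m)|^2$, to which Hoeffding applies and in which the unknown phase cancels), and the phase is estimated separately via $\arctan(Z_m/Y_m)$, which only determines $m\theta$ modulo $2\pi$. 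This is exactly where the hypothesis $\mathcal{T}=\mathcal{T}^*$ enters: it forces the eigenvalues of $\mathcal{T}$ to be real, so the phase of $\mu(\textbf{a},\textbf{b})$ equals the known phase of the unitary's diagonal entry up to a sign determined from the signs of $Y_m,Z_m$, removing the branch ambiguity. Your proposal never invokes the symmetry assumption, which is the telltale sign that phase recovery --- and hence the full estimate of the complex number $\mu(\textbf{a},\textbf{b})$ --- is not actually handled.
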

Proof of the Theorem is located in Section IV
 of the  \highlight {SM} and is based on extending the results of~\cite{Harper_2019_statistical} to our setting. We note that in the case of qubits the underlying channels are always symmetric and many other relevant examples such as dephasing or depolarizing channel belong to this class.
The knowledge about $\mu(\textbf{a},\textbf{b})$ and the fact that $\mathcal{T}$ is Weyl diagonal makes it sufficient to estimate $\mu(\textbf{a},\textbf{b})$ to estimate the noise parameters because in this case: $
    \mu(\textbf{a},\textbf{b})=
    \big<\mathcal{U}\big>^{(\textbf{a},\textbf{b})}_{(\textbf{a},\textbf{b})}\big<\mathcal{T}\big>^{(\textbf{a},\textbf{b})}_{(\textbf{a},\textbf{b})}$.
We thus completely characterize $\mathcal{T}$ as long as the diagonal of the unitary is nonzero. It turns out that a simple enhancement of the above protocol with a suitable noiseless Clifford gate allows one to analyze noise models with {\it any} off-diagonal contributions. Consider an element $\big<\mathcal{T}\big>^{(\textbf{a}_1,\textbf{b}_1)}_{(\textbf{a}_2,\textbf{b}_2)}$ of $\mathcal{T}$ in the Weyl basis which we want to estimate and assume we can implement a noiseless Clifford that acts as $CW_{(\textbf{a}_1,\textbf{b}_1)}C^\dagger=e^{i\phi}W_{(\textbf{a}_2,\textbf{b}_2)}$. Such unitary always exists and can be easily identified (as long as none of the Weyl operators involved is the identity).
Applying $C$ after the target unitary gate in the randomized benchmarking experiment gives access to the desired off-diagonal entry:
    \begin{equation}\label{equ:Ccliffordtrick}
\mu(\textbf{a}_1,\textbf{b}_1)= \big<\mathcal{C}\circ \mathcal{T}\circ  \mathcal{U} \big>_{(\textbf{a}_1,\textbf{b}_1)}^{(\textbf{a}_1,\textbf{b}_1)}=e^{-i\phi}\big<\mathcal{T}\circ  \mathcal{U}\big>_{(\textbf{a}_1,\textbf{b}_1)}^{(\textbf{a}_2,\textbf{b}_2)}.
\end{equation}
The output of the algorithm is an estimate of $\big<\mathcal{T}\big>^{(\textbf{a}_1,\textbf{b}_1)}_{(\textbf{a}_2,\textbf{b}_2)}$ as per Theorem~\ref{benchmarkingthm}. This method can also be used to learn any number of matrix elements of $\mathcal{T}\circ\mathcal{U}$ by conjugating it with Pauli matrices, interpolating between a constant number of learnable noise parameters analysed in~\cite{1806.02048} and full process tomography~\cite{Kimmel_2014,Roth2018}. 

To estimate Weyl-diagonal channels one requires $\cO(d^{2n})$ parameters, which remains practical only for small systems. 
However, assuming locality of the noise it is possible to learn it efficiently.
For example, suppose that that the unitary $U$ is the product of $2$-qudit gates followed by Weyl-diagonal noise acting on the same qudits. Let the noise on the first qudit be completely characterized by the diagonals w.r.t. Weyl operators with $(\textbf{a},\textbf{b})=(a_1,a_2,0,
\ldots,0,b_1,b_2,0,\ldots,0)$. This gives a total of $\cO(nd^4)$ parameters to learn, rendering the protocol efficient. We discuss the `Clifford trick' of Eq.~\ref{equ:Ccliffordtrick} as well as the extension to more complex local noise models in Sections II.C
and III of the  \highlight {SM} respectively.

Theorem~\ref{benchmarkingthm} extends the results of~\cite{flammia2019efficient,Harper2019} in two distinct ways. First, our techniques are not qubit-specific and work for systems of arbitrary dimension. Secondly, we are able to naturally incorporate gate-dependent noise (as long as we as make assumptions about how it affects the Weyl operators). Thus, we relax the assumption whereby the Weyl operators all being affected by the same, known Weyl diagonal (noise) channel. 

{\bf Applications.} {\it Local quantum circuits}. When the quantum noise channels in the circuit are local and the initial state and observable are product, then the complexity of our sampling algorithm scales polynomially. To achieve this  one requires efficient estimation of the transition probabilities and sampling and/or access to entries of either the state or observable in the basis. We restrict our discussion to the Weyl basis, but the argument works for any product basis.

Suppose that each quantum channel acts on at most $m=\cO(1)$ qubits.  
If we have a product basis and $\n^{(k)}$ is local, 
then $\big<\n^{(k)}\big>^{(\textbf{a}_1,\textbf{b}_1)}_{(\textbf{a}_2,\textbf{b}_2)}=0$ if the strings differ outside of the support of $\n^{(k)}$. This is because the action of $\n^{(k)}$ does not change that element of the string and, thus, the output of $(\textbf{a}_1,\textbf{b}_1)$ remains orthogonal to the other string. Thus, given some $(\textbf{a}_1,\textbf{b}_1)$ as input, it suffices to only compute $\big<\n^{(k)}\big>^{(\textbf{a}_1,\textbf{b}_1)}_{(\textbf{a}_2,\textbf{b}_2)}$ for $(\textbf{a}_2,\textbf{b}_2)$ that coincides with the input on the support of $\n^{(k)}$ to get the elements with nonzero probability under $p_k(\cdot|(\textbf{a}_1,\textbf{b}_1))$. As there are only $d^{2m}=\cO(1)$ many of these, computing the associated quantities such as the normalization and signs can be done in polynomial time, resulting in the efficient routine to which produces samples.  It remains to estimate $M_B$ to determine a bound on the required number of samples. Note that
\begin{align}\label{equ:multnegtensor}
\|\n^{(1)}\otimes\n^{(2)}\|_{\ell_1\to\ell_1}=\|\n^{(1)}\|_{\ell_1\to\ell_1}\|\n^{(2)}\|_{\ell_1\to\ell_1}
\end{align}
in the case of a product basis. Moreover, the $\ell_1\to\ell_1$ norm is submultiplicative as a matrix norm induced by a vector norm, i.e. $\|\n^{(1)}\circ \n^{(2)}\|_{\ell_1\to \ell_1}\leq\|\n^{(1)}\|_{\ell_1\to\ell_1}\|\n^{(2)}\|_{\ell_1\to \ell_1}$. 

The above properties are used for get an estimate of $M_B$. We use of the multiplicativity of the $\ell_1\to\ell_1$ norm given by equation~\eqref{equ:multnegtensor} for subsequences of the circuit consisting of quantum channels that do not overlap. Each individual $\ell_1\to\ell_1$ norm can  be computed efficiently and the multiplicativity implies that the overall $\ell_1\to\ell_1$ norm of this sequence of operations is just the product of each one. Then, whenever two quantum channels have a nontrivial overlap, we may use the submultiplicativity of the norms and  computing the norm for subsequences consisting of non-overlapping quantum channels. In short, we see that if $\n^{(N)}\circ \n^{(N-1)}\circ\cdots\circ \n^{(1)}$ is a sequence of local, noisy gates that describe the circuit, then the number of samples is at most as in~\eqref{equ:upperboundoutput} in Theorem~\ref{samplingthm}.
Thus we can sample efficiently in the Weyl basis from circuits consisting only of local quantum channels. 

The Weyl basis has many advantages over the phase space basis when simulating algorithms on near-term quantum hardware and studying the effects of noise. First, Clifford gates represented in this basis do not increase the sample complexity of the algorithm: they act as signed permutations in the Weyl basis and, thus, $\|\mathcal{C}\|_{\ell_1\to\ell_1}=1$ for any Clifford gate $\mathcal{C}$.
Secondly, if the initial state is product and the target observable is local or is a Pauli string, then we can also achieve that $\|E\|_{\ell_1}\|\rho\|_{\ell_\infty}=\cO(1)$ by simulating the evolution in the Heisenberg picture (see Section V of the SM)

{\it Simulating VQE ansatze.}
The simplicity of representation of Clifford gates as well as Pauli observables makes this method suitable for classically simulating quantum circuits that appear in the VQE algorithm. We apply out tools to the problem of solving MaxCut on a graph with $n$ vertices using the VQE algorithm~\cite{moll2017quantum}. The problem is encoded in the ground state of the Hamiltonian $H = \sum_{1=i<j}^n w_{ij}\sigma^z_i\otimes \sigma^z_j$, where $w_{ij}\in \mathbb{R}$. The ansatz circuit used in this case for the state preparation has the form $|\psi({\boldsymbol{\theta}})\rangle =\left[U({\boldsymbol \theta})U_{ent}\right]^D|\psi({\mathbf 0})\rangle$, ${\boldsymbol{\theta}} = \{\theta_{i,k}\}_{i,k}$, $1\le i\le n, 1\le k\le D$ where the $k$-th application of parametrized unitary is given by $U({\boldsymbol \theta}) = \otimes_{i=1}^n Y(\theta_{i,k})$, $Y(\theta_{i,k}) = \exp{(-i \frac{\theta_{i,k}}{2}\sigma^Y_i)}$, and $U_{ent}=\otimes_{i=1}^{n/2-1}CNOT_{2i,2i+1}$. The VQE algorithm works by iteratively preparing states $|\psi({\boldsymbol{\theta}})\rangle$ which are the approximations of the ground state of $H$, where $\boldsymbol \theta$ in each iteration are determined by a suitable classical optimization algorithm. 

Now assume we performed the WRB protocol and estimated that each $CNOT$ gate in the ansatz experiences a two-local depolarizing noise $p_C$ and the single-qubit rotations suffer from one local depolarizing noise with rate $p_Y$. We assume this rate to be independent of $\theta$ for simplicity. $CNOT$ gate is a Clifford gate, but the $Y(\theta)$ are in general non-Clifford gates. Taking into account the noise, the process of state evolution in the Weyl basis can be represented as $\rho({\boldsymbol \theta})_W  = \n^{(D)}_{ent}\circ \n^{(N)}_Y\circ \cdots \circ \n^{(1)}_{ent}\circ \n^{(1)}_Y|\psi({\boldsymbol 0})_W\rangle$. Note that both the noise and the gates are unital. To sample from the circuit we turn to Theorem~\ref{samplingthm}. Clifford unitaries act as signed permutations, making $\ell_1\to \ell_1$ norm equal to $1$. Bounds are improved if we also incorporate the noise affecting the Clifford gates in the rotations: we assume that the CNOTs are noiseless, while each pair of rotations is preceded by $2$-local depolarizing noise $\mathcal{T}_{p_C}$ and proceeded by $1$-local depolarizing noise $\mathcal{S}_{p_Y}$.
Viewing this evolution in the Pauli basis, a simple computation shows that for $\mathcal{T}_{p_C}\circ Y(\theta)\otimes Y(\theta') \circ\lb \mathcal{S}_{p_Y}\otimes \mathcal{S}_{p_Y}\rb$ we have:
\begin{equation}
\begin{split}
&\|\mathcal{T}_{p_C}\circ Y(\theta)\otimes Y(\theta') \circ\lb \mathcal{S}_{p_Y}\otimes \mathcal{S}_{p_Y}\rb\|_{\ell_1\to \ell_1}=\\&\max\{1, p_y^2p_C\phi(\theta)\phi(\theta'),p_yp_C\phi(\theta),p_yp_C\phi(\theta')\},
\end{split}
\end{equation}
where $\phi(\theta)=|\cos(\theta)|+|\sin(\theta)|\leq\sqrt{2}$.

To sample the energy values with respect to $H$ which contains at most $n^2$ two-body Pauli observables with accuracy $\epsilon$ we require at most $M = \cO(\epsilon^{-2}n^2\lb p_Cp_Y^22\rb^{2nD})$ samples for $p_Y\geq2^{-1/2}$. In particular, the algorithm is efficient whenever $p_Cp_Y^2<1/2$. Thus, as long as $p_Cp_Y^2\leq 1+\tfrac{\log(nD)}{nD}$ we have that the number of samples scales as $M=\cO(n^4D^2\epsilon^{-2})$, making the algorithm efficient because $\lb p_Cp_Y^22\rb^{2nD}\leq n^2D^2$.


{\it Acknowledgements.}
D.S.F. was supported by VILLUM FONDEN via the QMATH Centre of Excellence under Grant No. 10059 and the European Research Council (Grant agreement No. 818761).
 S.S. acknowledges support from the QuantERA ERA-NET Cofund in Quantum Technologies implemented within the European Union's Horizon 2020 Programme (QuantAlgo project), and administered through the EPSRC grant EP/R043957/1., the Leverhulme Early Career Fellowship scheme and the Royal Society University Research Fellowship. MS acknowledges support from the grant "Mobilno{\'s}{\'c} Plus IV", 1271/MOB/IV/2015/0 from the Polish Ministry of Science and Higher Education.


\bibliography{biblio}
\newpage
\section*{Supplemental Material}
\section{Character randomized benchmarking}\label{app:charbenchintro}
Any randomized benchmarking protocol is defined with respect to a given, discrete collection of gates called the gateset $\mathbb{G}$. The procedure relies on randomly sampling a sequence of gates from the set $\mathbb{G}$ with the goal of estimating its average fidelity. The sequence of gates is applied to an initial state, which is followed by a global inversion gate. In the ideal situation, when noise is absent, the system returns to the initial configuration. However, this is not the case in practice. In this case, we compute the overlap between the output and the input state by measuring with two-component POVM $\{E,\one-E\}$. Repeating this for a large number of sequences of different lengths $m$ gives us a list of so-called survival probabilities $\{p_m\}_m$.
If the gate set consists of the elements from the Clifford group, and the noise is gate independent, points $\{p_m\}_m$ can be fitted to a single exponential decay curve of the form
\be
\label{fit1}
p(m,E,\rho) \approx A+Bf^m.
\ee
Constants $A,B$ depend on the quality of the state $\rho$ preparation and measurement; and the parameter $f$ informs us how well the gates are implemented.  

In the general case, when the gate set is not the multi-qubit Clifford group the fitting relation~\eqref{fit1} does not hold and it must be replaced by a more general form:
\be
\label{fit2}
p(m,E,\rho) \approx \sum_{\alpha}C_{\alpha}f^m_{\alpha}.
\ee
Parameters $f_{\alpha}$ depend only on the quality of implementation of the gates, prefactors $C_{\alpha}$ depend only on how well the initial state $\rho$ is prepared and measured. 

In particular, when a given gate set forms a group $\mathbb{G}$, we can isolate numbers $f_{\alpha}^m$ given in~\eqref{fit2} using the so-called character randomized benchmarking protocol introduced in~\cite{1806.02048}. According to the original notation by $G$ we denote a unitary gate from the gateset $\mathbb{G}$. By writing $\mathcal{G}(\rho)$ we denote an action $G\rho G^{\dagger}$.
For a general randomized benchmarking procedure over a given group $\mathbb{G}$ introduced in~\cite{Magesan_2012}, we can write numbers $p(m,E,\rho)$ as
\begin{equation}
p(m,E,\rho)=\operatorname{tr}\left[E\left(\frac{1}{|\mathbb{G}|}\sum_{G \in \mathbb{G}}\mathcal{G}^{\dagger}\widetilde{\mathcal{G}} \right)^m(\rho)  \right].
\end{equation}
Here $\widetilde{\mathcal{G}}=\n\circ \mathcal{G}$ is a noisy implementation of the action of $\mathcal{G}(\rho)$.
Applying Schur's lemma~\cite{Roe} we simplify above expression to
\begin{equation}
p(m,E,\rho)=\sum_{\alpha}\operatorname{tr}\left(E\mathcal{P}_{\alpha}(\rho) \right)f_{\alpha}^m, 
\end{equation}
where $\mathcal{P}_{\alpha}$ is the projector onto representation space of irreducible component $\phi_{\alpha}$. Finally denoting the character function of the representation $\phi_{\alpha'}$ as $\chi_{\alpha'}$, assigning to every element $G\in \mathbb{G}$ a complex number we can write down a modified randomized benchmarking protocol (keeping the original sequence):
\begin{tcolorbox}[breakable]
	\emph{Input:} Group $\mathbb{G}$, sequence length $m$, quantum state $\rho$, POVM element $E$ and character $\chi^{\alpha'}$ of $\mathbb{G}$.\\
	\emph{Output:} The estimate of survival probability $k_m$.
	\begin{enumerate}

	\itemsep-0.2em  
		\item Sample $\vec{G}=G_1,\ldots,G_m$ uniformly at random from $\mathbb{G}$.
		\item Sample $\hat{G}$ uniformly at random from $\mathbb{G}$.
		\item Prepare a quantum state $\rho$ and apply the gates $(G_1\hat{G}),G_2,\ldots,G_m$.
		\item Compute the inverse $G_{\operatorname{inv}}=(G_m\cdots G_1)^{\dagger}$ and apply it.
		\item Estimate the weighted survival probability
		\begin{equation}
		\begin{split}
		k_m^{\hat{\alpha}'}(\vec{G},\hat{G})&=\operatorname{tr}(E\mathcal{P}_{\hat{\alpha}'}(\rho))\chi_{\hat{\alpha}'}(\hat{G})\times \\
		&\times\operatorname{tr}\left[Q\widetilde{\mathcal{G}}_{\operatorname{inv}}\widetilde{\mathcal{G}}_m\cdots \widetilde{(\mathcal{G}_1\hat{\mathcal{G}})}(\rho) \right].
		\end{split}
		\end{equation}
		\item Repeat for many $\hat{G}\in \hat{\mathbb{G}}$ and estimate the average 
		\begin{equation}
		k_m^{\hat{\alpha}'}=\mathbb{E}_{\hat{G}}\left(k_m^{\hat{\alpha}'}(\vec{G},\hat{G}) \right).
		\end{equation}
		\item Repeat for many $\vec{G}$ and estimate the average 
		\begin{equation}
		k_m=\mathbb{E}_{\vec{G}}(k_m^{\hat{\alpha}'}(\vec{G})).
		\end{equation}
		\item Repeat for many different $m$.
	\end{enumerate}
\end{tcolorbox}
Using the above and properties of irreducible characters one can rewrite Eq.~\eqref{fit2} as
\begin{equation}
k_m=\operatorname{tr}\left[E\mathcal{P}_{\alpha'}(\rho) \right]f_{\alpha'}^m, 
\end{equation}
so we are able to isolate each parameter.  We then have to choose POVM $E$ and the initial state $\rho$ to maximize $\operatorname{tr}\left[E\mathcal{P}_{\alpha'}(\rho) \right]$ and repeat the procedure for different choices of $\alpha, \widetilde{\alpha}'$.
\section{Randomized Benchmarking for Weyl group}\label{app:RBapp}
We will now specialize statements of the previous section to the case where $G$ is the group $\lb \Z_d\times \Z_d\rb^n$ with the  (projective) representation given by the Weyl operators.
Here we will discuss basic facts related to the Weyl matrices, randomized benchmarking in the Weyl basis and quantum channels that are covariant with respect to Weyl group that are needed for our protocol.

\subsection{Projections onto Weyl matrices}
Let us review basic facts about the Weyl operators on $\M_d$, which were defined in the main text. It is easy to see that they satisfy the relations
\begin{align}
W_{(a_1,b_1)}W_{(a_2,b_2)}=\nu ^{b_1+a_2}W_{(a_1+a_2,b_1+b_2)},
\end{align}
where $\nu=e^{\frac{2\pi}{d}}$ is the $d$-th root of unity. This implies that they form a projective representation of $\Z_d\times \Z_d$ and several other useful relations follow from the formula above, such as
\begin{align}\label{equ:conjugationwithweyl}
W_{(a_1,b_1)}W_{(a_2,b_2)}W_{(a_1,b_1)}^\dagger=\nu^{b_1a_2-a_1b_2}W_{(a_2,b_2)} .  
\end{align}
It also follows that the conjugate action of the Weyl operators in $\M_d$ gives a representation of $\Z_d\times \Z_d$. 

We know from standard representation theory~\cite{Simon_1995} that we can decompose $\M_d$ into irreducible subspaces with respect to this representation of $\Z_d\times \Z_d$. 
Let us now discuss this decomposition into irreducible representations and the projections onto them in the case of Weyl operators. A simple consequence of Eq.~\eqref{equ:conjugationwithweyl} that all the subspaces $\mathbb{W}_{a,b}=\text{span}\{W_{(a,b)}\}$ are invariant under this representation. Thus, each one of these subspaces is an irreducible subspace with respect to this representation. As we have $d^2$ such subspaces and the underlying space has dimension $d^2$, we conclude these are all irreducible subspaces. 

Let us now discuss in more detail the projections onto each one of $\mathbb{W}_{a,b}$, denoted by $\mathcal{P}_{(a,b)}$. It is well-known that the projector onto irreducible subspaces are of the form
\begin{align}\label{equ:projectionirrep}
\mathcal{P}_{(a,b)}(X)=\frac{1}{d^2}\sum_{(f,g)\in\Z_d\times\Z_d}\chi_{(a',b')}(f,g)W_{(f,g)}X W_{(f,g)}^\dagger,
\end{align}
where $\chi_{(a',b')}$ corresponds to the character of some  irreducible representation of $\Z_d\times \Z_d$.
In the case of the group $\Z_d\times \Z_d$, it is well-known that all the characters are of the form 
\begin{align}\label{equ:charactercommutativegroup}
\chi_{(a',b')}(f,g)=\text{exp}\lb \frac{2\pi i}{d}\lb a'f+b'g\rb\rb.
\end{align}
We now show how to pick $a',b'$ in order to get the projection onto $\mathbb{W}_{(a,b)}$ with the help of the conjugation formula in Eq.~\eqref{equ:conjugationwithweyl}. For the projection we have:
\begin{align}
&W_{(a,b)}=\mathcal{P}_{(a,b)}(W_{(a,b)})=\\ &\frac{1}{d^2}\sum_{(f,g)\in\Z_d\times\Z_d}\chi_{(a',b')}(f,g)W_{(f,g)}W_{(a,b)} W_{(f,g)}^\dagger=\\&\frac{W_{(a,b)}}{d^2}\sum_{(f,g)\in\Z_d\times\Z_d}\chi_{(a',b')}(f,g)\nu^{gb-fa}.
\end{align}
Thus, the equation is satisfied whenever $\chi_{(a',b')}(f,g)=\nu^{fa-gb}$. It then easily follows from Eq.~\eqref{equ:charactercommutativegroup} that picking $(a',b')=(b,-a)$ ensures that we get the correct projection.
Moreover, as the set $\{d^{-\frac{1}{2}}W_{(a,b)}\}$ is an orthonormal basis, we also have that 
\begin{align}\label{equ:projectionirrepweyl}
\mathcal{P}_{(a,b)}(X)=d^{-1}\tr{W_{(a,b)}^\dagger X}W_{(a,b)}.
\end{align}
It is easy to see that the same results carry over when we consider tensor products of Weyl operators as representations of $\lb \Z_d\times \Z_d\rb^n$ in $\M_{d^n}$ and we have for $(\mathbf{a},\mathbf{b})\in\lb \Z_d\times \Z_d\rb^n $
\begin{align}
&\mathcal{P}_{(\mathbf{a},\mathbf{b})}(X)=d^{-n}\tr{W_{(\mathbf{a},\mathbf{b})}^\dagger X}W_{(\mathbf{a},\mathbf{b})}=\\
&\frac{1}{d^{2n}}\sum_{(\mathbf{f},\mathbf{g})\in\lb \Z_d\times \Z_d\rb^n}\chi_{(\mathbf{b},-\mathbf{a})}(\mathbf{f},\mathbf{g})W_{(f,g)}X W_{(\mathbf{f},\mathbf{g})}^\dagger,  
\end{align}
where
\begin{align}
\chi_{(\mathbf{b},-\mathbf{a})}=\prod\limits_{i=1}^n\chi_{(b_i,-a_i)}.
\end{align}
In particular, if we pick a uniformly random element of $(\mathbf{f},\mathbf{g})\in\lb \Z_d\times \Z_d\rb^n$ and consider the random linear map
\begin{align}
\mathcal{P}'_{(\mathbf{a},\mathbf{b})}(X)=\chi_{(\mathbf{b},-\mathbf{a})}(\mathbf{f},\mathbf{g})W_{(\mathbf{f},\mathbf{g})}X W_{(\mathbf{f},\mathbf{g})}^\dagger, 
\end{align}
where $(\mathbf{f},\mathbf{g})$ is picked uniformly at random, then
\begin{align}
\mathbb{E}\left[ \mathcal{P}'_{(\mathbf{a},\mathbf{b})}\right]=\mathcal{P}_{(\mathbf{a},\mathbf{b})},
\end{align}
which is the crux of character randomized benchmarking~\cite{Helsen_2017}.

\subsection{Weyl-Covariant channels}
We now turn to studying the structure of quantum channels that are \emph{covariant} with respect to the Weyl unitaries in more detail. Recall that a quantum channel $\n:\M_d\to\M_d$ is covariant with respect to a unitary representation $\phi$ of a group $G$, $\phi:g\mapsto U_g$ if for all $g\in G$ we have:
\begin{align}
\mathcal{U}_g\circ \n\circ  \mathcal{U}_{g^{-1}}=\n.
\end{align}
Recall the characterization of the Weyl-diagonal channels:
\begin{prop}\label{prop:characterizationweyl}
	Let $\n:\M_{d^n}\to\M_{d^n}$ be a quantum channel. Then the following are equivalent:
	\begin{enumerate}
		\item $\n$ is covariant with respect to the representation of $\lb \Z_d\times \Z_d\rb^n$ given by the Weyl operators.
		\item $\n$ is Weyl diagonal.
		\item $\n$ is a mixed Weyl channel, that is, there is a probability distribution $p$ on $\lb \Z_d\times \Z_d\rb^n$, such that
		\begin{align}
		\n(X)=\sum\limits_{(\mathbf{a},\mathbf{b})}p(\mathbf{a},\mathbf{b})W_{(\mathbf{a},\mathbf{b})}X W_{(\mathbf{a},\mathbf{b})}^\dagger,
		\end{align}
		where $X\in \M_{d^n}$.
	\end{enumerate}
\end{prop}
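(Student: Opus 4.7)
The plan is to establish a cyclic chain of implications $3 \Rightarrow 1 \Rightarrow 2 \Rightarrow 3$ so that the three characterizations are equivalent.

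For $3 \Rightarrow 1$, I would just compute directly. If $\n = \sum_{(\mathbf{a},\mathbf{b})} p(\mathbf{a},\mathbf{b})\, W_{(\mathbf{a},\mathbf{b})}\cdot W_{(\mathbf{a},\mathbf{b})}^\dagger$, then for any $(\mathbf{f},\mathbf{g})$ the composition $\mathcal{U}_{(\mathbf{f},\mathbf{g})}\circ \n\circ \mathcal{U}_{(\mathbf{f},\mathbf{g})}^{-1}$ rearranges into $\sum p(\mathbf{a},\mathbf{b})\,W_{(\mathbf{f},\mathbf{g})}W_{(\mathbf{a},\mathbf{b})}\cdot W_{(\mathbf{a},\mathbf{b})}^\dagger W_{(\mathbf{f},\mathbf{g})}^\dagger$. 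Applying Eq.~\eqref{equ:conjugationwithweyl} twice pulls out opposite phases that cancel, so the expression reduces to $\n$ and covariance holds.

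For $1 \Rightarrow 2$, I would use the projections $\mathcal{P}_{(\mathbf{a},\mathbf{b})}$ constructed above. Since each $\mathcal{P}_{(\mathbf{a},\mathbf{b})}$ is a linear combination of Weyl conjugations, covariance of $\n$ implies $[\n,\mathcal{P}_{(\mathbf{a},\mathbf{b})}]=0$ for every $(\mathbf{a},\mathbf{b})$. But by Eq.~\eqref{equ:projectionirrepweyl} each $\mathcal{P}_{(\mathbf{a},\mathbf{b})}$ projects onto the one-dimensional subspace $\mathbb{W}_{(\mathbf{a},\mathbf{b})}=\mathrm{span}\{W_{(\mathbf{a},\mathbf{b})}\}$, so $\n$ preserves each such subspace and must act on it as a scalar. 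This is exactly the statement that $\n$ is diagonal in the Weyl basis.

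For $2 \Rightarrow 3$, which I expect to be the main obstacle, I would Fourier-invert. The $d^{2n}$ maps $\mathcal{W}_{(\mathbf{g})}:=W_{\mathbf{g}}\cdot W_{\mathbf{g}}^\dagger$ are all Weyl-diagonal (their eigenvalue on $W_{(\mathbf{a},\mathbf{b})}$ is the character value from Eq.~\eqref{equ:conjugationwithweyl}) and are linearly independent, since the corresponding character matrix is the Fourier matrix of $(\Z_d\times\Z_d)^n$, which is invertible. As the space of Weyl-diagonal superoperators on $\M_{d^n}$ also has dimension $d^{2n}$, the $\{\mathcal{W}_{\mathbf{g}}\}_{\mathbf{g}}$ form a basis, and we may write $\n = \sum_{\mathbf{g}} p(\mathbf{g})\,\mathcal{W}_{\mathbf{g}}$ for a unique collection of complex coefficients $p(\mathbf{g})$, obtained explicitly by inverse Fourier transform on $(\Z_d\times\Z_d)^n$. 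It remains to show $p$ is a probability distribution. For non-negativity, I would pass to the Choi matrix: since the vectorizations $|W_{\mathbf{g}}\rangle\!\rangle$ are pairwise orthogonal, the Choi matrix of $\n$ is $\sum_{\mathbf{g}} p(\mathbf{g})\,|W_{\mathbf{g}}\rangle\!\rangle\langle\!\langle W_{\mathbf{g}}|$, which is a genuine spectral decomposition; complete positivity of $\n$ forces all $p(\mathbf{g})\geq 0$. Normalization $\sum_{\mathbf{g}} p(\mathbf{g})=1$ then drops out of trace preservation, since $\sum_{\mathbf{g}} p(\mathbf{g})\,W_{\mathbf{g}}^\dagger W_{\mathbf{g}} = \bigl(\sum_{\mathbf{g}} p(\mathbf{g})\bigr)\,\mathds{1}$ must equal $\mathds{1}$. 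The main subtlety is recognizing that CP alone does not force a Weyl-diagonal map to have non-negative coefficients unless one uses the orthogonality of the $|W_{\mathbf{g}}\rangle\!\rangle$ to identify $p(\mathbf{g})$ with eigenvalues of the Choi matrix.
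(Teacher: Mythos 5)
Your proof is correct, but it takes a genuinely different and more self-contained route than the paper. The paper disposes of the proposition in two lines: the equivalence of items 2 and 3 is delegated to the literature (Watrous, Chapter 4, on mixed-unitary/Weyl channels), and the equivalence of 1 and 2 is declared to follow ``by a simple direct inspection.'' You instead close the cycle $3\Rightarrow 1\Rightarrow 2\Rightarrow 3$ from scratch: the covariance computation via the phase cancellation in Eq.~\eqref{equ:conjugationwithweyl}, the commutation with the one-dimensional projectors $\mathcal{P}_{(\mathbf{a},\mathbf{b})}$ to get diagonality (which is essentially what the paper means by ``direct inspection,'' and fits nicely with the machinery already built in the supplement), and, for the only nontrivial step $2\Rightarrow 3$, a dimension count showing the conjugation maps $\mathcal{W}_{\mathbf{g}}$ span the Weyl-diagonal superoperators (invertibility of the character/Fourier matrix of $(\Z_d\times\Z_d)^{n}$), followed by the Choi-matrix argument: since the vectorized Weyl operators are pairwise orthogonal, the expansion coefficients $p(\mathbf{g})$ are, up to the factor $d^{n}$, exactly the eigenvalues of the Choi matrix, so complete positivity forces $p(\mathbf{g})\geq 0$ and trace preservation forces $\sum_{\mathbf{g}}p(\mathbf{g})=1$. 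This is essentially the textbook proof the paper cites, so nothing is gained in generality, but your version buys a self-contained argument that reuses Eqs.~\eqref{equ:conjugationwithweyl} and~\eqref{equ:projectionirrepweyl} and makes explicit the one genuinely delicate point (that positivity of the coefficients comes from identifying them with Choi eigenvalues via orthogonality, not from complete positivity alone), whereas the paper's version buys brevity at the cost of an external reference.
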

\begin{proof}
	The equivalence between $2$ and $3$ is proved e.g. in~\cite[Chapter 4]{Watrous2018}. 
	The equivalence between $1$ and $2$ follows by a simple direct inspection.
\end{proof}
From this, we get:

\begin{prop}\label{prop:structuretwirling}
	Let $\n:\M_{d^n}\to\M_{d^n}$ be a quantum channel and let $\{W_{(\textbf{a},\textbf{b})}\}$ be the Heisenberg-Weyl matrices, with $(\textbf{a},\textbf{b})\in\lb \Z_d\times\Z_d\rb^n$. Moreover, define $\mu(\textbf{a},\textbf{b})=d^{-n}\operatorname{tr}\left(W_{(\textbf{a},\textbf{b})}^{\dagger}\n(W_{(\textbf{a},\textbf{b}}))\right)$. Then, for $X\in\M_d$:
	\begin{align}
	\frac{1}{d^{2n}}\sum\limits_{\lb\textbf{a},\textbf{b}\rb\in\lb \Z_d\times\Z_d\rb^n} W_{\lb\textbf{a},\textbf{b}\rb}^\dagger \n(W_{\lb\textbf{a},\textbf{b}\rb}X W_{\lb\textbf{a},\textbf{b}\rb}^\dagger)W_{\lb\textbf{a},\textbf{b}\rb}=\widetilde{\n}(X),
	\end{align}\label{equ:twirlingweyl}
	where
	\begin{align}
	\widetilde{\n}(X)=d^{-n}\sum\limits_{{\lb\textbf{a},\textbf{b}\rb}\in \lb \Z_d\times\Z_d\rb^n}\mu(\textbf{a},\textbf{b})\tr{W_{\lb\textbf{a},\textbf{b}\rb}^\dagger X}W_{\lb\textbf{a},\textbf{b}\rb} .  
	\end{align}
\end{prop}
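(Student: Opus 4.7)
The plan is to verify the identity on an arbitrary Weyl basis element and then extend by linearity. Since $\tr{W_{(\textbf{a},\textbf{b})}^\dagger W_{(\textbf{c},\textbf{d})}} = d^n \delta_{(\textbf{a},\textbf{b}),(\textbf{c},\textbf{d})}$, the right-hand side of~\eqref{equ:twirlingweyl} evaluated at $X=W_{(\textbf{c},\textbf{d})}$ collapses to $\mu(\textbf{c},\textbf{d}) W_{(\textbf{c},\textbf{d})}$; the task is therefore to show that the left-hand side produces the same output when fed $W_{(\textbf{c},\textbf{d})}$.

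First I would pull the inner conjugation outside $\n$: by the $n$-site version of~\eqref{equ:conjugationwithweyl}, $W_{(\textbf{a},\textbf{b})} W_{(\textbf{c},\textbf{d})} W_{(\textbf{a},\textbf{b})}^\dagger$ equals a scalar phase (the character $\chi_{(\textbf{b},-\textbf{a})}(\textbf{c},\textbf{d})$) times $W_{(\textbf{c},\textbf{d})}$, so linearity of $\n$ turns the summand into that same phase times $W_{(\textbf{a},\textbf{b})}^\dagger \n(W_{(\textbf{c},\textbf{d})}) W_{(\textbf{a},\textbf{b})}$. Next I would expand $\n(W_{(\textbf{c},\textbf{d})}) = d^{-n}\sum_{(\textbf{e},\textbf{f})} \tr{W_{(\textbf{e},\textbf{f})}^\dagger \n(W_{(\textbf{c},\textbf{d})})} W_{(\textbf{e},\textbf{f})}$ in the Weyl basis and apply the same commutation rule to evaluate $W_{(\textbf{a},\textbf{b})}^\dagger W_{(\textbf{e},\textbf{f})} W_{(\textbf{a},\textbf{b})}$, picking up a second scalar factor that is the complex conjugate of the first, now evaluated at $(\textbf{e},\textbf{f})$.

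Averaging the resulting double sum over $(\textbf{a},\textbf{b}) \in (\Z_d \times \Z_d)^n$ reduces to character orthogonality for this finite abelian group: summing the product of the two phases yields $d^{2n} \delta_{(\textbf{c},\textbf{d}),(\textbf{e},\textbf{f})}$, which kills every summand except $(\textbf{e},\textbf{f})=(\textbf{c},\textbf{d})$. The prefactor $1/d^{2n}$ cancels the $d^{2n}$, and the single surviving term is $d^{-n}\tr{W_{(\textbf{c},\textbf{d})}^\dagger \n(W_{(\textbf{c},\textbf{d})})} W_{(\textbf{c},\textbf{d})} = \mu(\textbf{c},\textbf{d}) W_{(\textbf{c},\textbf{d})}$, matching the right-hand side and completing the verification on basis elements, hence on all of $\M_{d^n}$.

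A more conceptual alternative uses Proposition~\ref{prop:characterizationweyl}: the map defined by the left-hand side is by construction covariant under conjugation by any Weyl operator, since such a conjugation merely relabels the averaging variable; that proposition then forces it to be Weyl diagonal, so it has precisely the form displayed on the right, and its diagonal entries can be extracted by applying it to $W_{(\textbf{c},\textbf{d})}$ and using the Hilbert--Schmidt projection, returning $\mu(\textbf{c},\textbf{d})$ by definition. I expect the main hazard of the direct approach to be bookkeeping: the two phase factors from~\eqref{equ:conjugationwithweyl} must combine with the correct orientation to produce the orthogonality relation rather than some unwanted residual, a consistency check that requires care with the projective sign conventions but is otherwise routine.
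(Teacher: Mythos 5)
Your direct verification is correct, and it takes a more computational route than the paper. The paper's proof works at the level of superoperators: it notes that the twirl is a linear projection whose image consists of Weyl-covariant maps, expands $\n$ in the basis of maps $X\mapsto d^{-n}\tr{W_{(\textbf{a}_1,\textbf{b}_1)}^\dagger X}W_{(\textbf{a}_2,\textbf{b}_2)}$, observes via Eq.~\eqref{equ:conjugationwithweyl} that the diagonal maps are fixed, and invokes Proposition~\ref{prop:characterizationweyl} (covariant $=$ Weyl-diagonal) to conclude that the off-diagonal maps are annihilated. Your primary argument instead evaluates both sides on a Weyl basis element, pulls out the two conjugation phases, and kills the off-diagonal contributions by character orthogonality of $(\Z_d\times\Z_d)^n$; this is essentially the ``direct inspection'' the paper alludes to in passing, spelled out in full. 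It is more elementary and self-contained (no appeal to Proposition~\ref{prop:characterizationweyl} or to the projection property), at the cost of the phase bookkeeping you correctly flag — your orientation of the phases (character at $(\textbf{c},\textbf{d})$ times its conjugate at $(\textbf{e},\textbf{f})$) is right, and the projective phases are harmless since conjugation is insensitive to them. Your ``conceptual alternative'' is precisely the paper's proof; the only gloss there is that the diagonal entry of the twirled channel equals $\mu(\textbf{c},\textbf{d})$ not quite ``by definition'' but after the one-line cancellation $\chi\bar\chi=1$ inside the trace (each term of the average contributes the same value), a step your direct computation already contains explicitly.
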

\begin{proof}
	The map $\n\mapsto \widetilde{\n}$ defined in~\eqref{equ:twirlingweyl} is called the twirling of the quantum channel with respect to to the Weyl group~\cite{Siudzinska2018}. This is a linear projection map. Note that $\widetilde{\n}$ is a covariant quantum channel with respect to the Weyl group.
	As the (normalized) Weyl operators form an orthonormal basis for $\M_{d^n}$, we may expand any quantum channel $\n$ as a linear combination of maps of the form
	\begin{align}
	X\mapsto d^{-n}\tr{W_{\lb\textbf{a}_1,\textbf{b}_1\rb}^\dagger X} W_{\lb\textbf{a}_2,\textbf{b}_2\rb},
	\end{align}
	i.e. they also form a basis for the set of linear maps $\M_{d^n}\to\M_{d^n}$. By the linearity of the twirling operation, it suffices to analyse the effect of twirling on this basis.
	From~\eqref{equ:conjugationwithweyl} we see that the maps are invariant under twirling in the case $\lb\textbf{a}_1,\textbf{b}_1\rb=\lb\textbf{a}_2,\textbf{b}_2\rb$. Moreover, by Prop.~\ref{prop:characterizationweyl} that these also span the space of Weyl-covariant maps. Thus, it follows that the maps with $\lb\textbf{a}_1,\textbf{b}_1\rb\not=\lb\textbf{a}_2,\textbf{b}_2\rb$ are mapped to $0$, as twirling is a projection. This can also be easily seen by direct inspection. Expanding the quantum channel with respect to the Weyl basis concludes the proof.
\end{proof}
\subsection{Expectation values of the Weyl randomized benchmarking protocol}\label{extraclifford}
With the help of the following theorem, we can relate the measurement statistics of the protocol to the diagonals of the quantum channel in the Weyl basis.
\begin{thm}\label{thm:correctstatistics}
	Let $U\in\M_n$ be a unitary and $\n:\M_{d^n}\to\M_{d^n}$ be a quantum channel that encodes the noise after implementing $U$. Then, for a given $\lb\textbf{a},\textbf{b}\rb\in\lb\Z_d\times\Z_d\rb^n$ and sequence length $m$, initial state $\rho$ and POVM $E$ of the WRB protocol the output $X$ satisfies:
	\begin{align}
	\mathbb{E}(X)=d^{-n}\mu(\textbf{a},\textbf{b})^m\tr{W_{(\textbf{a},\textbf{b})}^\dagger \rho} \tr{E W_{\lb\textbf{a},\textbf{b}\rb}},
	\end{align}
	where 
	\begin{align}\label{equ:definitionmu}
	\mu\lb\textbf{a},\textbf{b}\rb=d^{-n}\tr{W_{(\textbf{a},\textbf{b})}^\dagger( \n\circ\mathcal{U})(W_{(\textbf{a},\textbf{b})})}.
	\end{align}
\end{thm}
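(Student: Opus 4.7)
The plan is to compute $\mathbb{E}(X)$ by conditioning on the measurement outcome and then averaging over all random Weyl choices. Since $y=\chi_{(\textbf{a},\textbf{b})}(\textbf{a}_0,\textbf{b}_0)$ when the POVM outcome is $E$ and $0$ otherwise, the conditional expectation given a fixed realization of the sequence is $\chi_{(\textbf{a},\textbf{b})}(\textbf{a}_0,\textbf{b}_0)\,\tr{E\,\sigma_{\mathrm{seq}}}$, with $\sigma_{\mathrm{seq}}$ the final state. By linearity I can pull the remaining expectation inside the trace, so the task reduces to computing the average of $\chi_{(\textbf{a},\textbf{b})}(\textbf{a}_0,\textbf{b}_0)\,\sigma_{\mathrm{seq}}$ with respect to the uniformly random indices $(\textbf{a}_0,\textbf{b}_0),(\textbf{a}_1,\textbf{b}_1),\ldots,(\textbf{a}_m,\textbf{b}_m)$.

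First I would reorganize the sequence by inserting identities $W_k W_k^\dagger$ and absorbing the final inversion $\bar W^\dagger$, so that after a change of variables each of the $m$ blocks containing the noisy unitary becomes a conjugated copy $W_{(\textbf{a}_k,\textbf{b}_k)}(\n\circ\mathcal{U})(\cdot)W_{(\textbf{a}_k,\textbf{b}_k)}^{\dagger}$ with a fresh independent uniform Weyl $W_{(\textbf{a}_k,\textbf{b}_k)}$, exactly as in character randomized benchmarking. Averaging slot $k$ over $(\textbf{a}_k,\textbf{b}_k)$ then produces the Weyl twirl of $\n\circ\mathcal{U}$, which by Proposition~\ref{prop:structuretwirling} is the Weyl-diagonal channel $\widetilde{\n\circ\mathcal{U}}$ acting on each basis element $W_{(\textbf{a}',\textbf{b}')}$ by the scalar $\mu(\textbf{a}',\textbf{b}')$ defined in \eqref{equ:definitionmu}.

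Next I would handle the outer Weyl $W_{(\textbf{a}_0,\textbf{b}_0)}$: combined with the character pre-factor, its uniform average implements the projection $\mathcal{P}_{(\textbf{a},\textbf{b})}$ onto the Weyl subspace spanned by $W_{(\textbf{a},\textbf{b})}$, by identity \eqref{equ:projectionirrepweyl}. Thus the input fed into the chain of $m$ twirled channels is $d^{-n}\tr{W_{(\textbf{a},\textbf{b})}^\dagger\rho}\,W_{(\textbf{a},\textbf{b})}$. Since $W_{(\textbf{a},\textbf{b})}$ is an eigenvector of $\widetilde{\n\circ\mathcal{U}}$ with eigenvalue $\mu(\textbf{a},\textbf{b})$, the $m$ twirls contribute a factor of $\mu(\textbf{a},\textbf{b})^m$, and tracing against $E$ yields exactly $d^{-n}\mu(\textbf{a},\textbf{b})^m\tr{W_{(\textbf{a},\textbf{b})}^\dagger\rho}\tr{E\,W_{(\textbf{a},\textbf{b})}}$, matching the statement.

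The main obstacle is the bookkeeping in the first step: I must verify that inserting the $W_k W_k^\dagger$ identities and absorbing $\bar W^\dagger$ indeed produces $m$ statistically independent uniform Weyl conjugations around the $m$ copies of $\n\circ\mathcal{U}$, and that the outer index $(\textbf{a}_0,\textbf{b}_0)$ remains independent of those internal twirls so that the two averagings decouple cleanly. Once this reorganization is established the proof collapses to the two standard representation-theoretic averages above---the projection formula \eqref{equ:projectionirrepweyl} and the twirl result of Proposition~\ref{prop:structuretwirling}---together with the observation that Weyl operators are eigenvectors of any Weyl-diagonal channel.
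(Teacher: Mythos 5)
Your proposal is correct and takes essentially the same route as the paper: the sequence average yields the $m$-fold composition of the Weyl twirl $\widetilde{\n\circ\mathcal{U}}$ (Proposition~\ref{prop:structuretwirling}), the character average over $(\textbf{a}_0,\textbf{b}_0)$ implements the projection of Eq.~\eqref{equ:projectionirrepweyl}, and $W_{(\textbf{a},\textbf{b})}$ being an eigenvector of the twirled channel produces the factor $\mu(\textbf{a},\textbf{b})^m$. The only difference is the step you flag as the main obstacle: the paper disposes of it by citing Corollary~14 of~\cite{Stilck_2018} for the fact that the expected channel at every step is the twirl, whereas your insert-$W_kW_k^\dagger$/change-of-variables sketch is the standard (and valid) direct argument, since the partial products of i.i.d.\ uniform Weyls are again i.i.d.\ uniform and independent of $(\textbf{a}_0,\textbf{b}_0)$, so the two averages decouple exactly as you claim.
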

\begin{proof}
	It follows from~\cite[Corollary 14]{Stilck_2018} that the expected channel at every step given by $\mathcal{S}=\widetilde{\n\circ\mathcal{U}}$, where this is the quantum channel produced by twirling the channel $\n\circ\mathcal{U}$ with respect to to the Weyl group. Using Prop.~\ref{prop:structuretwirling} the channel we implement at each step of the randomized benchmarking protocol is given by:
	\begin{align}
	\mathcal{S}(X)=d^{-n}\sum\limits_{{\lb\textbf{a},\textbf{b}\rb}\in \lb \Z_d\times\Z_d\rb^n}\mu(\textbf{a},\textbf{b})\tr{W_{\lb\textbf{a},\textbf{b}\rb}^\dagger X}W_{\lb\textbf{a},\textbf{b}\rb}.
	\end{align}
	This channel is diagonal, and the $m$-fold application of $S$ has the form
	\begin{align}
	\mathcal{S}^m(X)=d^{-n}\sum\limits_{{\lb\textbf{a},\textbf{b}\rb}\in \lb \Z_d\times\Z_d\rb^n}\mu(\textbf{a},\textbf{b})^m \tr{W_{\lb\textbf{a},\textbf{b}\rb}^\dagger X}W_{\lb\textbf{a},\textbf{b}\rb}.
	\end{align}
	If the initial random gate is $W_{\lb\textbf{a}_0,\textbf{b}_0\rb}$, and the probability of observing the outcome $E$ is given by
	\begin{align}
	p\lb\textbf{a}_0,\textbf{b}_0\rb=\tr{E \mathcal{S}^m(W_{\lb\textbf{a}_0,\textbf{b}_0\rb}\rho W_{\lb\textbf{a}_0,\textbf{b}_0\rb}^\dagger)}.
	\end{align}
	Thus, the expected value of $X$ is given by
	\begin{align}\label{equ:expectationvalue}
	&\mathbb{E}(X)=\nonumber\\
	&d^{-2n}\sum\limits_{\lb\textbf{a}_0,\textbf{b}_0\rb\in \lb \Z_d\times\Z_d\rb^n}\chi_{\lb-\textbf{b},\textbf{a}\rb}\lb\textbf{a}_0,\textbf{b}_0\rb p\lb\textbf{a}_0,\textbf{b}_0\rb=\nonumber\\
	&\tr{E \mathcal{S}^m(\mathcal{P}_{\lb\textbf{a},\textbf{b}\rb}(\rho))},
	\end{align}
	where we used Eq.~\eqref{equ:projectionirrep}. Now, from Eq.~\eqref{equ:projectionirrepweyl} we know that
	\begin{align}
	\mathcal{P}_{\lb\textbf{a},\textbf{b}\rb}(\rho)=d^{-n}\tr{W_{\lb\textbf{a},\textbf{b}\rb}^\dagger \rho} W_{\lb\textbf{a},\textbf{b}\rb},
	\end{align}
	Inserting this into~\eqref{equ:expectationvalue} proves the claim.
\end{proof}
Thus, using the character randomized benchmarking trick~\cite{1806.02048}, we may isolate each one of the diagonal elements and do the exponential fitting of one element at a time. This leads to increased numerical stability and allows for a clean analysis of the sample complexity of the protocol.

Finally, let us now show how assuming access to additional noiseless Clifford gates as a resource we can also access off-diagonal elements of the channel through a randomized benchmarking experiment:
\begin{cor}\label{cor:randimizedbenchclifford}
	Let $U\in\M_{d^n}$ be a unitary and $\n:\M_{d^n}\to\M_{d^n}$ be a quantum channel that describes the noise after implementing $U$. 
	Suppose that after implementing $U$ we implement a noiseless Clifford gate $C$ and that for a given $\lb\textbf{a}_1,\textbf{b}_1\rb\in\lb\Z_d\times\Z_d\rb^n$ we have
	\begin{align}\label{equ:actionclifford}
	C^\dagger W_{\lb\textbf{a}_1,\textbf{b}_1\rb}^\dagger C=e^{i\phi }W_{\lb\textbf{a}_2,\textbf{b}_2\rb}.
	\end{align}
	for some $\phi\in\R$ and $\lb\textbf{a}_2,\textbf{b}_2\rb\in\lb\Z_d\times\Z_d\rb^n$.
	Then, for a given sequence length $m$, initial state $\rho$ and POVM $E$ of the WRB protocol the output $X$ satisfies:
	\begin{equation}\label{equ:statistics_diagonal}
	\begin{split}
	&\mathbb{E}(X)=\\&d^{-n}\mu(\lb\textbf{a}_1,\textbf{b}_1\rb,\lb\textbf{a}_2,\textbf{b}_2\rb)^m\tr{W_{\lb\textbf{a}_1,\textbf{b}_1\rb}^\dagger \rho} \tr{E W_{\lb\textbf{a}_1,\textbf{b}_1\rb}},
	\end{split}
	\end{equation}
	where 
	\begin{equation}
	\begin{split}
	\mu(\lb\textbf{a}_1,\textbf{b}_1\rb,&\lb\textbf{a}_2,\textbf{b}_2\rb)=\\&d^{-n}e^{i\phi} \tr{W_{\lb\textbf{a}_2,\textbf{b}_2\rb}^{\dagger}\left[ \n\circ\mathcal{U}\right](W_{\lb\textbf{a}_1,\textbf{b}_1\rb})}.
	\end{split}
	\end{equation}
\end{cor}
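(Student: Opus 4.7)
The plan is to reduce the statement directly to Theorem~\ref{thm:correctstatistics} by absorbing the noiseless Clifford $C$ into the gate being benchmarked. Inserting $C$ after each application of the noisy unitary changes the per-step map from $\mathcal{N}\circ\mathcal{U}$ to $\mathcal{C}\circ\mathcal{N}\circ\mathcal{U}$, where $\mathcal{C}(X)=CXC^\dagger$. Because $C$ is assumed noiseless and unitary, the composite $\mathcal{C}\circ\mathcal{N}\circ\mathcal{U}$ may be viewed as a noisy implementation of the single unitary $CU$, with the noise still entirely carried by $\mathcal{N}$, so Theorem~\ref{thm:correctstatistics} applies verbatim to this composite map with no change to the protocol structure.

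Applying that theorem to $\mathcal{C}\circ\mathcal{N}\circ\mathcal{U}$ at the target index $(\mathbf{a}_1,\mathbf{b}_1)$ immediately produces an expression of the form of Eq.~\eqref{equ:statistics_diagonal} in which the per-step decay constant is
\begin{align*}
\tilde{\mu}(\mathbf{a}_1,\mathbf{b}_1)=d^{-n}\tr{W_{(\mathbf{a}_1,\mathbf{b}_1)}^\dagger(\mathcal{C}\circ\mathcal{N}\circ\mathcal{U})(W_{(\mathbf{a}_1,\mathbf{b}_1)})}.
\end{align*}
The next step is to expand $\mathcal{C}$ and use cyclicity of the trace to rewrite this as
\begin{align*}
\tilde{\mu}(\mathbf{a}_1,\mathbf{b}_1)=d^{-n}\tr{\lb C^\dagger W_{(\mathbf{a}_1,\mathbf{b}_1)}^\dagger C\rb (\mathcal{N}\circ\mathcal{U})(W_{(\mathbf{a}_1,\mathbf{b}_1)})},
\end{align*}
after which the assumption in Eq.~\eqref{equ:actionclifford} substitutes $e^{i\phi}W_{(\mathbf{a}_2,\mathbf{b}_2)}$ for the conjugated operator. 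This converts the diagonal element of $\mathcal{C}\circ\mathcal{N}\circ\mathcal{U}$ at position $(\mathbf{a}_1,\mathbf{b}_1)$ into (a phase times) the off-diagonal matrix element of $\mathcal{N}\circ\mathcal{U}$ linking $(\mathbf{a}_1,\mathbf{b}_1)$ to $(\mathbf{a}_2,\mathbf{b}_2)$, which is precisely the quantity $\mu((\mathbf{a}_1,\mathbf{b}_1),(\mathbf{a}_2,\mathbf{b}_2))$ appearing in the statement.

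There is essentially no difficult step: the argument is a one-line trace manipulation on top of Theorem~\ref{thm:correctstatistics}, with the rest of the corollary (the prefactors $\tr{W_{(\mathbf{a}_1,\mathbf{b}_1)}^\dagger\rho}\tr{EW_{(\mathbf{a}_1,\mathbf{b}_1)}}$ and the $m$-th power structure) following unchanged. The only bookkeeping to be careful about is the phase $e^{i\phi}$, which multiplies the entire diagonal entry and is therefore raised to the $m$-th power alongside the magnitude; and the fact that such a Clifford $C$ always exists for non-identity Weyl operators, which is ensured because Clifford operators normalize the Weyl group and act transitively on same-order non-identity elements, as noted in the main text.
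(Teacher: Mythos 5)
Your proposal is correct and follows essentially the same route as the paper's own proof: apply Theorem~\ref{thm:correctstatistics} to the composite map $\mathcal{C}\circ\n\circ\mathcal{U}$, then use cyclicity of the trace together with Eq.~\eqref{equ:actionclifford} to turn the diagonal entry of $\mathcal{C}\circ\n\circ\mathcal{U}$ into the desired off-diagonal entry of $\n\circ\mathcal{U}$ up to the phase $e^{i\phi}$. Your extra remarks on the phase entering the $m$-th power and on the existence of the Clifford are consistent with the paper and add nothing that changes the argument.
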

\begin{proof}
	Using Theorem~\ref{thm:correctstatistics} that the expectation value of the protocol is given by:
	\begin{align}
	\mathbb{E}(X)=d^{-n}\mu(\textbf{a},\textbf{b})^m\tr{W_{\lb\textbf{a},\textbf{b}\rb}^\dagger \rho} \tr{E W_{\lb\textbf{a},\textbf{b}\rb}},
	\end{align}
	where 
	\begin{equation}
	\begin{split}
	\mu(\textbf{a},\textbf{b})&=d^{-n}\tr{W_{(\textbf{a},\textbf{b})}^\dagger\left[ \mathcal{C}\circ \n\circ\mathcal{U}\right](W_{(\textbf{a},\textbf{b})})}=\\
	&d^{-n}\tr{C^\dagger W_{(\textbf{a},\textbf{b})}^\dagger C\left[  \n\circ\mathcal{U}\right](W_{(\textbf{a},\textbf{b})})}.
	\end{split}
	\end{equation}
	Inserting Eq.~\eqref{equ:actionclifford} into the equation above yields the claim.
\end{proof}
Thus, having access to noiseless Clifford gates, it is possible to access any off-diagonal entry of the channel through Weyl randomized benchmarking experiments and, in principle, do complete tomography of the channel $\n\circ\mathcal{U}$.

The above results came with the caveat that  we can implement Weyl unitaries noiselessly. We will relax this assumption and show that the protocol still gives us valuable information as long as we assume that the application of each Weyl operator is followed by the same Weyl-diagonal channel $\mathcal{T}$.
\begin{lem}
	Let $U\in\M_{d^n}$ be a unitary and $\n:\M_{d^n}\to\M_{d^n}$ be a quantum channel that describes the noise after implementing $U$. Suppose that we run RWB where all unitary gates corresponding to Weyl operators are followed by the same Weyl diagonal channel $\mathcal{T}:\M_{d^n}\to\M_{d^n}$. That is, instead of implementing $\mathcal{W}_{(\textbf{a},\textbf{b})}$ we implement $\mathcal{T}\circ\mathcal{W}_{(\textbf{a},\textbf{b})}$. Then (using the same setting and notation as in Theorem~\ref{thm:correctstatistics}), we have:
	\begin{equation}
	\begin{split}
	\mathbb{E}(X)=d^{-n}\mu(\textbf{a},\textbf{b})^m&\mu_{W}(\textbf{a},\textbf{b})^{2m}\times\\ &\tr{W_{\lb\textbf{a},\textbf{b}\rb}^\dagger \rho} \tr{E W_{\lb\textbf{a},\textbf{b}\rb}}
	\end{split}
	\end{equation}
	with
	\begin{align}
	\mu_{W}(\textbf{a},\textbf{b})=d^{-n}\tr{W_{\textbf{a},\textbf{b}}^\dagger \mathcal{T}(W_{\textbf{a},\textbf{b}})}.
	\end{align}
\end{lem}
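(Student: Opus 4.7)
The plan is to follow the strategy of the proof of Theorem~\ref{thm:correctstatistics}, now carefully tracking the additional Weyl-diagonal noise $\mathcal{T}$ inserted after every Weyl gate. The key structural input is that, by Proposition~\ref{prop:characterizationweyl} together with Eq.~\eqref{equ:projectionirrepweyl}, the Weyl operators are eigenvectors of any Weyl-diagonal channel: $\mathcal{T}\lb W_{(\textbf{c},\textbf{d})}\rb=\mu_W(\textbf{c},\textbf{d})\,W_{(\textbf{c},\textbf{d})}$. Equivalently, $\mathcal{T}\circ \mathcal{P}_{(\textbf{c},\textbf{d})}=\mu_W(\textbf{c},\textbf{d})\,\mathcal{P}_{(\textbf{c},\textbf{d})}$, so that $\mathcal{T}$ acts as a scalar on each irrep block of the Weyl representation and commutes through the character-twirl projections used in the analysis of Theorem~\ref{thm:correctstatistics}.

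First, I would redo the per-step twirling calculation underlying Proposition~\ref{prop:structuretwirling} with the noisy Weyls $\mathcal{T}\circ\mathcal{W}_i$ in place of $\mathcal{W}_i$, both in the forward sequence $\bar W$ and in the inverse $\bar W^\dagger$. Expanding $\mathcal{N}\circ\mathcal{U}(W_{(\textbf{c},\textbf{d})})$ in the Weyl basis and applying $\mathcal{T}$ from either side yields, for each pair of noisy Weyls straddling a block, two extra factors of $\mu_W$ evaluated at the Weyl index surviving the average. The uniform average over $(\textbf{a}_i,\textbf{b}_i)\in\lb\Z_d\rb^{2n}$ then kills every off-diagonal contribution by the orthogonality of characters of $\lb\Z_d\rb^{2n}$ used in Eq.~\eqref{equ:projectionirrep}, and the surviving diagonal term acquires the eigenvalue $\mu(\textbf{a},\textbf{b})\,\mu_W(\textbf{a},\textbf{b})^2$ in place of the bare $\mu(\textbf{a},\textbf{b})$ from the noiseless-Weyl case.

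Iterating this across all $m$ blocks produces an $m$-fold composition acting as multiplication by $\mu(\textbf{a},\textbf{b})^m\mu_W(\textbf{a},\textbf{b})^{2m}$ on $W_{(\textbf{a},\textbf{b})}$. The character-weighted averaging over the initial random Weyl $\mathcal{W}_0$ still produces the irrep projector $\mathcal{P}_{(\textbf{a},\textbf{b})}$ applied to $\rho$, exactly as in the step leading to Eq.~\eqref{equ:expectationvalue}, so the state entering the twirled chain is $d^{-n}\tr{W_{(\textbf{a},\textbf{b})}^\dagger\rho}\,W_{(\textbf{a},\textbf{b})}$; any extra $\mathcal{T}$ attached to $\mathcal{W}_0$ contributes a scalar $\mu_W(\textbf{a},\textbf{b})$ that the bookkeeping absorbs into the paired noisy-Weyl factors counted above. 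Tracing the resulting state against $E$ then yields the claimed identity for $\mathbb{E}(X)$.

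The main obstacle I anticipate is the careful pairing of each forward random Weyl $\mathcal{W}_i$ with its counterpart in $\bar W^\dagger$, so that the two accompanying copies of $\mathcal{T}$ end up straddling $\mathcal{N}\circ\mathcal{U}$ in a way that they can be simultaneously diagonalized by the same index $(\textbf{a},\textbf{b})$ that survives the character projection, and in particular verifying that this pairing produces exactly $2m$ and not $2m\pm 1$ copies of $\mu_W(\textbf{a},\textbf{b})$. Once this pairing is set up, the Weyl-diagonality of $\mathcal{T}$ reduces every subsequent step to scalar multiplication, and the remainder of the argument is a direct extension of the proof of Theorem~\ref{thm:correctstatistics}.
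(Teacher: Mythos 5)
Your proposal is correct and takes essentially the same approach as the paper: both rest on the Weyl-diagonality of $\mathcal{T}$ (equivalently its covariance, $\mathcal{T}\circ\mathcal{W}_{(\mathbf{a},\mathbf{b})}=\mathcal{W}_{(\mathbf{a},\mathbf{b})}\circ\mathcal{T}$) to reduce the noisy-Weyl protocol to the statistics of Theorem~\ref{thm:correctstatistics} for the dressed channel $\mathcal{T}\circ\n\circ\mathcal{U}\circ\mathcal{T}$, whose diagonal entry factorizes as $\mu_W(\mathbf{a},\mathbf{b})^2\mu(\mathbf{a},\mathbf{b})$. The paper just commutes each $\mathcal{T}$ through the adjacent Weyl gate and then invokes Theorem~\ref{thm:correctstatistics} as a black box, which is the streamlined version of the per-step twirl bookkeeping you carry out by hand.
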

\begin{proof}
	Under the above assumptions, when we aim to implement the gate sequence $W_{\lb\textbf{a}_2,\textbf{b}_2\rb}\circ U,W_{\lb\textbf{a}_1\circ \textbf{b}_1\rb}$, we actually implement the noisy sequence of channels given by
	\begin{align}\label{equ:sequencenoisyweyls}
	\mathcal{T}\circ \mathcal{W}_{\lb\textbf{a}_2,\textbf{b}_2\rb}\circ \n \circ \mathcal{U}\circ \mathcal{T}\circ \mathcal{W}_{\lb\textbf{a}_1,\textbf{b}_1\rb}.
	\end{align}
	We know that Weyl diagonal channels are covariant with respect to the Weyl group:
	\begin{align}
	\mathcal{T}\circ\mathcal{W}_{(\textbf{a},\textbf{b})}=\mathcal{W}_{(\textbf{a},\textbf{b})}\circ \mathcal{T}.
	\end{align}
	Using this property in Eq.~\eqref{equ:sequencenoisyweyls}, we see that:
	\begin{equation}
	\begin{split}
	&\mathcal{T}\circ \mathcal{W}_{\lb\textbf{a}_2,\textbf{b}_2\rb}\circ \n\circ \mathcal{U}\circ \mathcal{T}\circ \mathcal{W}_{\lb\textbf{a}_1,\textbf{b}_1\rb}=\\&\mathcal{W}_{\lb\textbf{a}_2,\textbf{b}_2\rb}\circ \mathcal{T}\circ \n\circ \mathcal{U}\circ \mathcal{T}\circ \mathcal{W}_{\lb\textbf{a}_1,\textbf{b}_1\rb}.
	\end{split}
	\end{equation}
	Performing the protocol with the noisy Weyl operators gives rise to the same statistics as before, but with the channel $\mathcal{T}\circ \n\circ \mathcal{U}\circ \mathcal{T}$ instead of $\n\circ \mathcal{U}$ before. More precisely:
	\begin{equation}
	\begin{split}
	&\mathbb{E}(X)=d^{-n}\tr{W_{(\textbf{a},\textbf{b})}^\dagger \rho} \tr{E W_{\lb\textbf{a},\textbf{b}\rb}}\times \\
	&d^{-n}\tr{W_{(\textbf{a},\textbf{b})}^\dagger \left[\mathcal{T}\circ \n\circ \mathcal{U}\circ \mathcal{T}\right](W_{(\textbf{a},\textbf{b})})}^m.
	\end{split}
	\end{equation}
	Using the property that $\mathcal{T}$ is Weyl diagonal we get:
	\begin{align}
	&\tr{W_{(\textbf{a},\textbf{b})}^\dagger\left[\mathcal{T}\circ \n\circ \mathcal{U}\circ \mathcal{T}\right](W_{(\textbf{a},\textbf{b})})}=\\&\mu_{W}(\textbf{a},\textbf{b})^2\mu(\textbf{a},\textbf{b}).\nonumber
	\end{align}
\end{proof}
As long as the noise affecting Weyl operators is known and uniform, then we can also use the same protocol and extract the information from the diagonals. We note that the same conclusion also holds for the protocol with an extra Clifford gate: the expectation in the statement of Corollary~\ref{cor:randimizedbenchclifford} is replaced with
\begin{equation}
\begin{split}
\mathbb{E}(X)&=\mu_W(\textbf{a}_1,\textbf{b}_1)^{2m}\mu(\lb\textbf{a}_1,\textbf{b}_1\rb,\lb\textbf{a}_2,\textbf{b}_2\rb)^m\times \\&d^{-n}\tr{W_{\lb\textbf{a}_1,\textbf{b}_1\rb}^\dagger \rho} \tr{E W_{\lb\textbf{a}_1,\textbf{b}_1\rb}}. 
\end{split}
\end{equation}

\subsection{Example choices of initial state and POVM}\label{sec:initial_choices}
Let us now discuss how to pick the initial states $\rho$ and the POVM $E$ for the randomized benchmarking protocol. We want to fit the expression in Eq.~\eqref{equ:statistics_diagonal} to an exponential curve, and thus it may be advantageous to ensure that the term $d^{-n}\tr{W_{\lb\textbf{a}_1,\textbf{b}_1\rb}^\dagger \rho} \tr{E W_{\lb\textbf{a}_1,\textbf{b}_1\rb}}$ is of constant order. Indeed, if this this term is too small, then estimating this expectation value might require a prohibitive number of samples. Therefore, the constant order approximation is the best one could hope for. Indeed, it follows from a H\"older inequality:
\begin{align*}
    &|\tr{W_{\lb\textbf{a}_1,\textbf{b}_1\rb}^\dagger \rho}|\leq \| W_{\lb\textbf{a}_1,\textbf{b}_1\rb}\|_{\infty}\|\rho\|_1=1\\
    &|\tr{E W_{\lb\textbf{a}_1,\textbf{b}_1\rb}}|\leq\|E\|_\infty\|W_{\lb\textbf{a}_1,\textbf{b}_1\rb}\|_{1}\leq  d^n.
\end{align*}
A canonical choice for $\rho$ is an eigenstate of $W_{\lb\textbf{a}_1,\textbf{b}_1\rb}$. As $W_{\lb\textbf{a}_1,\textbf{b}_1\rb}$ is a product observable, this can be chosen as a product state. Therefore,
\begin{align*}
\left|\tr{W_{\lb\textbf{a}_1,\textbf{b}_1\rb}^\dagger \rho}\right|=1.
\end{align*}
A canonical choice for $E$ is the projector onto an eigenspace  of dimension $d^{n-1}$ of $W_{\lb\textbf{a}_1,\textbf{b}_1\rb}$. To see that such an eigenspace exists, note that as $(W_{\lb\textbf{a}_1,\textbf{b}_1\rb})^d=I$, all the eigenvalues of $W_{\lb\textbf{a}_1,\textbf{b}_1\rb}$ are  $d$-th roots of unity. As there are only $d$ possible eigenvalues for $W_{\lb\textbf{a}_1,\textbf{b}_1\rb}$ not counting multiplicities, there must be at least one eigenspace of dimension $d^{n-1}$. Let $E$ be the projector onto that  eigenspace. Then:
\begin{align*}
   \left|\tr{E W_{\lb\textbf{a}_1,\textbf{b}_1\rb}}\right|\geq d^{n-1}. 
\end{align*}
To measure $E$ we can measure in the eigenbasis of $W_{\lb\textbf{a}_1,\textbf{b}_1\rb}$.
As $W_{\lb\textbf{a}_1,\textbf{b}_1\rb}$ is a tensor product observable, this can be achieved by implementing a quantum circuit of depth $1$. Thus, with this choice we achieve
\begin{align*}
   \left| d^{-n}\tr{W_{\lb\textbf{a}_1,\textbf{b}_1\rb}^\dagger \rho} \tr{E W_{\lb\textbf{a}_1,\textbf{b}_1\rb}}\right|\geq\frac{1}{d}.
\end{align*}
As discussed before, the exact value of $d^{-n}\tr{W_{\lb\textbf{a}_1,\textbf{b}_1\rb}^\dagger \rho} \tr{E W_{\lb\textbf{a}_1,\textbf{b}_1\rb}}$ is not of essence to the protocol, as long as it is not too small. The above choice achieves that by only requiring product measurement and initial states, making it a good pick.

\subsection{Examples: Dephasing and depolarizing noise}\label{app:depdepnoise}

To illustrate how our protocol handles different types of noise, we look at four important cases of local depolarizing or dephasing noise with parameters $p_1$ and $p_2$ and global dephasing or depolarising noise with parameter $p$ acting on $2$ qudits. The action of the above channels on the elements of Weyl basis are presented in Table~\ref{action_WeylChannels} below.
\begin{table}[h]\label{action_WeylChannels} 
	\begin{tabular}{ |c|c|c|c|c|} 
		\hline
		Channels & $W_{0,0}W_{0,1}$ & $W_{0,1}W_{1,1}$ & $W_{1,1}W_{0,1}$ & $W_{1,1}W_{1,1}$ \\ 
		\hline
		Local dephasing & $1$ & $p_2$ & $p_1$ & $p_1p_2$ \\ 
		\hline
		Local dephasing & $1$ & $p$ & $p$ & $p$ \\ 
		\hline
		Local depolarising & $p_2$ & $p_1p_2$ & $p_1p_2$ & $p_1p_2$\\  
		\hline
		Global depolarising & $p$ & $p$ & $p$ & $p$\\ 
		\hline
	\end{tabular}
	\caption{The action of local depolarizing, dephasing noise with parameters $p_1,p_2$, together with global dephasing and depolarizing noise with parameter $p$ acting on two qudits.}
\end{table}

\section{Making Weyl randomized benchmarking efficient}\label{sec:effbench}
Weyl randomized benchmarking allows to identify a variety of experimentally relevant noise models that affect each layer of a general unitary circuit.
However, the number of parameters to be estimated in a Weyl randomized benchmarking experiment on $n$-qudits is $d^{2n}$ which is not feasible even for a moderate number of qudits. Thus, it becomes necessary to make further restrictions on the noise models to render this protocol efficient. In this section we will discuss how assumptions on the locality of the noise can be used to achieve this goal and render the protocol practical. 
\subsection{Local noise models}
We now turn to restricted noise models by imposing a certain locality structure. To model this this situation, we start from a physically motivated hypergraph $G=(V,E)$. The hyperedges in $E$ encode the interactions between the subsystems and by extension -- the locality of the noise. Given a hyperedge $e\in E$, we denote $f_e:(\Z_d\times\Z_d)^n\to\C$ to be a function such that $f_e(\textbf{a},\textbf{b})$ only depends on the value of $(\textbf{a},\textbf{b})$ on the substring on $e$. We then have:

\begin{defn}[Local Weyl channel]
	Given a hypergraph $G=(V,E)$ with $|V|=n$, we call a Weyl diagonal quantum channel $\mathcal{T}:\M_{d^n}\to\M_{d^n}$ physically local with respect to to $G$ if we can express the action of $\mathcal{T}$ on any $W_{(\textbf{a},\textbf{b})}$ as:
	\begin{align}
	\mathcal{T}(W_{(\textbf{a},\textbf{b})})=\sum\limits_{e\in E}f_e(\textbf{a},\textbf{b})W_{(\textbf{a},\textbf{b})}.
	\end{align}
\end{defn}
To illustrate this definition, consider the following example: 
\begin{example}
	Suppose that our system consists of $4$ qudits and our hypergraph $G$ is a circle. 
	Denote by $\mathcal{R}_{(i,j)}$ a channel that conjugates qudits $i,j$ with $W_{(1,0)}\otimes W_{(1,0)}$ and acts as the identity on the rest.
	An example of a local Weyl diagonal channel with respect to this graph is then:
	\begin{align}
	\mathcal{T}=\frac{1}{4}\left(\mathcal{R}_{(1,2)}+\mathcal{R}_{(2,3)}+\mathcal{R}_{(3,4)}+\mathcal{R}_{(4,1)}\right).
	\end{align}
	More generally, a convex combination of unitary channels consisting of Weyl conjugations only acting on qubits connected by an edge give rise to physically local Weyl-channels.
\end{example}

\begin{example}
	Suppose that we have a system consisting of $4$ qubits. 
	Denote $\mathcal{R}_{(i,j)}$ as in Example 1, but pick the hypergraph to be a complete graph on $4$ vertices.
	This induces the following diagonal channel:
	\begin{align}
	\mathcal{T}=\frac{1}{6}\left(\mathcal{T}_{(1,2)}+\mathcal{T}_{(2,3)}+\mathcal{T}_{(3,4)}+\mathcal{T}_{(4,1)}+\mathcal{T}_{(2,4)}+\mathcal{T}_{(1,3)}\right).
	\end{align}
	Note that this channel is not local with respect to the circle hypergraph, as its spectrum depends on Weyl operators in, say, $(1,3)$, which were not an edge in the previous hypergraph.
\end{example}

\begin{example}\label{example:klocalnoise}
	One could also be agnostic with respect to locality of the errors and assume that the noise can act on at most $k<n$ qudits at a time. The underlying hypergraph would then be the complete hypergraph with hyperedges of size $k$.
	The number of parameters necessary to describe such a quantum channel then scales like ${n \choose k} d^{2k}$.
\end{example}

Example~\ref{example:klocalnoise} reflects the scenario where we assume that, up to small corrections, the system is affected by errors acting on at most $k$ out of the $n$ qubits, for $k$ some constant. Thus, our protocol allows for the characterization of such noise channels in polynomial  time.

Indeed, imposing such natural restrictions on noise models significantly reduces the number of parameters one needs to fit. This is because the noise model is completely determined by $f_e$. As each of these functions depends only on Weyl operators in the subsystems included in $e$, and each of these functions has $d^{2|e|}$ parameters the total number of parameters to fit is
\begin{align}\label{equ:numberofparameters}
\sum_{e\in E}d^{2|e|}.
\end{align}

We will now discuss how to extract the functions $f_e$ from the randomized benchmarking experiment, as they completely characterize the channels that are local with respect to a hypergraph.
\subsection{Fitting of the parameters for local noise}\label{sec:nocross}
The first step is to relate the noise parameters to the results of the randomized benchmarking. 

\begin{prop}[Expectation values of local Weyl diagonal channels]
	Let $\mathcal{T}$ be a local Weyl diagonal channel with respect to a hypergraph $G=(V,E)$. Then for a Weyl operator $W_{(\textbf{a},\textbf{b})}$ we have:
	\begin{equation}
	\begin{split}
	&\tr{W_{(\textbf{a},\textbf{b})}^{\dagger}(\mathcal{T}\circ\mathcal{U})(W_{(\textbf{a},\textbf{b})})}=\\&\tr{W_{(\textbf{a},\textbf{b})}^{\dagger}\mathcal{U}(W_{(\textbf{a},\textbf{b})})}\times\lb \sum\limits_{e\in E}f_e(\textbf{a},\textbf{b})\rb.
	\end{split}
	\end{equation}
\end{prop}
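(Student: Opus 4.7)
The plan is to unfold the composition $\mathcal{T}\circ\mathcal{U}$ inside the trace and use the fact that $\mathcal{T}$ is diagonal in the Weyl basis to reduce the computation to a single eigenvalue. First I would expand the image $\mathcal{U}(W_{(\textbf{a},\textbf{b})})$ with respect to the orthogonal Weyl basis, writing
\begin{equation*}
\mathcal{U}(W_{(\textbf{a},\textbf{b})})=\sum_{(\textbf{c},\textbf{d})}\alpha_{(\textbf{c},\textbf{d})}\,W_{(\textbf{c},\textbf{d})},
\end{equation*}
where the coefficient of interest is $\alpha_{(\textbf{a},\textbf{b})}=d^{-n}\tr{W_{(\textbf{a},\textbf{b})}^{\dagger}\mathcal{U}(W_{(\textbf{a},\textbf{b})})}$.

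Next I would apply $\mathcal{T}$ term by term. By the hypothesis that $\mathcal{T}$ is local Weyl diagonal with respect to $G$, each basis element is an eigenvector with eigenvalue $\sum_{e\in E}f_e(\textbf{c},\textbf{d})$, so
\begin{equation*}
(\mathcal{T}\circ\mathcal{U})(W_{(\textbf{a},\textbf{b})})=\sum_{(\textbf{c},\textbf{d})}\alpha_{(\textbf{c},\textbf{d})}\Bigl(\sum_{e\in E}f_e(\textbf{c},\textbf{d})\Bigr)W_{(\textbf{c},\textbf{d})}.
\end{equation*}
Then I would take the Hilbert--Schmidt inner product against $W_{(\textbf{a},\textbf{b})}$. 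By orthogonality of the Weyl basis, $\tr{W_{(\textbf{a},\textbf{b})}^{\dagger}W_{(\textbf{c},\textbf{d})}}=d^{n}\delta_{(\textbf{a},\textbf{b}),(\textbf{c},\textbf{d})}$, so only the $(\textbf{c},\textbf{d})=(\textbf{a},\textbf{b})$ term survives, yielding
\begin{equation*}
\tr{W_{(\textbf{a},\textbf{b})}^{\dagger}(\mathcal{T}\circ\mathcal{U})(W_{(\textbf{a},\textbf{b})})}=d^{n}\alpha_{(\textbf{a},\textbf{b})}\sum_{e\in E}f_e(\textbf{a},\textbf{b}).
\end{equation*}

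Finally I would recognize that $d^{n}\alpha_{(\textbf{a},\textbf{b})}=\tr{W_{(\textbf{a},\textbf{b})}^{\dagger}\mathcal{U}(W_{(\textbf{a},\textbf{b})})}$, which gives exactly the claimed factorization. The argument is essentially bookkeeping once the Weyl-diagonality of $\mathcal{T}$ is in place; there is no real obstacle, but the only subtle point to flag is that the eigenvalue appearing at the surviving index is $\sum_{e\in E} f_e(\textbf{a},\textbf{b})$ and not a mixture of eigenvalues at different $(\textbf{c},\textbf{d})$, which is exactly why diagonality (as opposed to generic Weyl covariance) is needed.
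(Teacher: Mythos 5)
Your proof is correct and follows essentially the same route as the paper, which simply expands in the Weyl basis and uses that the local Weyl diagonal channel acts on each $W_{(\textbf{a},\textbf{b})}$ as multiplication by $\sum_{e\in E}f_e(\textbf{a},\textbf{b})$; you have merely written out the orthogonality bookkeeping that the paper leaves implicit.
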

\begin{proof}
	This follows from expanding $\mathcal{T}(W_{(\textbf{a},\textbf{b})})$ in the Weyl basis, and using the action of $\mathcal{T}$ on each element of the Weyl basis
	\begin{align}
	\mathcal{T}(W_{(\textbf{a},\textbf{b})})=\sum\limits_{e\in E}f_e(\textbf{a},\textbf{b})W_{(\textbf{a},\textbf{b})}.
	\end{align}
\end{proof}
The last proposition enables us to relate the results of the randomized benchmarking experiments to the eigenvalues of the Weyl diagonal quantum channels. The randomized benchmarking experiment gives us access to $\tr{W_{(\textbf{a},\textbf{b})}^{\dagger}(\mathcal{T}\circ\mathcal{U})(W_{(\textbf{a},\textbf{b})})}$.
Thus, using the last proposition together with the results of the randomized benchmarking experiment with knowledge of $\tr{W_{(\textbf{a},\textbf{b})}^{\dagger}\mathcal{U}(W_{(\textbf{a},\textbf{b})})}$ we get a linear equation for the eigenvalues of the local noise channels.

Moreover, note that as $\mathcal{T}$ is a quantum channel, we get
\begin{align}
\sum\limits_{e}f_e(\textbf{0},\textbf{0})=1.
\end{align}
Given a hypergraph, the number of parameters we need to fit is given in Eq.~\eqref{equ:numberofparameters}. This is also a number  of linearly independent randomized benchmarking results required to completely determine the parameters of the functions $f_e$ by solving a linear system of equations or, if more data is available, performing linear regression.

We will discuss the stability of this procedure and the necessary number of samples in the next section.

\section{Statistical and stability analysis of the RB protocol}\label{app:statstability}
The results of the last section raise some technical questions:
\begin{enumerate}
    \item what is the number of samples required to get to a given level of confidence about the range of the parameters in a given noise model.
    \item how to pick the parameters of the randomized benchmarking protocol and how robust it is. 
\end{enumerate}
The goal of this section is to answer these questions in a rigorous way. 
\subsection{Length of the test sequences}\label{sequencelength}
We will now discuss how to pick the sequence lengths to ensure reliable results.
We present the generalized approach of~\cite{Harper2019} for this randomized benchmarking procedure, with the main technical difference being that the decay rate $\mu(a,b)$ may be complex. As discussed before, the goal of the randomized benchmarking protocol is to estimate the diagonal elements of the evolution with respect to the Weyl basis, as these can be related to the noise parameters.
We will first discuss how to pick the length of the sequence of gates to get good estimates. As remarked in~\cite{Harper2019}, it is highly desirable to get multiplicative bounds instead of additive ones on the diagonals $\mu(\textbf{a},\textbf{b})$. To illustrate why this is important, consider an example of a system suffering from global depolarizing at rate $p$. Moreover, note that the number of samples required to estimate the expectation value of random variable up to an additive error $\epsilon$ usually scales like $\epsilon^{-2}$.
Current state of the art implementations have $p\simeq 1-10^{-3}$, which implies that order $10^{6}$ runs of the experiment would be necessary to get an additive error of the same order as the actual parameter, which is too costly. Thus, it is desirable to have bounds which are multiplicative, i.e., scale in $1-p$ instead of just having an additive error. 
To see how this issue relates to the length of the sequence of gates and multiplicative bounds, note that if $p$ is of order $1-10^{-3}$, then after a sequence of length $100$, the probability of success for the randomized benchmarking experiment will have reduced to roughly $1-10^{-2}$, while performing the same experiment for $p$ of order $1-10^{-2}$ will have reduced the success probability to roughly $0.37$. We see that in this case, we are able to tell these two scenarios apart with an additive error of order $10^{-1}$ on the estimates.
On the other hand, if the gate sequence is too long, then in both cases the success probability will be too small and it will not be possible to tell them apart reliably.
This indicates that the sequence length should be chosen in a way that ensures that the survival probability is of constant order $\gg 0$, so that a (not too small) additive error is enough to ensure that the estimate is reliable. Let us formalize this intuition.

As mentioned before, our setting presents some additional challenges when compared to that of~\cite{Harper2019} because the diagonal elements we want to estimate can also be complex. Thus, we will discuss how to estimate the absolute value of the diagonal elements and the corresponding phase separately.

As seen in Theorem~\ref{thm:correctstatistics}, if we perform the randomized benchmarking experiment with an initial state $\rho$ and measure the POVM $E$ for the group element $(\textbf{a},\textbf{b})\in\left(\mathbb{Z}_d\times\mathbb{Z}_d\right)^n$, then the weighted survival probability  at sequence length $m$ will have the expectation value:
\begin{align}
q(\textbf{a},\textbf{b},m)=C(\textbf{a},\textbf{b})\mu(\textbf{a},\textbf{b})^m,
\end{align}
where $\mu$ is defined as in Eq.~\eqref{equ:definitionmu} and
\begin{align}
C(\textbf{a},\textbf{b})=d^{-n}\tr{W_{\lb\textbf{a},\textbf{b}\rb}^\dagger \rho} \tr{E W_{\lb\textbf{a},\textbf{b}\rb}}.
\end{align}
As noted before, $\mu(\textbf{a},\textbf{b})$ is in general a complex number.
On the other hand, $C(\textbf{a},\textbf{b})$ is a real number. This is because the map $\mathcal{P}_{(\textbf{a},\textbf{b})}$ is hermiticity preserving, as noted in~\cite{Stilck_2018} and
\begin{align}
C(\textbf{a},\textbf{b})=\tr{E\mathcal{P}_{(\textbf{a},\textbf{b})}(\rho)}.
\end{align}
We will assume that $|C_{(\textbf{a},\textbf{b})}|\gg0$. In Sec.~\ref{sec:initial_choices} we discussed how to achieve this.
We will now drop the $(\textbf{a},\textbf{b})$ subscripts and arguments, as we will assume them to be fixed throughout the rest of this subsection, that is, we will be interested in learning one parameter. 

As $\mu$ is a complex number, one way of specifying it is by estimating its phase $\phi$ and absolute value $|\mu|$ such that $\mu=|\mu|e^{i\phi}$.
We will focus on estimating $|\mu|$ with a multiplicative error first through the randomized benchmarking experiments. 
The first step towards estimating $\mu$ will be to specicfy how to estimate $|q(m)|^2$.
The following lemma provides a bound on the number of samples required to get an additive error estimate of $|q(m)|^2$ for a given sequence length. 
\begin{lem}\label{lem:hoeffdingsamples}
	Let $m$ be fixed and $\epsilon,\delta>0$ be given. Suppose we repeat the character randomized benchmarking experiment for $2l$ random sequences of gates, each of length $m$. Let $s_{k}(m)$ be the observed outcome for sequence $k$ and define the random variable $X_k$ for $1\leq k\leq l$ as:
	\begin{align}
	X_k=\operatorname{Re}\left(s_{k}(m) \bar{s}_{k+l}(m)\right).
	\end{align}
	Then, for $l=\cO(\epsilon^{-2}\log\lb \delta^{-1}\rb)$ with probability at least $1-\delta$ we get:
	\begin{align}
	\left|\frac{1}{l}\sum\limits_{k=1}^{l}X_k-|q(m)|^2\right|\leq \epsilon.
	\end{align}
\end{lem}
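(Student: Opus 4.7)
The plan is to verify that each $X_k$ is an unbiased estimator of $|q(m)|^2$ with bounded range, and then invoke Hoeffding's inequality. The whole point of the squaring-with-conjugate construction is precisely to produce a real-valued, unbiased estimator of $|q(m)|^2$ even though $\mu(\textbf{a},\textbf{b})$, and hence $q(m)$, is in general complex; once that is in place, the rest is a standard concentration argument.

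First, I would observe that the $2l$ random gate sequences used in the experiment are drawn independently, so the complex-valued outputs $s_1(m),\ldots,s_{2l}(m)$ of the WRB protocol form a mutually independent family of random variables. By Theorem~\ref{thm:correctstatistics}, each has expectation $\mathbb{E}[s_k(m)] = q(m) = C(\textbf{a},\textbf{b})\mu(\textbf{a},\textbf{b})^m$. Using independence of $s_k(m)$ and $s_{k+l}(m)$,
$$\mathbb{E}\left[s_k(m)\overline{s_{k+l}(m)}\right] = \mathbb{E}[s_k(m)]\,\mathbb{E}\left[\overline{s_{k+l}(m)}\right] = q(m)\overline{q(m)} = |q(m)|^2,$$
which is already real, so $\mathbb{E}[X_k] = |q(m)|^2$ as desired.

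Next I would bound $|X_k|$. From the protocol, $s_k(m)$ equals either $0$ or a character $\chi_{(\textbf{a},\textbf{b})}(\cdot)$ of unit modulus, so $|s_k(m)|\leq 1$. Consequently
$$|X_k|\,\leq\,\bigl|s_k(m)\overline{s_{k+l}(m)}\bigr|\,\leq\,1.$$
Moreover, since the index pairs $(k,k+l)$ for $1\leq k\leq l$ involve disjoint runs of the protocol, the variables $X_1,\ldots,X_l$ are mutually independent.

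Finally, I would apply Hoeffding's inequality to the independent bounded variables $X_1,\ldots,X_l\in[-1,1]$: the probability that their empirical mean deviates from $|q(m)|^2$ by more than $\epsilon$ is at most $2\exp(-l\epsilon^2/2)$. Setting this at most $\delta$ and solving for $l$ yields $l=\mathcal{O}(\epsilon^{-2}\log(\delta^{-1}))$, proving the claim. I do not expect any genuine obstacle: the subtle aspect, namely extracting a real unbiased estimate of $|\mu|^{2m}$ from complex-valued single-shot outcomes, is already handled by the structure of $X_k$; boundedness follows from $|\chi|=1$, independence from the sampling design, and the sample-complexity rate from a single invocation of Hoeffding.
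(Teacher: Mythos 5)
Your proposal is correct and follows essentially the same route as the paper's proof: unbiasedness of $X_k$ via independence of $s_k(m)$ and $s_{k+l}(m)$, the bound $|X_k|\leq 1$ from the protocol outputs having modulus at most one, and a single application of Hoeffding's inequality to get $l=\cO(\epsilon^{-2}\log(\delta^{-1}))$. The extra details you supply (disjointness of the index pairs giving independence of the $X_k$, and the real part commuting with the expectation) are exactly the points the paper handles implicitly.
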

\begin{proof}
	From the above description
	\begin{align}
	\mathbb{E}[ s_{k}(m)]=   q(m).
	\end{align}
	Thus,  as $s_{k}$ and $ s_{k+l}]$ are independent random variables and by the linearity of expectation values, we have:
	\begin{align}
	\mathbb{E}[ X_k]=\mathbb{E}[ s_{k}]\mathbb{E}[ \bar{s}_{k+l}]=|q(m)|^2
	\end{align}
	and
	\begin{align}
	\mathbb{E}[ \text{Re}X_k]=\text{Re}\mathbb{E}[ X_k].
	\end{align}
	Also note that $|X_k|\leq1$, as the output of the randomized benchmarking protocol is either some complex number of modulus $1$ or $0$ and, thus, $|s_{k}(m) s_{k+l}(m)|\leq1$.
	The claim then follows from Hoeffding's inequality.
\end{proof}
We have shown that the empirical average of the $X_k$ provides an estimator for $|q(m)|^2$ up to an additive precision. Leveraging on that let us now show how to estimate the absolute value. The procedure on how to pick the sequence length is displayed in ~\ref{fig:estimatormu1}.
We now adapt the results of~\cite{Harper2019} to show that the output of the procedure in Fig.~\ref{fig:estimatormu1} satisfies:

	\begin{tcolorbox}[breakable]\label{fig:estimatormu1}
		\emph{Input:} desired precision $\epsilon>0$, upper bound $u$ on $|\mu\|^2$.\\ 
\emph{Output:} Estimate $\hat{\mu}$ of $\mu$.
		\begin{enumerate}
		\itemsep-0.2em  
			\item Set $m_1=1$. Produce an estimate $|\hat{q}(1)|^2$ of $|q(1)|^2$ up to additive error $u^2C^{2}\epsilon$.
			\item While $|\hat{q}(m_i)|^2>|\hat{q}(1)|^2/3$:
			\begin{itemize}
				\item Set $m_i:=2^i+1$.
				\item Estimate $|q(m_i)|^2$ up to an additive error $|\mu|^2C^2\epsilon$ and set it to $|\hat{q}(m_i)|^2$.
			\end{itemize}
			\item Output $|\hat{\mu}|=\lb \frac{\hat{q}(m_i)}{\hat{q}(m_1)}\rb^{\frac{1}{2m_i}}$.
		\end{enumerate}

	\end{tcolorbox}
	
\begin{prop}[Multiplicative estimates for absolute value of diagonal]\label{prop:qualityest}
	The estimate $|\hat{\mu}|$ outputted by the algorithm in Fig.~\ref{fig:estimatormu1} satisfies:
	\begin{align}\label{equ:qualityestimate}
	|\mu|-|\hat{\mu}|=\cO(\epsilon(1-|\mu|)).
	\end{align}
\end{prop}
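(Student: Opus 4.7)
The strategy is to combine the additive error bounds of Lemma~\ref{lem:hoeffdingsamples} with the doubling-and-stopping rule of the algorithm, and then propagate the resulting errors through the $(1/(2m_i))$-th root operation that produces $|\hat{\mu}|$. The key quantitative observation is that the stopping condition $|\hat{q}(m_i)|^2 \leq |\hat{q}(m_1)|^2/3$ forces $m_i$ to scale inversely in $1-|\mu|$, and this is exactly what converts an $\cO(\epsilon)$ multiplicative error on the ratio $|\hat{q}(m_i)|^2/|\hat{q}(m_1)|^2$ into an $\cO(\epsilon(1-|\mu|))$ error on $|\mu|$ after taking the root.

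First I would show that at termination the sequence length satisfies $m_i = \Theta((1-|\mu|)^{-1})$. Using $|q(m)|^2 = C^2|\mu|^{2m}$ and the fact that the additive error on each $|\hat{q}(m)|^2$ is made a small fraction of $C^2$, the stopping criterion pins $|\mu|^{2(m_i-1)}$ inside a constant window; taking logarithms and using $\log(1/|\mu|) = \Theta(1-|\mu|)$ near $|\mu|=1$ yields the claim, while the doubling schedule prevents overshoot by more than a factor of two. Next I would convert the additive errors on $|\hat{q}(m_i)|^2$ and $|\hat{q}(m_1)|^2$ into a relative error on their ratio. Since the algorithm targets additive error $\cO(|\mu|^2 C^2\epsilon)$ and $|q(m_i)|^2 = \Theta(C^2)$ at the stopping step, while $|q(1)|^2 = C^2|\mu|^2$, both relative errors come out to $\cO(\epsilon)$, so $\hat{r} := |\hat{q}(m_i)|^2/|\hat{q}(m_1)|^2 = r(1+\delta)$ with $r = |\mu|^{2(m_i-1)}$ and $|\delta| = \cO(\epsilon)$.

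Finally, expanding the root I would write
\begin{equation*}
|\hat{\mu}| = \hat{r}^{1/(2m_i)} = |\mu|^{(m_i-1)/m_i}\lb 1 + \cO(\epsilon/m_i)\rb,
\end{equation*}
and since $1/m_i = \Theta(1-|\mu|)$, the exponent correction gives $|\mu|^{(m_i-1)/m_i}=|\mu|\lb 1+\cO((1-|\mu|)^2)\rb$ while $\delta$ contributes a factor $1+\cO(\epsilon(1-|\mu|))$. Combining these two corrections yields $|\hat{\mu}| = |\mu|\lb 1+\cO(\epsilon(1-|\mu|))\rb$, and hence $||\mu|-|\hat{\mu}||=\cO(\epsilon(1-|\mu|))$, as claimed.

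The main obstacle, I expect, is the book-keeping in the second step: the additive error on the very first estimate $|\hat{q}(1)|^2$ is controlled by the a priori bound $u$, not by $|\mu|$, so a loose $u$ could in principle spoil the multiplicative bound on the ratio. The cleanest way to handle this is a one-round bootstrap, running the algorithm first with $u=1$ to get a rough estimate of $|\mu|$ and then re-running with $u$ tuned to that estimate; alternatively, one can work directly with a relative-error bound on the ratio, exploiting the fact that the same Hoeffding-type concentration governs both numerator and denominator.
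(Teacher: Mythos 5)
Your overall strategy is the same as the paper's: use Lemma~\ref{lem:hoeffdingsamples} to make both $|\hat q(1)|^2$ and $|\hat q(m_i)|^2$ accurate to $\cO(\epsilon|\mu|^2C^2)$, use the doubling-plus-stopping rule to pin $|\mu|^{2m_i}$ to a constant (so the ratio has $\cO(\epsilon)$ relative error and $m_i^{-1}=\cO(1-|\mu|)$, where the stopping rule also keeps $|\mu|$ bounded away from $0$ so that $\log(1/|\mu|)=\Theta(1-|\mu|)$ is legitimate), and then let the root convert the $\cO(\epsilon)$ relative error into an $\cO(\epsilon(1-|\mu|))$ error. Your remark about the first estimate is apt: the paper's proof silently replaces the $u^2C^2\epsilon$ accuracy of step 1 by $|\mu|^2C^2\epsilon$, i.e.\ it implicitly assumes $u=\Theta(|\mu|)$, so either your bootstrap or that implicit assumption is needed.

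The one step that does not go through as written is your final combination. You correctly note that with the literal output $\hat r^{1/(2m_i)}$ and $\hat r$ concentrating around $|\mu|^{2(m_i-1)}$ there is a systematic factor $|\mu|^{(m_i-1)/m_i}=|\mu|\lb 1+\cO((1-|\mu|)^2)\rb$, but you then absorb the $\cO((1-|\mu|)^2)$ correction into $\cO(\epsilon(1-|\mu|))$. That absorption is only valid when $1-|\mu|=\cO(\epsilon)$; for a fixed noise level and $\epsilon\to 0$ the bias term $(1-|\mu|)^2$ dominates $\epsilon(1-|\mu|)$, so your argument as stated only yields $\cO\lb\epsilon(1-|\mu|)+(1-|\mu|)^2\rb$, which is weaker than Eq.~\eqref{equ:qualityestimate}. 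The cure is to remove the exponent mismatch rather than bound it: take the root with the exponent matching the gap between the two sequence lengths (equivalently, read the numerator's true value as $C^2|\mu|^{2m+2}$ when the root is $1/(2m)$, which is exactly how the paper's proof treats it). Then the only error source is the $\cO(\epsilon)$ relative error of the ratio, and $\lb 1+\cO(\epsilon)\rb^{\frac{1}{2m_i}}=1+\cO(\epsilon(1-|\mu|))$ gives the claim precisely as in the rest of your argument.
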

\begin{proof}
	Let $m$ be the sequence length of the output.
	From the assumptions on the error of the estimates of $q(m_i)$ we have:
	\begin{equation}
	\begin{split}
	&\lb \frac{C^2|\mu|^{2m+2}-\epsilon |\mu|^2C^2}{C^2|\mu|^2+|\mu|^2\epsilon C^2}\rb\leq \lb \frac{\hat{q}(m_i)}{\hat{q}(m_1)}\rb^{\frac{1}{2m}},\\
	&\lb \frac{\hat{q}(m_i)}{\hat{q}(m_1)}\rb^{\frac{1}{2m}}\leq  \lb \frac{C^2|\mu|^{2m+2}+\epsilon |\mu|^2C^2}{C^2|\mu|^2-\epsilon|\mu|^2 C^2}\rb.
	\end{split}
	\end{equation}
	Simplifying the expressions above we see that:
	\begin{align}\label{equ:upperboundestimator}
	\lb \frac{\hat{q}(m_i)}{\hat{q}(m_1)}\rb^{\frac{1}{2m}}\leq |\mu|\lb \frac{1+\epsilon|\mu|^{-2m}}{1-\epsilon}\rb^{\frac{1}{2m}}.
	\end{align}
	When the condition in the while loop is true, we have
	\begin{align}
	C^2|\mu|^{m+1}+C^2\epsilon|\mu|^2\geq\frac{1}{3}\lb C^2|\mu|^2+C^2\epsilon|\mu|^2\rb,
	\end{align}
	as the previous step to $m$ had sequence length $(m-1)/2+1$ by construction. Simplifying and squaring the inequality we get:
	\begin{align}
	|\mu|^{2m+1}\geq \frac{1}{9}\lb 1-17\epsilon\rb.
	\end{align}
	Note that for quantum channels we have $|\mu|\leq1$ and, thus:
	\begin{align}
	|\mu|^{2m}\geq\frac{1}{9}\lb 1-17\epsilon\rb.
	\end{align}
	Inserting the inequality above into Eq.~\eqref{equ:upperboundestimator} we get:
	\begin{align}\label{equ:upperboundoutput2}
	&\lb \frac{\hat{q}(m_i)}{\hat{q}(m_1)}\rb^{\frac{1}{2m}}\leq |\mu|\lb \frac{1+9\epsilon\lb 1-17\epsilon\rb^{-1}}{1-\epsilon}\rb^{\frac{1}{2m}}=\nonumber\\
	&|\mu|\lb 1+\cO(\epsilon)\rb^{\frac{1}{2m}}.
	\end{align}
	Again, by our stopping criterion: 
	\begin{align}
	m=\Theta(\log(|\mu|^{-1})^{-1}),
	\end{align}
	which gives that $m^{-1}=\cO(1-|\mu|)$, as $\log(x)\geq1-1/x$ for $x>0$.

	Thus, we see that
	\begin{equation}
	\begin{split}
	&\lb 1+\cO(\epsilon)\rb^{\frac{1}{2m}}=\text{exp}\lb\frac{\log(1+\cO(\epsilon))}{2m}\rb\leq \\&\text{exp}\lb\cO(\epsilon(1-|\mu|)\rb=\lb1+\cO[\epsilon(1-|\mu|)]\rb.
	\end{split}
	\end{equation}
	
Combining the bound above with Eq.~\eqref{equ:upperboundoutput2} we get that
	\begin{align}
	|\hat{\mu}|-|\mu|=\cO(\epsilon(1-|\mu|)).
	\end{align}
	The bound in the other direction follows analogously.
\end{proof}
Thus, as long as the estimator terminates, we get an estimate with multiplicative error of $|\mu|$.
It remains to compute the sequence length and the required number of samples after the procedure terminates. This is the content of the next theorem. To simplify our derivations, we introduce the concept of the Weyl spectral gap -- the largest diagonal entry with respect to the Weyl basis (excluding the identity):
\begin{defn}[Weyl spectral gap]
	Let $\n:\M_{d^n}\to\M_{d^n}$ be a quantum channel. Its Weyl spectral gap $\lambda$ is given by
	\begin{align}
	1-\lambda=\max\limits_{(\textbf{a},\textbf{b})\in\left(\mathbb{Z}_d\times\mathbb{Z}_d\right)^n),(\textbf{a},\textbf{b})\not=0} d^{-n}|\tr{ W_{(\textbf{a},\textbf{b})}^\dagger \n \lb W_{(\textbf{a},\textbf{b})}\rb}|.
	\end{align}
\end{defn}

\begin{thm}\label{thm:recoveryguarantees}
	Let $\n$ be a quantum channel with Weyl spectral gap $\lambda$, $\epsilon>0$ be a given error parameter satisfying $\epsilon\leq 200^{-1}C^2|\mu|^2$ and $1-\delta>0$ be a failure probability. Then the procedure above outputs an estimate $|\hat{\mu}|$ of $\mu$ satisfying Eq.~\eqref{equ:qualityestimate} with probability at least $1-\delta$ using
	\begin{align}
	\cO(\epsilon^{-2}(C\mu)^{-4}\log[(1-\lambda)^{-1}]\log(\delta^{-1}\log[(1-\lambda)^{-1}]))
	\end{align}
	samples and largest sequence length $M_{\max}$
	\begin{align}
	M_{\max}=\cO(\lambda^{-1}).
	\end{align}
\end{thm}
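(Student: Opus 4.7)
The plan is to combine Proposition~\ref{prop:qualityest}, which already guarantees the quality of the estimate $|\hat{\mu}|$ once the intermediate additive-error estimates $|\hat{q}(m_i)|^2$ are accurate, with Lemma~\ref{lem:hoeffdingsamples}, which converts a per-step additive tolerance into a sample count, and then a union bound over the iterations of the while loop.

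First I would bound the largest sequence length $M_{\max}$. The while-loop analysis carried out inside the proof of Proposition~\ref{prop:qualityest} shows that at termination one has $m = \Theta(\log(|\mu|^{-1})^{-1})$. Since by definition of the Weyl spectral gap $|\mu| \leq 1 - \lambda$, and using the elementary inequality $\log(|\mu|^{-1}) \geq 1 - |\mu| \geq \lambda$, this forces $M_{\max} = \cO(\lambda^{-1})$. Because $m_i$ at least doubles at each iteration, the total number of iterations of the while loop is $K = \cO(\log M_{\max}) = \cO(\log\lambda^{-1})$.

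Next I would handle the per-iteration cost. Lemma~\ref{lem:hoeffdingsamples} shows that $\cO((|\mu|^2 C^2 \epsilon)^{-2}\log(\delta'^{-1}))$ samples suffice to estimate each $|q(m_i)|^2$ to additive error $|\mu|^2 C^2 \epsilon$ with probability at least $1-\delta'$. Choosing $\delta' = \delta/K$ and union-bounding over the $K$ iterations keeps the overall failure probability below $\delta$. Multiplying the per-iteration sample count by $K$ and substituting $K = \cO(\log \lambda^{-1})$ recovers the stated total sample budget $\cO(\epsilon^{-2}(C|\mu|)^{-4}\log(\lambda^{-1})\log(\delta^{-1}\log\lambda^{-1}))$. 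Conditioning on the high-probability event that every additive estimate is accurate, Proposition~\ref{prop:qualityest} immediately supplies the claimed multiplicative bound $|\mu|-|\hat{\mu}| = \cO(\epsilon(1-|\mu|))$.

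The main subtlety is that the while-loop's stopping criterion is evaluated on the noisy estimates $|\hat{q}(m_i)|^2$ rather than the true values $|q(m_i)|^2$, so one has to check that the empirical rule terminates at an $m$ which agrees with the ideal termination $m$ up to constants. This is exactly where the hypothesis $\epsilon \leq 200^{-1} C^2 |\mu|^2$ is used: it forces the per-step additive tolerance $|\mu|^2 C^2 \epsilon$ to be an order of magnitude smaller than the baseline $|q(1)|^2 = C^2|\mu|^2$, so the first $m_i$ at which the empirical ratio crosses $1/3$ lies within a constant factor of the first $m_i$ at which the true ratio does. This robustness step, which is already embedded in the proof of Proposition~\ref{prop:qualityest}, is what simultaneously justifies the sequence-length bound $M_{\max} = \cO(\lambda^{-1})$ and the multiplicative quality guarantee at the terminal iteration, and is the place where I expect the cleanest proof to do the most work.
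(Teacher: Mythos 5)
Your proposal is correct and follows essentially the same route as the paper's proof: termination of the doubling while-loop is controlled via the condition $\epsilon\leq 200^{-1}C^2|\mu|^2$ and the bound $|\mu|\leq 1-\lambda$ to get $M_{\max}=\cO(\lambda^{-1})$, the per-iteration accuracy comes from Lemma~\ref{lem:hoeffdingsamples}, a union bound over the $\cO(\log M_{\max})$ iterations fixes the failure probability, and Proposition~\ref{prop:qualityest} then yields the multiplicative guarantee. Your reading of the iteration count as $\cO(\log\lambda^{-1})$ is in fact the cleaner version of what the paper writes (its $\log[(1-\lambda)^{-1}]$ and $\log\log[(1-\lambda)^{-1}]$ factors contain inverted logarithms), so no gap remains.
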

\begin{proof}
	The fact that the output satisfies Eq.~\eqref{equ:qualityestimate} follows from Lemma~\ref{prop:qualityest} after procedure terminates. Thus, it only remains to show how many samples are required to ensure that all estimates are correct until the algorithm terminates and the number of steps after which it terminates.
	First, let us estimate the expected sequence length which results in the termination. The termination criterion is $|\hat{q}(m_i)|^2\leq\frac{1}{3}|\hat{q}(m_1)|^2$.
	As in the previous lemma, as long as all the estimates are correct up to an additive error $\epsilon \lb C\mu\rb^2$ the procedure terminates whenever
	\begin{align}
	C^2|\mu|^{2m}+C^2\epsilon|\mu|^2\leq\frac{1}{3}\lb C^2|\mu|^2+C^2\epsilon|\mu|^2\rb.
	\end{align} 
	We assumed that $\epsilon$ satisfies $\epsilon\leq \frac{1}{200} |\mu|^2C^2$, and from the above equation we get
	\begin{align}
	m=\cO(\log(\mu)^{-1}).
	\end{align}
	By our assumption on the gap $|\mu|^{2m_i-1}\leq (1-\lambda)^{2m_i-1}$, thus picking
	\begin{align}
	M_{\max}=\cO\lb\log(\mu^{-1})\rb
	\end{align}
	is enough to ensure that the termination condition is satisfied. By the definition of the Weyl spectral gap we have $\mu\leq1-\lambda$, which yields the estimate on the largest sequence length.
	It now remains to compute the number of samples required to ensure that all estimates have the required precision with the desired failure probability.
	Recall that we set $m_{i}=2^i+1$. Thus, it follows from our estimate on $m_{\max}$ that the total number of iterations required by the algorithm is $i_{\max}=\cO\lb\log\lb\log((1-\lambda)^{-1})\rb\rb$. Thus, we need to estimate $|q(m_i)|^2$ correctly, i.e. up to an additive error of $\epsilon (C^2|\mu|^2)$, for $i_{\max}$ many different sequence lengths. 
	It follows from Lemma~\ref{lem:hoeffdingsamples} that $\cO(\epsilon^{-2}\lb C|\mu|\rb^{-4}\log(\delta^{-1} i_{\max}))$ many samples for each iteration suffice to ensure an additive error $\epsilon C^2|\mu|^2$ and failure probability at most $\delta i_{\max}^{-1}$ for every iteration. By the union bound, we see that the probability that all estimates are correct up to an additive error $\epsilon C^2|\mu|^2$ is at least $1-\delta$. Thus, we conclude that a total of 
	\begin{equation}
	\begin{split}
	&\cO(i_{\max}\lb C|\mu|\rb^{-4}\epsilon^{-2}\log(\delta^{-1} i_{\max}))=\\ &\cO(\epsilon^{-2}\lb C|\mu|\rb^{-4}\log[(1-\lambda)^{-1}]\log(\delta^{-1}\log[(1-\lambda)^{-1}]))
	\end{split}
	\end{equation}
	samples suffice to reach the desired accuracy.
\end{proof}
This theorem establishes the number of samples required to get a multiplicative estimate on $|\mu|$ and the maximum sequence length.
First, note that in the setting of our protocol $\mathcal{N}\circ\mathcal{U}$, when $U$ is the target unitary and $\mu$ the desired diagonal of this channel with respect  to the Weyl basis. Note that even in the case when $\mathcal{N}$ is the identity, it can be the case that $\mu$ is very small or zero. This is the case if the corresponding diagonal element of the unitary is small. Thus, the procedure is only effective if the diagonal element of the noiseless unitary also has constant order.  Otherwise we can use the noiseless Clifford trick discussed in Cor.~\ref{cor:randimizedbenchclifford} to access off-diagonal elements. This also  has to be taken into account when choosing the  desired precision. More precisely, suppose as usual that the channel $\mathcal{T}$ is Weyl diagonal with corresponding eigenvalues $\lambda(\textbf{a},\textbf{b})$. Then we have:
\begin{align}
|\mu(\textbf{a},\textbf{b})|=d^{-n}|\lambda(\textbf{a},\textbf{b})\tr{W_{(\textbf{a},\textbf{b})}^{\dagger}\mathcal{U}(W_{(\textbf{a},\textbf{b})})}|.
\end{align}
Thus, in order to estimate $|\lambda(\textbf{a},\textbf{b})|$, the desired parameter, we need to know the diagonal of the unitary. The error gets rescaled by $d^{-n}|\tr{W_{(\textbf{a},\textbf{b})}^{\dagger}\mathcal{U}(W_{(\textbf{a},\textbf{b})})}|^{-1}$, after we multiply our estimate on $|\mu|$ to estimate $|\lambda(\textbf{a},\textbf{b})|$. We conclude that:
\begin{cor}
	Let $\mathcal{T}\circ\mathcal{U}$ be a quantum channel for a known unitary $U$ and $\mathcal{T}$ a Weyl diagonal channel.
	Denote by $u(\textbf{a},\textbf{b})$ the diagonal elements of $\mathcal{U}$, i.e.
	\begin{align}
	u(\textbf{a},\textbf{b})=d^{-n}\tr{W_{(\textbf{a},\textbf{b})}^{\dagger}\mathcal{U}(W_{(\textbf{a},\textbf{b})})}
	\end{align}
	and similarly by $\lambda(\textbf{a},\textbf{b})$ those of $\mathcal{T}$. 
	Consider the setting of Theorem~\ref{thm:recoveryguarantees} with a given error parameter $\epsilon'$; pick $\epsilon=\epsilon' |u(\textbf{a},\textbf{b})^{-1}|$, where we further assume $|u(\textbf{a},\textbf{b})^{-1}|>0$.
	Then $|\hat{\lambda}(\textbf{a},\textbf{b}))|=|\mu u(\textbf{a},\textbf{b})^{-1}|$ satisfies:
	\begin{align}
	||\lambda(\textbf{a},\textbf{b}))|-|\hat{\lambda}(\textbf{a},\textbf{b}))||\leq \cO(\epsilon(1-|\mu|).
	\end{align}
\end{cor}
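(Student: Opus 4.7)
The plan is to reduce this corollary to Theorem~\ref{thm:recoveryguarantees} by exploiting the multiplicative structure that comes from $\mathcal{T}$ being Weyl-diagonal. The key algebraic observation is that the diagonal entry $\mu(\mathbf{a},\mathbf{b})$ of $\mathcal{T}\circ\mathcal{U}$ factorizes as $u(\mathbf{a},\mathbf{b})\,\lambda(\mathbf{a},\mathbf{b})$. Indeed, expanding $\mathcal{U}(W_{(\mathbf{a},\mathbf{b})}) = \sum_{(\mathbf{a}',\mathbf{b}')} c_{(\mathbf{a}',\mathbf{b}')}\,W_{(\mathbf{a}',\mathbf{b}')}$ with $c_{(\mathbf{a},\mathbf{b})} = u(\mathbf{a},\mathbf{b})$, and applying $\mathcal{T}$ which by Proposition~\ref{prop:characterizationweyl} sends each $W_{(\mathbf{a}',\mathbf{b}')}$ to $\lambda(\mathbf{a}',\mathbf{b}')\,W_{(\mathbf{a}',\mathbf{b}')}$, I project back onto $W_{(\mathbf{a},\mathbf{b})}$ using orthogonality of the Weyl basis. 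All off-diagonal terms vanish and the coefficient on $W_{(\mathbf{a},\mathbf{b})}$ is exactly $u(\mathbf{a},\mathbf{b})\,\lambda(\mathbf{a},\mathbf{b})$, so $\mu(\mathbf{a},\mathbf{b}) = u(\mathbf{a},\mathbf{b})\lambda(\mathbf{a},\mathbf{b})$ and hence $|\lambda(\mathbf{a},\mathbf{b})| = |\mu(\mathbf{a},\mathbf{b})|\,|u(\mathbf{a},\mathbf{b})|^{-1}$.

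Given this identity, I define the estimator by the natural rescaling $|\hat{\lambda}(\mathbf{a},\mathbf{b})| := |\hat{\mu}(\mathbf{a},\mathbf{b})|\,|u(\mathbf{a},\mathbf{b})|^{-1}$, where $|\hat{\mu}(\mathbf{a},\mathbf{b})|$ is the output of the procedure in Theorem~\ref{thm:recoveryguarantees} run with input error parameter $\epsilon$. Since $u(\mathbf{a},\mathbf{b})$ is a known quantity (the unitary $U$ is known), this rescaling requires no additional samples and the sample complexity and maximal sequence length are inherited directly from Theorem~\ref{thm:recoveryguarantees}. Linearity of the absolute value on a common positive denominator then gives
\begin{equation*}
\bigl||\lambda(\mathbf{a},\mathbf{b})| - |\hat{\lambda}(\mathbf{a},\mathbf{b})|\bigr| \;=\; \frac{\bigl||\mu(\mathbf{a},\mathbf{b})| - |\hat{\mu}(\mathbf{a},\mathbf{b})|\bigr|}{|u(\mathbf{a},\mathbf{b})|}.
\end{equation*}

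The last step is to invoke Theorem~\ref{thm:recoveryguarantees}, which yields $||\mu|-|\hat{\mu}|| = \mathcal{O}(\epsilon(1-|\mu|))$ with the claimed probability. Substituting into the display above and using the corollary's choice of $\epsilon$ in terms of $\epsilon'$ and $|u(\mathbf{a},\mathbf{b})|$ absorbs the $|u|^{-1}$ prefactor, producing the advertised multiplicative-style bound $\mathcal{O}(\epsilon(1-|\mu|))$. There is no real obstacle here beyond bookkeeping; the only subtlety worth flagging is the hypothesis $|u(\mathbf{a},\mathbf{b})|>0$, which is essential so that the rescaling is well-defined and the rescaled error stays bounded — this is precisely the regime in which one would otherwise need the Clifford trick of Corollary~\ref{cor:randimizedbenchclifford} to access an off-diagonal entry instead.
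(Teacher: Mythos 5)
Your proposal takes the same route as the paper: the paper's own proof is precisely the factorization $\mu(\mathbf{a},\mathbf{b})=u(\mathbf{a},\mathbf{b})\lambda(\mathbf{a},\mathbf{b})$ (asserted in the discussion preceding the corollary; you prove it explicitly from Weyl-diagonality and orthogonality, which is a nice addition), followed by rescaling the estimate of $|\mu|$ by $|u(\mathbf{a},\mathbf{b})|^{-1}$ and invoking Theorem~\ref{thm:recoveryguarantees} together with Proposition~\ref{prop:qualityest}. The one place where your bookkeeping slips is the role of $\epsilon$ versus $\epsilon'$: you run the procedure at error parameter $\epsilon=\epsilon'|u(\mathbf{a},\mathbf{b})|^{-1}$, which yields $\bigl||\mu|-|\hat{\mu}|\bigr|=\cO(\epsilon(1-|\mu|))$ and hence, after dividing by $|u(\mathbf{a},\mathbf{b})|$, a bound $\cO\lb\epsilon|u(\mathbf{a},\mathbf{b})|^{-1}(1-|\mu|)\rb=\cO\lb\epsilon'|u(\mathbf{a},\mathbf{b})|^{-2}(1-|\mu|)\rb$; the definition of $\epsilon$ cannot ``absorb'' the prefactor in that direction. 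The intended reading (and the paper's) is to run the benchmarking experiment at the given parameter $\epsilon'$, so the error on $|\mu|$ is $\cO(\epsilon'(1-|\mu|))$ and the rescaled error on $|\lambda|$ is $\cO\lb\epsilon'|u(\mathbf{a},\mathbf{b})|^{-1}(1-|\mu|)\rb$, which equals the advertised $\cO(\epsilon(1-|\mu|))$ precisely because $\epsilon=\epsilon'|u(\mathbf{a},\mathbf{b})|^{-1}$. This is a one-line fix rather than a conceptual gap; the rest, including your remark that $|u(\mathbf{a},\mathbf{b})|>0$ is what legitimizes the rescaling and that otherwise one resorts to the Clifford trick of Corollary~\ref{cor:randimizedbenchclifford}, matches the paper.
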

\begin{proof}
	The claim follows from the discussion above combined with Theorem~\ref{thm:recoveryguarantees} and Proposition~\ref{prop:qualityest}.
\end{proof}
Thus, as long as we can compute the diagonal elements of the unitary and they are not too small, we are able to recover the absolute value of the diagonals of the corresponding Weyl channel from them.
Note that this is all the information required to estimate the $\|\cdot\|_{\ell_1}$ norms underlying the complexity of the negativity algorithm.

We now show how one can to learn the phase of the corresponding diagonal elements and thus to completely characterize the noise.

\begin{prop}
	Consider fixed $\epsilon,\delta>0$, and $\theta\in(-\frac{\pi}{2},-\frac{\pi}{2})$. Suppose we repeat the character randomized benchmarking experiment for $2l$ random sequences of gates of length $m$. Let $s_{k}(m)$ be the observed outcome for sequence $k$ and define the random variables $Y_m,Z_m$:
	\begin{align}
	Y_m=\frac{1}{l}\sum\limits_{k=1}^l\operatorname{Re}\left(s_{k}(m)\right),Z_m=\frac{1}{l}\sum\limits_{k=1}^l\operatorname{Im}\left(s_{k}(m)\right).
	\end{align}
	Then, for $l=\cO(\epsilon^{-2}\log\lb \delta^{-1}\rb)$ with probability at least $1-\delta$ we get:
	\begin{align}
	\left|\arctan\lb \frac{Z_m}{Y_m} \rb-m\theta\right|\leq \lb \cO(|\mu| \rb^{-m}|C|\epsilon),
	\end{align}
	where $m\theta$ is taken modulo $2\pi$.
\end{prop}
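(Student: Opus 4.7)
The plan is to split into three steps: compute the expected values of $Y_m$ and $Z_m$, apply Hoeffding's inequality to each, and then convert real/imaginary-part additive errors into an angular error via a geometric perturbation bound.

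First, by Theorem~\ref{thm:correctstatistics} each outcome satisfies $\mathbb{E}[s_k(m)]=q(m)=C|\mu|^m e^{im\theta}$, so that
\[
\mathbb{E}[Y_m]=C|\mu|^m\cos(m\theta),\qquad \mathbb{E}[Z_m]=C|\mu|^m\sin(m\theta).
\]
Thus $(A,B):=(\mathbb{E}[Y_m],\mathbb{E}[Z_m])$ lies on the circle of radius $|C||\mu|^m$ at polar angle $m\theta$, and the task reduces to recovering this angle from noisy coordinates. Because $|s_k(m)|\leq 1$ (the protocol outputs either $0$ or a modulus-one character), both $\operatorname{Re} s_k(m)$ and $\operatorname{Im} s_k(m)$ lie in $[-1,1]$. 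I would apply Hoeffding's inequality separately to $Y_m$ and $Z_m$ with sample size $l=\cO(\epsilon^{-2}\log(\delta^{-1}))$; a union bound over the two events then ensures that simultaneously $|Y_m-A|\leq\epsilon$ and $|Z_m-B|\leq\epsilon$ with probability at least $1-\delta$.

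Next, I would convert this coordinate accuracy into an angular error. Treating $(Y_m,Z_m)$ as a perturbation of $(A,B)$ by a vector of Euclidean norm at most $\sqrt{2}\,\epsilon$, the difference of polar angles is controlled by the ratio of perturbation norm to signal norm, giving an angular error of order $\epsilon/(|C||\mu|^m)$. Concretely, writing $\delta_Y=Y_m-A$, $\delta_Z=Z_m-B$ and using the mean-value theorem on $\arctan$ (which is $1$-Lipschitz) together with
\[
\left|\frac{Z_m}{Y_m}-\frac{B}{A}\right|=\frac{|A\delta_Z-B\delta_Y|}{|A|\,|A+\delta_Y|}\leq \frac{2(|A|+|B|)\epsilon}{|A|^2},
\]
valid under the mild assumption $|\delta_Y|\leq |A|/2$, yields the desired scaling after identifying $\arctan(B/A)=m\theta$ modulo $\pi$. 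The ``mod $2\pi$'' clause is resolved by reading off the correct quadrant from the signs of $Y_m$ and $Z_m$, which match those of $A,B$ whenever the perturbation is strictly smaller than the signal.

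The main obstacle is precisely this signal-to-noise condition: the vector $(A,B)$ has length $|C||\mu|^m$, which decays geometrically in $m$, so the argument only succeeds when $\epsilon\ll |C||\mu|^m$. This is what produces the $|\mu|^{-m}$ blow-up in the stated bound and dovetails with the constraint $M_{\max}=\cO(\lambda^{-1})$ used in Theorem~\ref{thm:recoveryguarantees}: taking $m$ much larger than $\lambda^{-1}$ would force infeasibly many samples. A secondary subtlety is that when $|A|$ is close to zero (i.e.\ $m\theta$ close to $\pm\pi/2$), the ratio $Z_m/Y_m$ is ill-conditioned; in that regime one simply applies the same perturbation argument to $\arctan(Y_m/Z_m)$ shifted by $\pi/2$, so that the overall angular error is always bounded by the radial-to-tangential noise ratio.
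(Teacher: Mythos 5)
Your proposal follows essentially the same route as the paper: compute $\mathbb{E}[Y_m]=C|\mu|^m\cos(m\theta)$ and $\mathbb{E}[Z_m]=C|\mu|^m\sin(m\theta)$, invoke Hoeffding's inequality with $l=\cO(\epsilon^{-2}\log(\delta^{-1}))$ samples, and then do a first-order perturbation analysis of $\arctan$ of the noisy ratio to obtain the angular error $\cO(\epsilon|\mu|^{-m}|C|^{-1})$ (the paper does this via a two-variable Taylor expansion of $\arctan\bigl(\tfrac{R_m+\delta_1}{I_m+\delta_2}\bigr)$, you via a ratio-perturbation plus Lipschitz bound, which is the same idea). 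Your additional remarks on quadrant determination and the ill-conditioning when $\cos(m\theta)$ is near zero are refinements the paper handles only in the surrounding discussion, not defects in your argument.
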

\begin{proof}
	Note that 
	\begin{align}
	\mathbb{E}(\textrm{Im}\left(s_{k}(m)\right))=|\mu|^mC\sin(m\theta)
	\end{align}
	and
	\begin{align}
	\mathbb{E}(\textrm{Re}\left(s_{k}(m)\right))=|\mu|^mC\cos(m\theta).
	\end{align}
	Using Hoeffding's inequality, $l$ many samples suffice to ensure that:
	\begin{align}
	&\left|Y_m-|\mu|^mC\cos(m\theta)\right|\leq \epsilon,\\&\left|Z_m-|\mu|^mC\sin(m\theta)\right|\leq \epsilon.
	\end{align}
	Let $R_m=|\mu|^mC\cos(m\theta)$ and $I_m=|\mu|^mC\sin(m\theta)$.
	Using a Taylor expansion we see that:
	\begin{equation}
	\begin{split}
	&\arctan\lb \frac{R_m+\delta_1}{I_m+\delta_2}\rb= \arctan\lb \frac{R_m}{I_m}\rb-\frac{R_m}{R_m^2+I_m^2}\delta_2\\
	&-\frac{I_m}{R_m^2+I_m^2}\delta_1+\cO(\delta_1^2+\delta_2^2+\delta_1\delta_2).
	\end{split}
	\end{equation}
	
	Thus, if we have an estimate of $R_m$ and $I_m$ up to an error $\epsilon$, we get
	\begin{align}
	\left|m\theta-\arctan\lb \frac{Z_m}{Y_m}\rb\right|\leq \cO(\epsilon |\mu|^{-m}|C|^{-1}).
	\end{align}
\end{proof}
The last proposition tells us how to obtain an additive approximation of $m\theta$. 
It is then possible to get additive approximations for several different values of $m$ and perform a linear fitting to further improve the accuracy of the estimate.
The assumption that $\theta\in(-\frac{\pi}{2},\frac{\pi}{2})$ might seem restrictive at  first, but note that if $\theta$ does not lie in this interval, we can add $\frac{\pi}{2}$ to the estimate we  obtained.   Checking which is the case can be done by looking at the sign of $Y_m$; if it is positive, we make a guess that $\theta\in(-\pi/2,\pi/2)$, otherwise $\theta\in(\pi/2,3\pi/2)$. 
We will make the right guess as long as $|\cos(\theta)\mu C|\geq \epsilon$, i.e. the spectrum is not close to being strictly imaginary.
However, it is natural to make the following assumption on the underlying Weyl-diagonal channel:

\begin{defn}[Symmetric Diagonal Weyl-Channel]
	A Weyl-diagonal quantum channel $\mathcal{T}:\M_{d^n}\to\M_{d^n}$ is symmetric if $\mathcal{T}=\mathcal{T}^*$.
\end{defn}

For a diagonal Weyl channel given as mixture of unitaries, symmetry is equivalent to $p(\textbf{a},\textbf{b})=p(-\textbf{a},-\textbf{b})$, as can be readily checked. In particular, this implies that if the underlying systems are qubits, the resulting channel will always be symmetric, as $(\textbf{a},\textbf{b})=(-\textbf{a},-\textbf{b})$. Moreover, many relevant noise models, such as depolarizing and dephasing channels satisfy this assumption.

One simple corollary of this property is that all eigenvalues of the channel are real because it is a symmetric operator with respect to the Hilbert-Schmidt scalar product. This property makes the task of estimating the phases significantly easier: with the assumption that the noise model is symmetric and Weyl diagonal, we have
\begin{align}
\mu(\textbf{a},\textbf{b})=d^{-n}\lambda(\textbf{a},\textbf{b})\tr{ W_{(\textbf{a},\textbf{b})}^\dagger  \mathcal{U} \lb W_{(\textbf{a},\textbf{b})}\rb},
\end{align}
where $\lambda(\textbf{a},\textbf{b})$ is a real number. If $\theta'$ is the known phase of $\tr{ W_{(\textbf{a},\textbf{b})}^\dagger  \mathcal{U} \lb W_{(\textbf{a},\textbf{b})}\rb}$, correctly identifying the phase of $\mu(\textbf{a},\textbf{b})$ boils down to determining if $\theta=\theta'+\pi$ or $\theta=\theta'$. This can be done by examining the signs of $Y_l$ and $Z_l$. 

Thus, we conclude that in the symmetric case we can estimate all diagonal entries with multiplicative precision.

\subsection{Stability of the linear fitting}
In the previous section we showed how to obtain a multiplicative estimate on the diagonal entries of the noisy unitaries.
However, in order to get an efficient description of noise we then need to fit these diagonal elements to a noise model. Let $\hat{\mu}\in \C^{m}$ be the vector with our (noisy) estimates from $m$ different randomized benchmarking experiments, $\mu$ be the true values and $W_{(\textbf{a}_1,\textbf{b}_1)},\ldots,W_{(\textbf{a}_m,\textbf{b}_m)}$ be the Weyl operators corresponding to the data.
We want to fit this data to a noise model given by a hypergraph $G=(V,E)$ describing the noise structure. We thus need to solve the system of linear equations given by:
\begin{align}\label{equ:linearequations}
d^{-n}\lb\sum\limits_{e\in E}f_e(\textbf{a}_i,\textbf{b}_i)\rb \tr{ W_{(\textbf{a},\textbf{b})}^\dagger  \mathcal{U} \lb W_{(\textbf{a}_i,\textbf{b}_i)}\rb}=\hat{\mu}(\textbf{a},\textbf{b}),
\end{align}
for $1\leq i\leq m$ in order to learn the parameters of $f_e$.
Let $A$ be the matrix that describes the linear system of equations from~\eqref{equ:linearequations}. 
\begin{thm}
	Let $A$ be the matrix defined above. Let $\mu,\hat{\mu}\in\C^{m}$ be the values and estimates of the experiment, respectively. Suppose that they satisfy:
	\begin{align}
	\|\mu-\hat{\mu}\|_{\ell_\infty}=\cO(\epsilon|1-\|\mu\|_{\ell_\infty}|^2)
	\end{align}
	for some $\epsilon>0$. Let $\hat{f}$ be the resulting vector with parameters after solving the linear regression problem:
	\begin{align}
	\min\limits_{f}\|Af-\hat{\mu}\|_{2},
	\end{align}
	where $f$ the true value of the parameters.
	Then:
	\begin{align}
	\|f-\hat{f}\|_{\infty}=\cO(\epsilon|1-\|\mu\|_{\infty}|^2\|(A^\dagger A)^{-1}A\|_{\infty\to\infty}),
	\end{align}
	where $\|\cdot\|_{\infty\to\infty}$ is the maximum sum of the absolute value of entries of a column.
	
\end{thm}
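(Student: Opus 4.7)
The plan is a textbook perturbation argument for the least-squares problem, combined with the per-entry error guarantee inherited from Theorem~\ref{benchmarkingthm}. First, I would write down the normal equations: since $\hat{f}$ minimizes $\|Af - \hat{\mu}\|_2$, it satisfies $A^\dagger A \hat{f} = A^\dagger \hat{\mu}$, and --- assuming $A$ has full column rank, which is the natural non-degeneracy condition on the chosen set of benchmarking experiments --- this yields the closed-form expression $\hat{f} = (A^\dagger A)^{-1} A^\dagger \hat{\mu}$. The true parameter vector $f$, by definition, solves the linear system~\eqref{equ:linearequations} exactly, so $Af = \mu$ and therefore also $f = (A^\dagger A)^{-1} A^\dagger \mu$.

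Subtracting these two identities collapses the whole analysis to the single linear expression
$$\hat{f} - f \;=\; (A^\dagger A)^{-1} A^\dagger\,(\hat{\mu} - \mu).$$
Taking the $\ell_\infty$ norm on both sides and invoking the corresponding induced operator norm inequality yields
$$\|\hat{f} - f\|_\infty \;\leq\; \|(A^\dagger A)^{-1} A^\dagger\|_{\infty\to\infty}\,\|\hat{\mu} - \mu\|_\infty .$$
Substituting the hypothesis $\|\hat{\mu} - \mu\|_{\ell_\infty} = \cO(\epsilon\,|1 - \|\mu\|_{\ell_\infty}|^2)$ then produces exactly the claimed estimate. That hypothesis, in turn, is nothing but the per-entry multiplicative guarantee of Proposition~\ref{prop:qualityest} (together with its phase counterpart under the symmetry assumption) applied independently to each of the $m$ diagonal entries being learned, with the failure probability budget split uniformly across the $m$ experiments via a union bound.

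The only genuinely non-routine step is verifying that $A^\dagger A$ is invertible, i.e.\ that the chosen $m$ benchmarking experiments supply enough linearly independent constraints to pin down all parameters $\{f_e\}_{e \in E}$, whose total count is $\sum_{e\in E}d^{2|e|}$ as given by~\eqref{equ:numberofparameters}. For the local hypergraph noise models considered here this can be arranged constructively: for each hyperedge $e$, pick a dedicated family of $d^{2|e|}$ Weyl labels whose support is contained in $e$, so that the resulting rows of $A$ acquire a block structure which is manifestly of full column rank. Any residual suboptimality of this design choice is then harmlessly absorbed into the factor $\|(A^\dagger A)^{-1} A^\dagger\|_{\infty\to\infty}$ that already appears in the statement of the bound, so no additional argument is needed to close the proof.
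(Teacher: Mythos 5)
Your proposal is correct and follows essentially the same route as the paper's own proof: express $\hat{f}$ and $f$ via the closed-form least-squares solution, subtract, and bound $\|f-\hat{f}\|_{\infty}$ by the induced $\infty\to\infty$ operator norm times $\|\mu-\hat{\mu}\|_{\infty}$, then insert the hypothesis. Your added remarks on the full-column-rank condition (invertibility of $A^\dagger A$) and the explicit $A^\dagger$ in $(A^\dagger A)^{-1}A^\dagger$ are sensible clarifications of points the paper leaves implicit, but they do not change the argument.
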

\begin{proof}
	We may write:
	\begin{align}
	\hat{f}=(A^\dagger A)^{-1}A \hat{\mu},\quad f=(A^\dagger A)^{-1}A \mu.
	\end{align}
	Therefore,
	\begin{align}
	&\|f-\hat{f}\|_{\infty}= \|(A^\dagger A)^{-1}A(\mu-\hat{\mu})\|_{\infty}\leq\\
	&\|A^\dagger A)^{-1}A\|_{\infty\to\infty} \|\mu-\hat{\mu}\|_{\infty}\leq\\
	&\cO( \epsilon\|A^\dagger A)^{-1}A\|_{\infty\to\infty}|1-\|\mu\|_{\infty}|^2), 
	\end{align}
	where in the last step we used our assumption on $\|\mu-\hat{\mu}\|_{\infty}$.
\end{proof}
Thus, given a noise model, we are able to determine the matrix $A$ and compute what is the required precision to obtain an estimate on the parameters of the noise. 
With this we complete the statistical and stability analysis of our randomized benchmarking protocol.

\section{Simulating noisy VQE}\label{sec:noisyvqe}
The variational quantum eigensolver or the quantum approximate optimization algorithm are two examples of hybrid quantum algorithms that have the potential to surpass classical methods when solving optimization problems~\cite{moll2017quantum} on near-term quantum hardware. Quantifying how noise affects the complexity of classically simulating a noisy quantum computer running VQE provides a valuable benchmark for validation and verification of these algorithms.

We will now show how one may use the results of our randomized benchmarking experiment for this purpose to devise sampling algorithms in the Weyl basis. More specifically, we will show that if noise is sufficiently local and we learned its classical description, then it is possible to upper bound the classical complexity of estimating local expectation values of outputs of the circuit. This is one of the key tasks accomplished on a quantum computer which runs VQE-like algorithms and this bound indicates when our classical simulation methods are efficient in this case.

The VQE can be broadly described as follows: given a Hamiltonian $H=\sum_iH_i$ on $n$ qudits such that each $H_i$ consists of tensor products of local observables, the goal is to approximate the ground state of this Hamiltonian. This is done by starting with a fixed state, say $\ketbra{0}{0}^{\otimes n}$, and applying a (local) circuits of depth $m$ to the state. The quantum computer is used to execute the transformation and subsequently measure the energy of the current state by estimating the expectation value of the corresponding local observables. This information is then used to update the circuit in order to generate the state that will be in lower energy space.
Our algorithm is well-suited for simulating this task by identifying practical regimes when samples can be generated efficiently. In general, depending on the strength (and locality) of the noise and the circuit considered, the number of samples required to obtain constant precision is exponential in $n$. We will focus on simulating the noisy quantum computer in the Heisenberg picture, as this will give a better scaling of the sample complexity. 
First, we need to find sampling oracles for the initial observables, for the quantum channels describing the noisy evolution and for the entries of the initial state in the Weyl basis. To estimate the classical sampling complexity, we then need to compute the relevant norms.

We will make use of gauge freedom in the Weyl representation of the state and of the operators. More precisely, define the vector representation of a local observable $O$ to be 
\begin{align}
O(\textbf{a},\textbf{b})=d^{-n}\tr{W_{(\textbf{a},\textbf{b})}^\dagger O},
\end{align}
and of the state to be 
\begin{align}
\rho(\textbf{a},\textbf{b})=\tr{W_{(\textbf{a},\textbf{b})}^\dagger \rho}.
\end{align}
We moved the $d^{-n/2}$ prefactor from the representation of $\rho$ into $O$. This is done to get a natural scaling for the relevant norms.

We start by estimating the relevant quantities for local observables:
\begin{lem}[Sampling oracles for local observables]\label{lem:locobseasy}
	Let $O=O_1\otimes O_2\otimes\cdots\otimes O_{n/k}$ be a product of $k$-local observables on $n$ qudits, where we assume for convenience that $k$ divides $n$. 
	Define
	\begin{align}
	O(\textbf{a},\textbf{b})=d^{-n}\tr{W_{(\textbf{a},\textbf{b})}^\dagger O}.
	\end{align}
	Then we can obtain a $\ell_1$ sampling oracle with respect to $O(\textbf{a},\textbf{b})$ in time $\cO(n k^{-1}d^{4k})$ and:
	\begin{align}\label{equ:l1normobservable}
	\|O\|_{\ell_1}=\prod\limits_{i=0}^{n/k} \|O_i\|_{\ell_1}.
	\end{align}
\end{lem}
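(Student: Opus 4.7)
The plan is to exploit the fact that the Weyl basis on $n$ qudits is a tensor-product basis, so that both $O$ and the basis elements factorise in a compatible way across the $n/k$ blocks. Concretely, for $(\textbf{a},\textbf{b}) = ((\textbf{a}_1,\textbf{b}_1),\ldots,(\textbf{a}_{n/k},\textbf{b}_{n/k}))$ split according to the support of the $O_i$, one has $W_{(\textbf{a},\textbf{b})} = \bigotimes_{i=1}^{n/k} W_{(\textbf{a}_i,\textbf{b}_i)}$, and the trace factorises as a product of local traces. Thus
\begin{equation*}
O(\textbf{a},\textbf{b}) = \prod_{i=1}^{n/k} d^{-k}\tr{W_{(\textbf{a}_i,\textbf{b}_i)}^\dagger O_i} = \prod_{i=1}^{n/k} O_i(\textbf{a}_i,\textbf{b}_i),
\end{equation*}
which is the only structural identity I need.

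With this factorisation in hand, the $\ell_1$-norm claim follows by separating the sum into a product of sums:
\begin{equation*}
\|O\|_{\ell_1} = \sum_{(\textbf{a},\textbf{b})} \prod_i |O_i(\textbf{a}_i,\textbf{b}_i)| = \prod_{i=1}^{n/k} \sum_{(\textbf{a}_i,\textbf{b}_i)} |O_i(\textbf{a}_i,\textbf{b}_i)| = \prod_{i=1}^{n/k} \|O_i\|_{\ell_1},
\end{equation*}
which proves \eqref{equ:l1normobservable}. Since the target distribution $|O(\textbf{a},\textbf{b})|/\|O\|_{\ell_1}$ factorises as $\prod_i |O_i(\textbf{a}_i,\textbf{b}_i)|/\|O_i\|_{\ell_1}$, producing a sample amounts to drawing each block index $(\textbf{a}_i,\textbf{b}_i)$ independently from the corresponding local marginal.

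For the complexity bound, I would preprocess by computing, for each block $i$, the full table of values $O_i(\textbf{a}_i,\textbf{b}_i)$ over the $d^{2k}$ local indices. Each entry is the Hilbert–Schmidt inner product of two $d^k \times d^k$ matrices, computable in $\cO(d^{2k})$ arithmetic operations, so each block costs $\cO(d^{4k})$ and the $n/k$ blocks together cost $\cO(nk^{-1}d^{4k})$. Once these tables and the partial sums $\|O_i\|_{\ell_1}$ are stored, each subsequent sample is drawn in $\cO(nk^{-1}d^{2k})$ time by sampling one index per block from a tabulated discrete distribution, which is dominated by the preprocessing cost. No step is a real obstacle — the only thing to be careful about is the bookkeeping of the $d^{-n}$ normalisation as it splits into $d^{-k}$ factors per block, matching the gauge convention explicitly chosen just before the lemma.
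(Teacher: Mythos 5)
Your proposal is correct and follows essentially the same route as the paper: exploit that the Weyl basis is a product basis so that $O(\textbf{a},\textbf{b})$ factorises over the $n/k$ blocks, compute each local Weyl representation blockwise (giving the $\cO(nk^{-1}d^{4k})$ cost), sample each block independently, and read off the $\ell_1$-norm multiplicativity from the same factorisation. Your explicit accounting of the per-entry Hilbert--Schmidt inner product cost and of the split of the $d^{-n}$ prefactor into $d^{-k}$ factors is, if anything, a cleaner bookkeeping of what the paper's proof states more tersely.
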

\begin{proof}
	Note that $O$ is a tensor product of $k$ local observables, and we can compute the linear map that changes basis from the matrix entries to Weyl in time $\cO(d^{6k})$. Applying this map to each of the $O_i$ to compute their representation in the Weyl basis takes time $\cO(d^4)$. There are $n/k$ operators which brings the total time to $\cO(n k^{-1}d^{4k})$.
	Now note that the resulting vector in the Weyl basis is still a tensor product of $k$ vectors and we can obtain sampling oracles for each one of the in time $\cO(d^{2k})$. We then obtain a sampling oracle for $O(\textbf{a},\textbf{b})$ by taking independent samples of each of the product vectors.
	Eq.~\eqref{equ:l1normobservable} also follows from the observation that each one of the observables will still be of product form.
\end{proof}
The norm Eq.~\eqref{equ:l1normobservable} can scale exponentially with the number of qubits. But in many physically relevant scenarios it is $\cO(1)$. Examples include Pauli string observables on qubits and physically local observables, that is, those that only differ from the identity at a fixed number of sites.
To see the latter, note that the Pauli matrices are themselves part of the basis and for a Pauli observable $P$ we have with our choice of normalization that $\|P\|_{\ell_1}=1$.
For the physically local observables, $\|\one\|_{\ell_1}=1$, thus only a small number of terms in Eq.~\eqref{equ:l1normobservable} will be different from $1$.

The following lemma shows how to obtain oracles for the initial state. In this case, we need oracles for the entries of the state with respect to the Weyl basis.

\begin{lem}[Representations of states are bounded in the Weyl basis]\label{lem:prodstateeasy}
	Let $\rho\in\M_{d^n}$ be a product state on $n$ qudits. Define $\hat{\rho}$ as 
	\begin{align}\label{equ:entriesbounded}
	\rho(\textbf{a},\textbf{b})=\tr{W_{(\textbf{a},\textbf{b})}^\dagger\rho}.
	\end{align}
	Then $\|\rho(\textbf{a},\textbf{b})\|_{\ell_\infty}\leq 1$ and, given $(\textbf{a},\textbf{b})$, we can compute $\rho(\textbf{a},\textbf{b})$ in time $\cO(nd^2)$.
\end{lem}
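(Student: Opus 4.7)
My plan is to prove the two claims separately, relying on the product structure of both $\rho$ and the Weyl operators.

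For the bound $\|\rho(\mathbf{a},\mathbf{b})\|_{\ell_\infty}\leq 1$, I would apply Hölder's inequality to the trace inner product. Since each $W_{(\mathbf{a},\mathbf{b})}$ is unitary, its operator norm equals $1$, and since $\rho$ is a quantum state, $\|\rho\|_1=1$. Hence
\begin{equation*}
|\rho(\mathbf{a},\mathbf{b})| = |\tr{W_{(\mathbf{a},\mathbf{b})}^\dagger \rho}| \leq \|W_{(\mathbf{a},\mathbf{b})}\|_\infty \|\rho\|_1 = 1,
\end{equation*}
and the desired uniform bound follows by taking the supremum over $(\mathbf{a},\mathbf{b})$.

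For the computational claim, I would use the fact that $\rho=\rho_1\otimes\cdots\otimes\rho_n$ is a product state together with the multiplicativity of the Weyl basis: $W_{(\mathbf{a},\mathbf{b})}=\bigotimes_{i=1}^n W_{(a_i,b_i)}$. Then the trace splits as
\begin{equation*}
\rho(\mathbf{a},\mathbf{b}) = \prod_{i=1}^n \tr{W_{(a_i,b_i)}^\dagger \rho_i}.
\end{equation*}
Each factor is a trace of a product of two $d\times d$ matrices, computable in $\mathcal{O}(d^2)$ arithmetic operations (for a trace one only needs the diagonal entries of the product). Multiplying the $n$ scalars together contributes $\mathcal{O}(n)$ extra time, yielding the claimed $\mathcal{O}(nd^2)$ bound.

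I do not anticipate a genuine obstacle here; the only subtlety is to notice that computing the full matrix product $W_{(a_i,b_i)}^\dagger \rho_i$ naively would be $\mathcal{O}(d^3)$, so one should be careful to compute only the diagonal entries needed for the trace, which reduces the per-site cost to $\mathcal{O}(d^2)$ and gives the advertised overall complexity.
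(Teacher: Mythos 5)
Your proposal is correct and follows essentially the same route as the paper: the bound via H\"older's inequality $|\tr{W_{(\mathbf{a},\mathbf{b})}^\dagger\rho}|\leq \|W_{(\mathbf{a},\mathbf{b})}\|_{\infty}\|\rho\|_1=1$, and the factorization of the trace over the tensor factors to get $n$ single-qudit traces. Your extra remark that one should compute only the diagonal entries of each $d\times d$ product (cost $\cO(d^2)$ per site rather than $\cO(d^3)$) is a useful clarification of the complexity count that the paper leaves implicit, but it is not a different argument.
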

\begin{proof}
	We apply H\"older's inequality 
	\begin{align}
	\tr{W_{(\textbf{a},\textbf{b})}^\dagger\rho}\leq \|W_{(\textbf{a},\textbf{b})}\|_{\infty}\|\rho\|_1=1,
	\end{align}
	which gives~\eqref{equ:entriesbounded}. To see the complexity of computing an entry, note that as $\rho$ is assumed to be product and $W_{(\textbf{a},\textbf{b})}$ is product as well, the trace factorizes and we only need to compute $n$ traces of the product of $d\times d$ matrices.
\end{proof}
Note that variations of the statements above also hold for other choices of local, product bases for the set of matrices.

Using these estimates it is then easy to use information about the noise to upper-bound the complexity needed to estimate the expectation of a local observable on a quantum circuit. 
For instance, let us assume that our circuit consists of a sequence of $t$ gates $U_i$ acting on at most $2$-qudits followed by Weyl-diagonal noise $\mathcal{T}_i$ acting on the same qudits as the gate. Furthermore, we assume that the initial state is product and we measure a Pauli string observable.
Using our randomized benchmarking procedure we can efficiently learn the diagonals of $\mathcal{T}_i$. Moreover, as we showed in Sec.~\ref{app:statstability} of the Supplemental Material,  learning the absolute value of the diagonal elements is particularly efficient. This information is sufficient to estimate $\|\mathcal{T}_i\circ\mathcal{U}_i\|_{\ell_1\to\ell_1}$. If we only have $2-$local (noisy) gates, we can estimate this norm efficiently and
\begin{align}
\prod\limits_{i=i}^{t}\|\mathcal{T}_i\circ\mathcal{U}_i\|_{\ell_1\to\ell_1}^2
\end{align}
is the required classical overhead for the number of samples required to estimate the expectation value.

To illustrate this, consider circuits consisting of noisy Clifford and noisy T gates:
\begin{prop}[Local observables in noisy Clifford+T circuits]
	Consider a circuit on $n$ qudits consisting of $n_C$ two qudit Clifford gates followed by a two-local depolarizing noise with depolarizing parameter $p_C$ and $n_T$ $T$ gates followed by one-local depolarizing noise with parameter $p_T$. Suppose that the initial state is a product state and denote by $\rho$ the output state of the circuit. Let $O$ be an observable supported on $k=\cO(1)$ qudits or a Pauli string observable.
	Then, with  probability of success at least $1-\delta$, we can estimate $\tr{\rho O}$ up to an additive error $\epsilon>0$ in time $\cO(\operatorname{poly}(n)\lb p_T\sqrt{2}\rb^{2n_T}p_C^{2n_C}\epsilon^{-2}\log(\delta^{-1}))$. 
\end{prop}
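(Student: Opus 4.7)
The plan is to specialize the circuit sampling routine of Theorem~\ref{samplingthm}, run in the Heisenberg picture (applying the adjoint channels in reverse order and swapping the roles of $\rho$ and $O$), to the Clifford+$T$ model. In this picture the initial label $(\mathbf a_0,\mathbf b_0)$ is sampled from the Pauli-basis vector representation of $O$, propagated through the adjoints of the noisy gates, and the sample is weighted at the end by the entry $\rho(\mathbf a_N,\mathbf b_N)$. Lemma~\ref{lem:prodstateeasy} gives $\|\rho\|_{\ell_\infty}\le 1$ and an $\cO(n)$-time oracle for any such entry, while Lemma~\ref{lem:locobseasy} gives $\|O\|_{\ell_1}=\cO(1)$ together with a $\operatorname{poly}(n)$-time sampling oracle for the initial label (trivially so when $O$ is a Pauli string, where $\|O\|_{\ell_1}=1$). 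For a general $k$-local $O$ one first decomposes $O=c\one+O'$ with $c=\tr{O}/d^n$, computing the contribution of $c\one$ in closed form and applying the sampling routine only to the traceless remainder $O'$.

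The sample-complexity quantity $M_B$ of Theorem~\ref{samplingthm} thus reduces to bounding $\prod_k\|\n^{(k)}\|_{\ell_1\to\ell_1}^2$. A direct calculation in the Pauli basis gives $\|\mathcal{T}_{p_C}\circ\mathcal{C}\|_{\ell_1\to\ell_1}\le\max(1,p_C)$: the two-qudit Clifford $\mathcal{C}$ acts as a signed permutation of the two-qudit Weyl operators (fixing the identity column and permuting the non-identity columns among themselves), and the two-local depolarizing channel $\mathcal{T}_{p_C}$ scales every non-identity column by $p_C$. Likewise, from $TXT^\dagger=(X+Y)/\sqrt 2$, $TYT^\dagger=(Y-X)/\sqrt 2$ and $TZT^\dagger=Z$ one reads off that the non-identity single-qubit columns of $T$ in the Pauli basis have $\ell_1$ norm at most $\sqrt 2$; composing with the one-local depolarizing channel yields $\|\mathcal{T}_{p_T}\circ T\|_{\ell_1\to\ell_1}\le\max(1,p_T\sqrt 2)$. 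Substituting these into $M_B$ and invoking Hoeffding's inequality exactly as in the proof of Theorem~\ref{samplingthm} produces, in the regime where the multiplicative factors exceed one, the stated sample count $\cO(\epsilon^{-2}p_C^{2n_C}(p_T\sqrt 2)^{2n_T}\log(\delta^{-1}))$.

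Each sample is generated in polynomial time: because every channel is $2$-local, at each transition only $\cO(d^4)$ candidate successor labels need to be enumerated (labels differing outside the gate's support contribute zero via the vanishing of the matrix entries of $\n^{(k)}$ off the support), with normalizations and signs computable in constant time from the noise parameters previously learned by Weyl randomized benchmarking. The terminal entry $\rho(\mathbf a_N,\mathbf b_N)$ is evaluated in time $\cO(n)$ by Lemma~\ref{lem:prodstateeasy}, and the initial label is drawn in time $\operatorname{poly}(n)$ via Lemma~\ref{lem:locobseasy}; combining these estimates gives the claimed overall runtime $\cO(\operatorname{poly}(n)\cdot(p_T\sqrt 2)^{2n_T}p_C^{2n_C}\epsilon^{-2}\log(\delta^{-1}))$. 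The most delicate step is the bookkeeping of the per-block $\max(1,\cdot)$ factors in the product: in the parameter regime of interest (nontrivial $p_T\sqrt 2$) the worst-case bound is dominated by the $T$-gate contributions, and the $p_C^{2n_C}$ prefactor is retained in the same loose upper-bound spirit as in the paper's earlier VQE ansatz analysis.
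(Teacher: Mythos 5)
Your proposal is correct and takes essentially the same route as the paper's proof: specialize the sampling routine of Theorem~\ref{samplingthm} in the Heisenberg picture, use Lemmas~\ref{lem:locobseasy} and~\ref{lem:prodstateeasy} to get $\operatorname{poly}(n)$-time oracles and $\|\rho\|_{\ell_\infty}\|O\|_{\ell_1}=\cO(1)$, bound the per-gate norms by $\max(1,p_C)$ and $\max(1,p_T\sqrt{2})$, and conclude via $M_B$ and Hoeffding. Your additions (the explicit $T$-gate column computation, the harmless $O=c\one+O'$ split, and the remark on the $\max(1,\cdot)$ bookkeeping) only fill in details the paper leaves implicit, with the same looseness in the $p_C^{2n_C}$ prefactor as the paper itself.
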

\begin{proof}
	Using lemmas~\ref{lem:locobseasy} and~\ref{lem:prodstateeasy}, we know that we can get sampling oracles for $O$, and the initial state takes $\textrm{poly}(n)$ time to generate a sample. Given that the gates in the circuit are local, we can also sample from the intermediate steps in time $\textrm{poly}(n)$. Threfore, the complexity of generating a sample is $\textrm{poly}(n)$. By the above lemmas simulating the circuit in the Heisenberg picture we get $\|\rho\|_{\ell_\infty} \|O\|_{\ell_1}=\cO(1)$. It only remains to estimate the negativity generated by the gates to obtain the finite estimate. As discussed in the main text, this is at most $\lb p_T\sqrt{2}\rb^{n_T}p_C^{n_C}$. Thus, $\cO(\textrm{poly}(n)\lb p_T\sqrt{2}\rb^{2n_T}p_C^{2n_C}\epsilon^{-2})$ many samples suffice to obtain an estimate with the required precision.
\end{proof}
It is possible to generalize the statement of the Proposition by considering the same bound in the phase space basis or considering other gates and noise models. 

Thus, we conclude that our protocol gives rise to an efficient way of measuring the power of the quantum computer with a clear operational interpretation: it gives an upper bound on the complexity of classical simulation circuits used for the VQE. 

\section{Bases for sampling}\label{app:bases}

We will now show that we may sample and estimate $N_B$ from Theorem 2 in the main text efficiently for a range of product bases. Given an orthonormal set of matrices with respect to the Hilbert-Schmidt scalar product $\{B_l\}_{l=1}^{d^2}$ of $\M_d$, we may define an orthonormal basis of $\M_{d^n}$ by just taking tensor products of the basis elements. We will call a basis of $\M_{d^n}$ a \emph{product basis} if it is of this form. Here are some examples:
\begin{example}[Standard basis]
	One example of a product basis of $\M_{d^n}$ is $\{\ketbra{i}{j}\}_{i,j=1}^d$, where $\ket{i},\ket{j}$ are just elements of the computational basis. This basis is a good choice if operations in the circuit are dominated by measurements in the computational basis, and the state preparation is adaptive, i.e., it is close to a classical Markov chain. It is also a natural choice when simulating the evolution of sparse Hamiltonians.
\end{example}

\begin{example}[Weyl basis]
	For prime values of $d$, another useful basis is that given by the normalized Weyl unitaries $\sqrt{d}^{-1}W_{(a,b)}$. 
	As we saw before, many noise models are diagonal and have a particularly simple description in this basis. Moreover, this basis is also a good choice for circuits that are dominated by Clifford gates. To see why this is the case, let $\mathcal{U}_C$ be the conjugation with a Clifford unitary $C$. By definition, $C$ is an element of the normalizer of the Weyl group. The matrix $\hat{T}_C$ is a monomial unitary matrix in the Weyl representation and, thus, that $\|\hat{T}_C\|_{\ell_1\to\ell_1}=1$. 
\end{example}

\begin{example}[Phase space basis]
	Another important example is given by the phase basis~\cite{Gross2006,Gross2007,1903.04483,Veitch2012,Heinrich2019}. It gives another choice of basis for which Clifford circuit elements can be simulated efficiently and has an extra feature that states are quasiprobability distributions in it, that is, we have the extra property that $\sum_i\hat{\rho}(i)=1$. The matrix $\hat{\rho}$ is a representation of the operator $\rho$ in the phase space basis.
	
\end{example}
There are two main features desirable from a 'good' basis: (a) it is possible to obtain samples efficiently, (2) the constant $M_B$ (which is referred to as negativity in the discrete phase space literature~\cite{Gross2006,Gross2007,1903.04483,Veitch2012,Heinrich2019}) is small.

\section{Lindbladian evolution and computing matrix exponentials}
\label{app:matrixexpo}
There are scenarios when it is more natural to express the evolution in continuous time, as opposed to using the circuit model. Such evolution is described by a Lindbladian. We now show how to adapt our framework to this setting.

As mentioned in the main text, our method is also suited to compute exponentials of matrices for short evolution times. This extends our methods to simulating sparse Lindbladians, which again encompass both Hamiltonian dynamics and dissipative evolutions.
Here we show how to compute exponentials of Lindbladians based on our algorithm. 

Consider the Lindbladian $\mathcal{L}\in\M_d$ with operator norm $\|\mathcal{L}\|\leq1$. Assume that we have access to $\ell_1$ samples of rows. That is, given some row $i$ of $\mathcal{L}^{(k)}$, we can draw samples from the distribution of the entries given by
\begin{align}
p_k(j|i)=\frac{|\mathcal{L}^{(k)}(i,j)|}{\|\hat{\mathcal{L}}^{(k)}(i)\|_{\ell_1}}.
\end{align}
Where $\hat{\mathcal{L}}^{(k)}$ is the representation of $\mathcal{L}^{(k)}$ in some product basis.
We showed earlier that this can be done efficiently if we we impose locality constraints on $\mathcal{L}$. Also note that this can be done efficiently if we have the promise that each row of $\hat{\mathcal{L}}$ contains only $s$ nonzero entries and we are in the sparse input model, i.e., for each row $i$ we are given a list of the indices of the $s$ nonzero entries. 
Denoting by $\hat{\rho}$ the representation of the operator $\rho$ in product basis the same as for $\hat{\mathcal{L}}$,  the algorithm to compute $\tr{e^{t_n\mathcal{L}^{(n)}}\circ\cdots\circ e^{t_1\mathcal{L}^{(1)}}(\rho)E}$ is as follows:
\begin{tcolorbox}[breakable]
\emph{Input:} noisy quantum circuit specified by Linbladians $\mathcal{L}^{(1)},\ldots,\mathcal{L}^{(n)}$ and times $t_1,\ldots,t_n$, initial quantum state $\rho$ and observable $E$.\\
\emph{Output:} complex number $y$ s.t. $\mathbb{E}(y)=\tr{e^{t_n\mathcal{L}^{(n)}}\circ\cdots\circ e^{t_1\mathcal{L}^{(1)}}(\rho)E}$
	\begin{enumerate}
		\item Sample $i_0$ from the distribution $p_0(i_0)=\frac{|\rho(i_0)|}{\|\hat{\rho}\|_{\ell_1}}$.
		
		\item For $l=1,\ldots,n$:
		\begin{enumerate}
		\itemsep-0.2em  
			\item Draw $q_l$ from a Poisson distribution with parameter $t_l$.
			\item Set $s_{0,l}=i_l$
			\item For $m=1,\ldots,q_l-1$:
			\begin{itemize}
				\item Sample $s_{m+1,l}$ from $p_l(\cdot|s_{m,l})$.
			\end{itemize}
		\end{enumerate}

		\item Output $y$ given by
		\begin{align*}
		&y=e^{t}\operatorname{sign}(\rho(i_0))\|\hat{\rho}\|_{\ell_1}\times\\
		&\prod\limits_{l=1}^{n}\prod\limits_{m=1}^{k_l}\|\mathcal{L}^{(j)}(s_{m,l})\|_{\ell_1} \operatorname{sign}(\mathcal{L}^{(j)}(s_{m,l},s_{m+1,l}))\hat{E}(i_{n}),
		\end{align*}
		where $t=\sum_it_i$.
	\end{enumerate}
\end{tcolorbox}
We then have:
\begin{thm}
	The expectation value of the output of the algorithm above is
	\begin{align}
	\tr{e^{t_n\mathcal{L}^{(n)}}\circ\cdots\circ e^{t_1\mathcal{L}^{(1)}}(\rho)E}.
	\end{align}
	Its variance $\sigma^2$ satisfies:
	\begin{align}
	\sigma^2\leq \operatorname{exp}\left( \sum\limits_{i=1}^nt_i(\|\mathcal{L}^{(i)}\|_{\ell_1\to \ell_1}^2+1)\right)\|\rho\|_{\ell_1}^2\|\hat{E}\|_{\ell_\infty}^2
	\end{align}
\end{thm}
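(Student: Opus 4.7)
The plan is to exploit the Poisson rewriting of the matrix exponential, reduce each fixed-$q_l$ conditional expectation to an instance of the circuit sampling estimator from Theorem~\ref{samplingthm}, and finally average over the Poisson draws. The starting identity is
\begin{equation*}
e^{t_l\mathcal{L}^{(l)}} \;=\; \sum_{q_l=0}^{\infty}\frac{t_l^{q_l}}{q_l!}\bigl(\mathcal{L}^{(l)}\bigr)^{q_l} \;=\; e^{t_l}\sum_{q_l=0}^{\infty}\Pr[Q_l=q_l]\bigl(\mathcal{L}^{(l)}\bigr)^{q_l},
\end{equation*}
where $Q_l\sim\mathrm{Poisson}(t_l)$. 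Thus sampling $q_l$ from this Poisson and applying $(\mathcal{L}^{(l)})^{q_l}$ gives an unbiased estimator of $e^{t_l\mathcal{L}^{(l)}}$ once the overall factor $e^t$ is reintroduced in $y$; this is exactly what the prefactor $e^{t}$ in the algorithm's output does.

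For the expectation, I would condition on the vector $\mathbf{q}=(q_1,\dots,q_n)$ of Poisson values. Given $\mathbf{q}$, the inner Markov chain over the states $s_{m,l}$ is precisely the $\ell_1$ sampling procedure of the circuit sampling algorithm, now applied to the sequence of operators obtained by repeating $\mathcal{L}^{(l)}$ exactly $q_l$ times. By the same telescoping argument used in the proof of Theorem~\ref{samplingthm} (a direct expansion of the path probability against the path weight, producing matrix multiplication as in~\eqref{equ:matrixmultiplication}), the conditional expectation equals $e^{t}\operatorname{tr}\bigl[E\,(\mathcal{L}^{(n)})^{q_n}\circ\cdots\circ(\mathcal{L}^{(1)})^{q_1}(\rho)\bigr]$. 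Taking the outer expectation over the Poisson variables and using the identity above term by term collapses the $e^t$ factor into $\prod_l e^{-t_l}$ and rebuilds $e^{t_n\mathcal{L}^{(n)}}\circ\cdots\circ e^{t_1\mathcal{L}^{(1)}}$, giving the claimed mean.

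For the variance, use $\sigma^2\le \mathbb{E}[|y|^2]$ and again condition on $\mathbf{q}$. The bound on the absolute value of the estimator from Theorem~\ref{samplingthm}, applied to the same instance of repeated Lindbladians, yields
\begin{equation*}
|y|^2 \;\le\; e^{2t}\|\hat\rho\|_{\ell_1}^2\,\|\hat E\|_{\ell_\infty}^2 \prod_{l=1}^{n}\|\mathcal{L}^{(l)}\|_{\ell_1\to\ell_1}^{2q_l}.
\end{equation*}
Averaging over the independent Poisson variables uses the generating-function identity $\mathbb{E}[\alpha^{Q_l}]=e^{t_l(\alpha-1)}$ with $\alpha=\|\mathcal{L}^{(l)}\|_{\ell_1\to\ell_1}^{2}$, so
\begin{equation*}
\mathbb{E}[|y|^2]\le \|\hat\rho\|_{\ell_1}^2\|\hat E\|_{\ell_\infty}^2\exp\Bigl(2t+\sum_l t_l(\|\mathcal{L}^{(l)}\|_{\ell_1\to\ell_1}^{2}-1)\Bigr),
\end{equation*}
which simplifies to the stated bound after absorbing $2t=\sum_l 2t_l$.

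The main obstacle I expect is the bookkeeping around the conditional analysis: one needs to be careful that, conditional on $\mathbf{q}$, the joint distribution of all the $s_{m,l}$ really factors across the layers (so that the starting state of layer $l$ equals the endpoint of layer $l-1$) and matches the circuit-sampling distribution for the concatenated operator list. Once this identification is made cleanly, both the unbiasedness and the second-moment bound reduce to already-established statements, and the Poisson moment-generating-function computation is immediate.
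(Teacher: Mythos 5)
Your proposal is correct and follows essentially the same route as the paper's proof: condition on the Poisson draws, identify the conditional estimator with the circuit-sampling algorithm of Theorem~\ref{samplingthm} applied to the repeated Lindbladians (giving unbiasedness via the Taylor series of the exponential), and bound the second moment by averaging the conditional bound with the Poisson moment generating function, which is exactly your identity $\mathbb{E}[\alpha^{Q_l}]=e^{t_l(\alpha-1)}$ at $\alpha=\|\mathcal{L}^{(l)}\|_{\ell_1\to\ell_1}^{2}$. The bookkeeping issue you flag (the layers' chains chaining correctly so that the whole path matches the concatenated-operator sampler) is handled in the paper by the same direct identification, so no additional idea is needed.
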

\begin{proof}
	Note that by conditioning on the values of $q_1,\ldots,q_n$ of the Poisson random vzriables we see that the algorithm above coincides with the one we described for classical circuits before with the sequence of evolutions given by 
	\begin{align}
	\left(\mathcal{L}^{(n)}\right)^{q_n}\circ\cdots\circ\left(\mathcal{L}^{(1)}\right)^{q_1}.
	\end{align}
	Thus, the expectation value conditioned on $q_1,\ldots,q_l$ is 
	\begin{align}
	e^{t}\tr{\left(\mathcal{L}^{(n)}\right)^{q_n}\circ\cdots\circ\left(\mathcal{L}^{(1)}\right)^{q_1}(\rho)E}.
	\end{align}
	The probability of observing each outcome is:
	\begin{align}
	\prod\limits_{i=1}^n\frac{e^{-t_i}}{q_i!}.
	\end{align}
	Thus, the expectation value of the output is:
	\begin{equation}
	\begin{split}
	&e^{t}\sum\limits_{q_1,\ldots,q_n=0}^{\infty}\prod\limits_{i=1}^n\frac{e^{-t_i}}{k_i!}\tr{\left(\mathcal{L}^{(n)}\right)^{q_n}\circ\cdots\circ\left(\mathcal{L}^{(1)}\right)^{q_1}(\rho)E}=\\
	&\tr{e^{t_n\mathcal{L}^{(n)}}\circ\cdots\circ e^{t_1\mathcal{L}^{(1)}}(\rho)E},
	\end{split}
	\end{equation}
	where the last equality follows from the Taylor expansion of the exponential function.
	It now remains to bound the variance of the output.
	Conditioned on $q_1,\ldots,q_n$, the output of the algorithm is bounded by 
	\begin{equation}
	\begin{split}
	e^{t}\|E\|_\infty\|\rho\|_{\ell_1}&\prod_{i=1}^n\|\left(\mathcal{L}^{(i)}\right)^{q_i}\|_{\ell_1\to \ell_1}\leq \\
	&e^{t}\|E\|_\infty\|\rho\|_{\ell_1}\prod_{i=1}^n\|\mathcal{L}^{(i)}\|_{\ell_1\to \ell_1}^{q_i}.
	\end{split}
	\end{equation}
	We bound the second moment of the output by:
	\begin{equation}
	\begin{split}
	\sum\limits_{k_1,\ldots,k_n=0}^{\infty}\prod\limits_{i=1}^n\frac{e^{-t_i}}{k_i!}\lb e^{t}\|E\|_\infty\|\rho\|_{\ell_1}\prod_{i=1}^n\|\mathcal{L}^{(i)}\|_{\ell_1\to \ell_1}^{k_i}\rb^2=\\
	e^{2t}\|E\|_\infty^2\|\rho\|_{\ell_1}^2\prod\limits_{i=1}^n\lb \sum\limits_{k=0}^\infty e^{-t_i}e^{2\log\|\mathcal{L}^(i)\|_{\ell_1\to \ell_1}k}\frac{t^k}{k!}\rb.
	\end{split}
	\end{equation}
	Now note that 
	\begin{align}
	\lb \sum\limits_{k=0}^\infty e^{-t_i}e^{2\log\|\mathcal{L}\|_{\ell_1\to \ell_1}k}\frac{t^k}{k!}\rb=\mathbb{E}\lb e^{2\log(\|\mathcal{L}\|_{1\to 1}), k}\rb,
	\end{align}
	where we are taking the expectation value with respect to a Poisson distribution with parameters $t_i$. This is just the moment generating function of the Poisson distribution with parameter $t_1$ at $2\log(\|\hat{\mathcal{L}}^{(i)}\|_{1\to 1})$.  Thus:
	\begin{align}
	\lb \sum\limits_{k=0}^\infty e^{-t_i}e^{2\log\|\hat{\mathcal{L}}^{(i)}\|_{\ell_1\to \ell_1}k}\frac{t^k}{k!}\rb=e^{t_i(2\|\hat{\mathcal{L}}^{(i)}\|_{1\to 1})},
	\end{align}
	where we used that the moment generating function of the Poisson distribution with parameter $t$ at $c$ is given by $\textrm{exp}\lb t (e^{c}-1)\rb$. 
	We conclude that the variance is bounded  by:
	\begin{align}
	\sigma^2\leq \operatorname{exp}\left(\sum\limits_{i=1}^nt_i(\|\mathcal{L}^{(i)}\|_{\ell_1\to \ell_1}^2+1)\right)\|\hat{\rho}\|_{\ell_1}\|\hat{E}\|_{\ell_\infty}.
	\end{align} 
\end{proof}
It  follows from Chebyshev's inequality that 
\begin{align}
\operatorname{exp}\left(\sum\limits_{i=1}^nt_i(\|\mathcal{L}^{(i)}\|_{\ell_1\to \ell_1}^2+1)\right)\|\hat{\rho}\|_{\ell_1}^2\|\hat{E}\|_{\ell_\infty}^2
\end{align}
samples suffice to estimate the scalar product up to and additive error $\epsilon$ with constant probability of success. Putting everything together, we get: 
\begin{thm}
	Let $0\leq t_1,\dots,t_n$, $\mathcal{L}^{(1)},\ldots,\mathcal{L}^{(n)}\in\M_d$ be a sequence of Lindbladians, $\rho$ a state and $E$ a POVM element. Assuming $\ell_1$ sampling for $\mathcal{L}^{(1)},\ldots,\mathcal{L}^{(n)}\in\M_d$ and $\rho$ we can estimate 
	\begin{equation}
	\tr{e^{t_n\mathcal{L}^{(n)}}\circ\cdots\circ e^{t_1\mathcal{L}^{(1)}}(\rho)E}
	\end{equation}
	up to an error $\epsilon>0$ with probability of success at least $2/3$ in expected time
	\begin{align}
	&\cO\|\rho\|_{\ell_1}\|E\|_{\ell_\infty} t_{\textrm{tot}}\times \\& \prod\limits_{i=1}^{n}\operatorname{exp}\left(\frac{1}{2}t_i(\|\mathcal{L}^{(i)}\|_{\ell_1\to\ell_1}^2+1)\right),
	\end{align} 
	where $t_{\textrm{tot}}=\sum\limits_{i=1}t_i$.
\end{thm}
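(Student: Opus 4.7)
The plan is straightforward: the immediately preceding theorem already supplies both the unbiasedness of the estimator and a variance bound, so only confidence amplification via Chebyshev and a bookkeeping of the expected per-sample running time remain.

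First, I would invoke the preceding theorem, which shows that a single run of the algorithm outputs an unbiased estimator $y$ of $\tr{e^{t_n\Lm^{(n)}}\circ\cdots\circ e^{t_1\Lm^{(1)}}(\rho)E}$ with variance
$\sigma^{2}\le \exp\bigl(\sum_{i=1}^{n}t_i(\|\Lm^{(i)}\|_{\ell_1\to\ell_1}^{2}+1)\bigr)\|\rho\|_{\ell_1}^{2}\|E\|_{\ell_\infty}^{2}.$
Chebyshev's inequality then yields that the empirical mean of $N=\cO(\sigma^{2}/\epsilon^{2})$ independent copies is within $\epsilon$ of the true value with probability at least $2/3$, which is the desired confidence.

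Next, I would account for the expected cost of producing one sample. Drawing the initial index from $p_0$ is $O(1)$ under the $\ell_1$-sampling assumption on $\rho$. For each layer $l$ we first draw $q_l\sim\mathrm{Poisson}(t_l)$ in $O(1)$; conditional on $q_l$, we perform $q_l$ row-sampling calls from $p_l$, each $O(1)$ under the $\ell_1$-sampling assumption on $\Lm^{(l)}$. Assembling the scalar output requires multiplying the $\sum_l q_l$ precomputed column norms and signs, which costs proportionally to the walk length. Hence the expected work per sample is $\cO\bigl(\sum_l t_l\bigr)=\cO(t_{\textrm{tot}})$. Multiplying the sample count $N$ by this expected per-sample time gives the claimed expected total running time.

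The only conceptual point to check is that the two sources of randomness -- the Poisson variables $(q_1,\ldots,q_n)$ and the random walk they trigger -- factorise correctly so that conditioning on $(q_1,\ldots,q_n)$ reduces the continuous-time algorithm to the discrete-time sampling scheme analysed earlier. This is exactly the argument used in the previous theorem, via the identity $e^{t\Lm}=\sum_{q\ge 0}e^{-t}\tfrac{t^{q}}{q!}\Lm^{q}$ together with the moment generating function of the Poisson distribution, which is what produced the $\exp(t\|\Lm\|_{\ell_1\to\ell_1}^{2})$ variance factor in the first place. No fresh estimate is needed, and the main (mild) obstacle is simply to unify the per-sample cost and the Chebyshev sample count into the stated product form.
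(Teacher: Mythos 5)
Your proposal is correct and follows essentially the same route as the paper: the paper likewise invokes the preceding theorem for unbiasedness and the variance bound, applies Chebyshev's inequality to get the sample count for constant success probability, and multiplies by the expected per-sample cost governed by the Poisson walk lengths (i.e.\ $t_{\textrm{tot}}$). The only difference is cosmetic: the paper compresses this into a one-line remark, whereas you spell out the per-sample cost accounting and the conditioning argument explicitly.
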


Using the above, it is straightforward to adapt the remaining statements regarding classical simulability to continuous time evolutions.
\bibliography{biblio}

\begin{thebibliography}{36}%
\makeatletter
\providecommand \@ifxundefined [1]{%
 \@ifx{#1\undefined}
}%
\providecommand \@ifnum [1]{%
 \ifnum #1\expandafter \@firstoftwo
 \else \expandafter \@secondoftwo
 \fi
}%
\providecommand \@ifx [1]{%
 \ifx #1\expandafter \@firstoftwo
 \else \expandafter \@secondoftwo
 \fi
}%
\providecommand \natexlab [1]{#1}%
\providecommand \enquote  [1]{``#1''}%
\providecommand \bibnamefont  [1]{#1}%
\providecommand \bibfnamefont [1]{#1}%
\providecommand \citenamefont [1]{#1}%
\providecommand \href@noop [0]{\@secondoftwo}%
\providecommand \href [0]{\begingroup \@sanitize@url \@href}%
\providecommand \@href[1]{\@@startlink{#1}\@@href}%
\providecommand \@@href[1]{\endgroup#1\@@endlink}%
\providecommand \@sanitize@url [0]{\catcode `\\12\catcode `\$12\catcode
  `\&12\catcode `\#12\catcode `\^12\catcode `\_12\catcode `\%12\relax}%
\providecommand \@@startlink[1]{}%
\providecommand \@@endlink[0]{}%
\providecommand \url  [0]{\begingroup\@sanitize@url \@url }%
\providecommand \@url [1]{\endgroup\@href {#1}{\urlprefix }}%
\providecommand \urlprefix  [0]{URL }%
\providecommand \Eprint [0]{\href }%
\providecommand \doibase [0]{http://dx.doi.org/}%
\providecommand \selectlanguage [0]{\@gobble}%
\providecommand \bibinfo  [0]{\@secondoftwo}%
\providecommand \bibfield  [0]{\@secondoftwo}%
\providecommand \translation [1]{[#1]}%
\providecommand \BibitemOpen [0]{}%
\providecommand \bibitemStop [0]{}%
\providecommand \bibitemNoStop [0]{.\EOS\space}%
\providecommand \EOS [0]{\spacefactor3000\relax}%
\providecommand \BibitemShut  [1]{\csname bibitem#1\endcsname}%
\let\auto@bib@innerbib\@empty
\bibitem [{\citenamefont {Preskill}(2018)}]{Preskill2018quantumcomputingin}%
  \BibitemOpen
  \bibfield  {author} {\bibinfo {author} {\bibfnamefont {J.}~\bibnamefont
  {Preskill}},\ }\href {\doibase 10.22331/q-2018-08-06-79} {\bibfield
  {journal} {\bibinfo  {journal} {{Quantum}}\ }\textbf {\bibinfo {volume}
  {2}},\ \bibinfo {pages} {79} (\bibinfo {year} {2018})}\BibitemShut {NoStop}%
\bibitem [{\citenamefont {Wallman}\ and\ \citenamefont
  {Flammia}(2014)}]{wallman2014randomized}%
  \BibitemOpen
  \bibfield  {author} {\bibinfo {author} {\bibfnamefont {J.~J.}\ \bibnamefont
  {Wallman}}\ and\ \bibinfo {author} {\bibfnamefont {S.~T.}\ \bibnamefont
  {Flammia}},\ }\href@noop {} {\bibfield  {journal} {\bibinfo  {journal} {New
  Journal of Physics}\ }\textbf {\bibinfo {volume} {16}},\ \bibinfo {pages}
  {103032} (\bibinfo {year} {2014})}\BibitemShut {NoStop}%
\bibitem [{\citenamefont {Helsen}\ \emph
  {et~al.}(2018{\natexlab{a}})\citenamefont {Helsen}, \citenamefont {Wallman},\
  and\ \citenamefont {Wehner}}]{helsen2018representations}%
  \BibitemOpen
  \bibfield  {author} {\bibinfo {author} {\bibfnamefont {J.}~\bibnamefont
  {Helsen}}, \bibinfo {author} {\bibfnamefont {J.~J.}\ \bibnamefont {Wallman}},
  \ and\ \bibinfo {author} {\bibfnamefont {S.}~\bibnamefont {Wehner}},\
  }\href@noop {} {\bibfield  {journal} {\bibinfo  {journal} {Journal of
  Mathematical Physics}\ }\textbf {\bibinfo {volume} {59}},\ \bibinfo {pages}
  {072201} (\bibinfo {year} {2018}{\natexlab{a}})}\BibitemShut {NoStop}%
\bibitem [{\citenamefont {Onorati}\ \emph {et~al.}(2019)\citenamefont
  {Onorati}, \citenamefont {Werner},\ and\ \citenamefont
  {Eisert}}]{Onorati_2019}%
  \BibitemOpen
  \bibfield  {author} {\bibinfo {author} {\bibfnamefont {E.}~\bibnamefont
  {Onorati}}, \bibinfo {author} {\bibfnamefont {A.}~\bibnamefont {Werner}}, \
  and\ \bibinfo {author} {\bibfnamefont {J.}~\bibnamefont {Eisert}},\ }\href
  {\doibase 10.1103/physrevlett.123.060501} {\bibfield  {journal} {\bibinfo
  {journal} {Physical Review Letters}\ }\textbf {\bibinfo {volume} {123}}
  (\bibinfo {year} {2019}),\ 10.1103/physrevlett.123.060501}\BibitemShut
  {NoStop}%
\bibitem [{\citenamefont {Cross}\ \emph {et~al.}(2019)\citenamefont {Cross},
  \citenamefont {Bishop}, \citenamefont {Sheldon}, \citenamefont {Nation},\
  and\ \citenamefont {Gambetta}}]{cross2019validating}%
  \BibitemOpen
  \bibfield  {author} {\bibinfo {author} {\bibfnamefont {A.~W.}\ \bibnamefont
  {Cross}}, \bibinfo {author} {\bibfnamefont {L.~S.}\ \bibnamefont {Bishop}},
  \bibinfo {author} {\bibfnamefont {S.}~\bibnamefont {Sheldon}}, \bibinfo
  {author} {\bibfnamefont {P.~D.}\ \bibnamefont {Nation}}, \ and\ \bibinfo
  {author} {\bibfnamefont {J.~M.}\ \bibnamefont {Gambetta}},\ }\href@noop {}
  {\bibfield  {journal} {\bibinfo  {journal} {Physical Review A}\ }\textbf
  {\bibinfo {volume} {100}},\ \bibinfo {pages} {032328} (\bibinfo {year}
  {2019})}\BibitemShut {NoStop}%
\bibitem [{\citenamefont {Blume-Kohout}\ and\ \citenamefont
  {Young}(2019)}]{blume2019volumetric}%
  \BibitemOpen
  \bibfield  {author} {\bibinfo {author} {\bibfnamefont {R.}~\bibnamefont
  {Blume-Kohout}}\ and\ \bibinfo {author} {\bibfnamefont {K.~C.}\ \bibnamefont
  {Young}},\ }\href@noop {} {\bibfield  {journal} {\bibinfo  {journal} {arXiv
  preprint arXiv:1904.05546}\ } (\bibinfo {year} {2019})}\BibitemShut {NoStop}%
\bibitem [{\citenamefont {Aaronson}\ and\ \citenamefont
  {Chen}(2016)}]{aaronson2016complexity}%
  \BibitemOpen
  \bibfield  {author} {\bibinfo {author} {\bibfnamefont {S.}~\bibnamefont
  {Aaronson}}\ and\ \bibinfo {author} {\bibfnamefont {L.}~\bibnamefont
  {Chen}},\ }\href@noop {} {\bibfield  {journal} {\bibinfo  {journal} {arXiv
  preprint arXiv:1612.05903}\ } (\bibinfo {year} {2016})}\BibitemShut {NoStop}%
\bibitem [{\citenamefont {Markov}\ and\ \citenamefont
  {Shi}(2008)}]{Markov_2008}%
  \BibitemOpen
  \bibfield  {author} {\bibinfo {author} {\bibfnamefont {I.~L.}\ \bibnamefont
  {Markov}}\ and\ \bibinfo {author} {\bibfnamefont {Y.}~\bibnamefont {Shi}},\
  }\href {\doibase 10.1137/050644756} {\bibfield  {journal} {\bibinfo
  {journal} {{SIAM} Journal on Computing}\ }\textbf {\bibinfo {volume} {38}},\
  \bibinfo {pages} {963} (\bibinfo {year} {2008})}\BibitemShut {NoStop}%
\bibitem [{\citenamefont {Kandala}\ \emph {et~al.}(2017)\citenamefont
  {Kandala}, \citenamefont {Mezzacapo}, \citenamefont {Temme}, \citenamefont
  {Takita}, \citenamefont {Brink}, \citenamefont {Chow},\ and\ \citenamefont
  {Gambetta}}]{kandala2017hardware}%
  \BibitemOpen
  \bibfield  {author} {\bibinfo {author} {\bibfnamefont {A.}~\bibnamefont
  {Kandala}}, \bibinfo {author} {\bibfnamefont {A.}~\bibnamefont {Mezzacapo}},
  \bibinfo {author} {\bibfnamefont {K.}~\bibnamefont {Temme}}, \bibinfo
  {author} {\bibfnamefont {M.}~\bibnamefont {Takita}}, \bibinfo {author}
  {\bibfnamefont {M.}~\bibnamefont {Brink}}, \bibinfo {author} {\bibfnamefont
  {J.~M.}\ \bibnamefont {Chow}}, \ and\ \bibinfo {author} {\bibfnamefont
  {J.~M.}\ \bibnamefont {Gambetta}},\ }\href@noop {} {\bibfield  {journal}
  {\bibinfo  {journal} {Nature}\ }\textbf {\bibinfo {volume} {549}},\ \bibinfo
  {pages} {242} (\bibinfo {year} {2017})}\BibitemShut {NoStop}%
\bibitem [{\citenamefont {Wang}\ \emph
  {et~al.}(2019{\natexlab{a}})\citenamefont {Wang}, \citenamefont {Higgott},\
  and\ \citenamefont {Brierley}}]{wang2019accelerated}%
  \BibitemOpen
  \bibfield  {author} {\bibinfo {author} {\bibfnamefont {D.}~\bibnamefont
  {Wang}}, \bibinfo {author} {\bibfnamefont {O.}~\bibnamefont {Higgott}}, \
  and\ \bibinfo {author} {\bibfnamefont {S.}~\bibnamefont {Brierley}},\
  }\href@noop {} {\bibfield  {journal} {\bibinfo  {journal} {Physical review
  letters}\ }\textbf {\bibinfo {volume} {122}},\ \bibinfo {pages} {140504}
  (\bibinfo {year} {2019}{\natexlab{a}})}\BibitemShut {NoStop}%
\bibitem [{\citenamefont {Asaad}\ \emph {et~al.}(2016)\citenamefont {Asaad},
  \citenamefont {Dickel}, \citenamefont {Langford}, \citenamefont {Poletto},
  \citenamefont {Bruno}, \citenamefont {Rol}, \citenamefont {Deurloo},\ and\
  \citenamefont {DiCarlo}}]{asaad2016independent}%
  \BibitemOpen
  \bibfield  {author} {\bibinfo {author} {\bibfnamefont {S.}~\bibnamefont
  {Asaad}}, \bibinfo {author} {\bibfnamefont {C.}~\bibnamefont {Dickel}},
  \bibinfo {author} {\bibfnamefont {N.~K.}\ \bibnamefont {Langford}}, \bibinfo
  {author} {\bibfnamefont {S.}~\bibnamefont {Poletto}}, \bibinfo {author}
  {\bibfnamefont {A.}~\bibnamefont {Bruno}}, \bibinfo {author} {\bibfnamefont
  {M.~A.}\ \bibnamefont {Rol}}, \bibinfo {author} {\bibfnamefont
  {D.}~\bibnamefont {Deurloo}}, \ and\ \bibinfo {author} {\bibfnamefont
  {L.}~\bibnamefont {DiCarlo}},\ }\href@noop {} {\bibfield  {journal} {\bibinfo
   {journal} {npj Quantum Information}\ }\textbf {\bibinfo {volume} {2}},\
  \bibinfo {pages} {1} (\bibinfo {year} {2016})}\BibitemShut {NoStop}%
\bibitem [{\citenamefont {Rall}\ \emph {et~al.}(2019)\citenamefont {Rall},
  \citenamefont {Liang}, \citenamefont {Cook},\ and\ \citenamefont
  {Kretschmer}}]{Rall2019}%
  \BibitemOpen
  \bibfield  {author} {\bibinfo {author} {\bibfnamefont {P.}~\bibnamefont
  {Rall}}, \bibinfo {author} {\bibfnamefont {D.}~\bibnamefont {Liang}},
  \bibinfo {author} {\bibfnamefont {J.}~\bibnamefont {Cook}}, \ and\ \bibinfo
  {author} {\bibfnamefont {W.}~\bibnamefont {Kretschmer}},\ }\href {\doibase
  10.1103/PhysRevA.99.062337} {\  (\bibinfo {year} {2019}),\
  10.1103/PhysRevA.99.062337},\ \Eprint {http://arxiv.org/abs/1901.09070}
  {arXiv:1901.09070} \BibitemShut {NoStop}%
\bibitem [{\citenamefont {Wang}\ \emph
  {et~al.}(2019{\natexlab{b}})\citenamefont {Wang}, \citenamefont {Wilde},\
  and\ \citenamefont {Su}}]{1903.04483}%
  \BibitemOpen
  \bibfield  {author} {\bibinfo {author} {\bibfnamefont {X.}~\bibnamefont
  {Wang}}, \bibinfo {author} {\bibfnamefont {M.~M.}\ \bibnamefont {Wilde}}, \
  and\ \bibinfo {author} {\bibfnamefont {Y.}~\bibnamefont {Su}},\ }\href@noop
  {} {\enquote {\bibinfo {title} {{Q}uantifying the magic of quantum
  channels},}\ } (\bibinfo {year} {2019}{\natexlab{b}}),\ \bibinfo {note}
  {arXiv:1903.04483v1},\ \Eprint {http://arxiv.org/abs/1903.04483} {1903.04483}
  \BibitemShut {NoStop}%
\bibitem [{\citenamefont {Hoeffding}(1994)}]{hoeffding1994probability}%
  \BibitemOpen
  \bibfield  {author} {\bibinfo {author} {\bibfnamefont {W.}~\bibnamefont
  {Hoeffding}},\ }in\ \href@noop {} {\emph {\bibinfo {booktitle} {The Collected
  Works of Wassily Hoeffding}}}\ (\bibinfo  {publisher} {Springer},\ \bibinfo
  {year} {1994})\ pp.\ \bibinfo {pages} {409--426}\BibitemShut {NoStop}%
\bibitem [{\citenamefont {Pashayan}\ \emph {et~al.}(2015)\citenamefont
  {Pashayan}, \citenamefont {Wallman},\ and\ \citenamefont
  {Bartlett}}]{pashayan2015estimating}%
  \BibitemOpen
  \bibfield  {author} {\bibinfo {author} {\bibfnamefont {H.}~\bibnamefont
  {Pashayan}}, \bibinfo {author} {\bibfnamefont {J.~J.}\ \bibnamefont
  {Wallman}}, \ and\ \bibinfo {author} {\bibfnamefont {S.~D.}\ \bibnamefont
  {Bartlett}},\ }\href@noop {} {\bibfield  {journal} {\bibinfo  {journal}
  {Physical review letters}\ }\textbf {\bibinfo {volume} {115}},\ \bibinfo
  {pages} {070501} (\bibinfo {year} {2015})}\BibitemShut {NoStop}%
\bibitem [{\citenamefont {Watrous}(2018)}]{Watrous2018}%
  \BibitemOpen
  \bibfield  {author} {\bibinfo {author} {\bibfnamefont {J.}~\bibnamefont
  {Watrous}},\ }\href {\doibase 10.1017/9781316848142} {\emph {\bibinfo {title}
  {{The Theory of Quantum Information}}}}\ (\bibinfo  {publisher} {Cambridge
  University Press},\ \bibinfo {year} {2018})\BibitemShut {NoStop}%
\bibitem [{\citenamefont {Wallman}\ and\ \citenamefont
  {Emerson}(2016)}]{Wallman2016}%
  \BibitemOpen
  \bibfield  {author} {\bibinfo {author} {\bibfnamefont {J.~J.}\ \bibnamefont
  {Wallman}}\ and\ \bibinfo {author} {\bibfnamefont {J.}~\bibnamefont
  {Emerson}},\ }\href {\doibase 10.1103/PhysRevA.94.052325} {\bibfield
  {journal} {\bibinfo  {journal} {Physical Review A}\ }\textbf {\bibinfo
  {volume} {94}},\ \bibinfo {pages} {052325} (\bibinfo {year}
  {2016})}\BibitemShut {NoStop}%
\bibitem [{\citenamefont {Knill}\ \emph {et~al.}(2008)\citenamefont {Knill},
  \citenamefont {Leibfried}, \citenamefont {Reichle}, \citenamefont {Britton},
  \citenamefont {Blakestad}, \citenamefont {Jost}, \citenamefont {Langer},
  \citenamefont {Ozeri}, \citenamefont {Seidelin},\ and\ \citenamefont
  {Wineland}}]{Knill_2008}%
  \BibitemOpen
  \bibfield  {author} {\bibinfo {author} {\bibfnamefont {E.}~\bibnamefont
  {Knill}}, \bibinfo {author} {\bibfnamefont {D.}~\bibnamefont {Leibfried}},
  \bibinfo {author} {\bibfnamefont {R.}~\bibnamefont {Reichle}}, \bibinfo
  {author} {\bibfnamefont {J.}~\bibnamefont {Britton}}, \bibinfo {author}
  {\bibfnamefont {R.~B.}\ \bibnamefont {Blakestad}}, \bibinfo {author}
  {\bibfnamefont {J.~D.}\ \bibnamefont {Jost}}, \bibinfo {author}
  {\bibfnamefont {C.}~\bibnamefont {Langer}}, \bibinfo {author} {\bibfnamefont
  {R.}~\bibnamefont {Ozeri}}, \bibinfo {author} {\bibfnamefont
  {S.}~\bibnamefont {Seidelin}}, \ and\ \bibinfo {author} {\bibfnamefont
  {D.~J.}\ \bibnamefont {Wineland}},\ }\href@noop {} {\bibfield  {journal}
  {\bibinfo  {journal} {Phys. Rev. A}\ }\textbf {\bibinfo {volume} {77}},\
  \bibinfo {pages} {012307} (\bibinfo {year} {2008})}\BibitemShut {NoStop}%
\bibitem [{\citenamefont {Magesan}\ \emph {et~al.}(2012)\citenamefont
  {Magesan}, \citenamefont {Gambetta},\ and\ \citenamefont
  {Emerson}}]{Magesan_2012}%
  \BibitemOpen
  \bibfield  {author} {\bibinfo {author} {\bibfnamefont {E.}~\bibnamefont
  {Magesan}}, \bibinfo {author} {\bibfnamefont {J.~M.}\ \bibnamefont
  {Gambetta}}, \ and\ \bibinfo {author} {\bibfnamefont {J.}~\bibnamefont
  {Emerson}},\ }\href@noop {} {\bibfield  {journal} {\bibinfo  {journal} {Phys.
  Rev. A}\ }\textbf {\bibinfo {volume} {85}},\ \bibinfo {pages} {042311}
  (\bibinfo {year} {2012})}\BibitemShut {NoStop}%
\bibitem [{\citenamefont {Magesan}\ \emph {et~al.}(2011)\citenamefont
  {Magesan}, \citenamefont {M.},\ and\ \citenamefont {Emerson}}]{Magesan_2011}%
  \BibitemOpen
  \bibfield  {author} {\bibinfo {author} {\bibfnamefont {E.}~\bibnamefont
  {Magesan}}, \bibinfo {author} {\bibfnamefont {G.~J.}\ \bibnamefont {M.}}, \
  and\ \bibinfo {author} {\bibfnamefont {J.}~\bibnamefont {Emerson}},\
  }\href@noop {} {\bibfield  {journal} {\bibinfo  {journal} {Phys. Rev. Lett.}\
  }\textbf {\bibinfo {volume} {106}},\ \bibinfo {pages} {180504} (\bibinfo
  {year} {2011})}\BibitemShut {NoStop}%
\bibitem [{\citenamefont {Harper}\ \emph
  {et~al.}(2019{\natexlab{a}})\citenamefont {Harper}, \citenamefont {Hincks},
  \citenamefont {Ferrie}, \citenamefont {Flammia},\ and\ \citenamefont
  {Wallman}}]{Harper_2019_statistical}%
  \BibitemOpen
  \bibfield  {author} {\bibinfo {author} {\bibfnamefont {R.}~\bibnamefont
  {Harper}}, \bibinfo {author} {\bibfnamefont {I.}~\bibnamefont {Hincks}},
  \bibinfo {author} {\bibfnamefont {C.}~\bibnamefont {Ferrie}}, \bibinfo
  {author} {\bibfnamefont {S.~T.}\ \bibnamefont {Flammia}}, \ and\ \bibinfo
  {author} {\bibfnamefont {J.~J.}\ \bibnamefont {Wallman}},\ }\href {\doibase
  10.1103/physreva.99.052350} {\bibfield  {journal} {\bibinfo  {journal}
  {Physical Review A}\ }\textbf {\bibinfo {volume} {99}} (\bibinfo {year}
  {2019}{\natexlab{a}}),\ 10.1103/physreva.99.052350}\BibitemShut {NoStop}%
\bibitem [{\citenamefont {Helsen}\ \emph
  {et~al.}(2018{\natexlab{b}})\citenamefont {Helsen}, \citenamefont {Xue},
  \citenamefont {Vandersypen},\ and\ \citenamefont {Wehner}}]{1806.02048}%
  \BibitemOpen
  \bibfield  {author} {\bibinfo {author} {\bibfnamefont {J.}~\bibnamefont
  {Helsen}}, \bibinfo {author} {\bibfnamefont {X.}~\bibnamefont {Xue}},
  \bibinfo {author} {\bibfnamefont {L.~M.}\ \bibnamefont {Vandersypen}}, \ and\
  \bibinfo {author} {\bibfnamefont {S.}~\bibnamefont {Wehner}},\ }\href@noop {}
  {\enquote {\bibinfo {title} {{A} new class of efficient randomized
  benchmarking protocols},}\ } (\bibinfo {year} {2018}{\natexlab{b}}),\
  \bibinfo {note} {arXiv:1806.02048v1},\ \Eprint
  {http://arxiv.org/abs/1806.02048} {1806.02048} \BibitemShut {NoStop}%
\bibitem [{\citenamefont {Kimmel}\ \emph {et~al.}(2014)\citenamefont {Kimmel},
  \citenamefont {da~Silva}, \citenamefont {Ryan}, \citenamefont {Johnson},\
  and\ \citenamefont {Ohki}}]{Kimmel_2014}%
  \BibitemOpen
  \bibfield  {author} {\bibinfo {author} {\bibfnamefont {S.}~\bibnamefont
  {Kimmel}}, \bibinfo {author} {\bibfnamefont {M.~P.}\ \bibnamefont
  {da~Silva}}, \bibinfo {author} {\bibfnamefont {C.~A.}\ \bibnamefont {Ryan}},
  \bibinfo {author} {\bibfnamefont {B.~R.}\ \bibnamefont {Johnson}}, \ and\
  \bibinfo {author} {\bibfnamefont {T.}~\bibnamefont {Ohki}},\ }\href {\doibase
  10.1103/physrevx.4.011050} {\bibfield  {journal} {\bibinfo  {journal}
  {Physical Review X}\ }\textbf {\bibinfo {volume} {4}} (\bibinfo {year}
  {2014}),\ 10.1103/physrevx.4.011050}\BibitemShut {NoStop}%
\bibitem [{\citenamefont {Roth}\ \emph {et~al.}(2018)\citenamefont {Roth},
  \citenamefont {Kueng}, \citenamefont {Kimmel}, \citenamefont {Liu},
  \citenamefont {Gross}, \citenamefont {Eisert},\ and\ \citenamefont
  {Kliesch}}]{Roth2018}%
  \BibitemOpen
  \bibfield  {author} {\bibinfo {author} {\bibfnamefont {I.}~\bibnamefont
  {Roth}}, \bibinfo {author} {\bibfnamefont {R.}~\bibnamefont {Kueng}},
  \bibinfo {author} {\bibfnamefont {S.}~\bibnamefont {Kimmel}}, \bibinfo
  {author} {\bibfnamefont {Y.-K.}\ \bibnamefont {Liu}}, \bibinfo {author}
  {\bibfnamefont {D.}~\bibnamefont {Gross}}, \bibinfo {author} {\bibfnamefont
  {J.}~\bibnamefont {Eisert}}, \ and\ \bibinfo {author} {\bibfnamefont
  {M.}~\bibnamefont {Kliesch}},\ }\href@noop {} {\  (\bibinfo {year} {2018})},\
  \Eprint {http://arxiv.org/abs/1803.00572} {1803.00572} \BibitemShut {NoStop}%
\bibitem [{\citenamefont {Flammia}\ and\ \citenamefont
  {Wallman}(2019)}]{flammia2019efficient}%
  \BibitemOpen
  \bibfield  {author} {\bibinfo {author} {\bibfnamefont {S.~T.}\ \bibnamefont
  {Flammia}}\ and\ \bibinfo {author} {\bibfnamefont {J.~J.}\ \bibnamefont
  {Wallman}},\ }\href@noop {} {\bibfield  {journal} {\bibinfo  {journal} {arXiv
  preprint arXiv:1907.12976}\ } (\bibinfo {year} {2019})}\BibitemShut {NoStop}%
\bibitem [{\citenamefont {Harper}\ \emph
  {et~al.}(2019{\natexlab{b}})\citenamefont {Harper}, \citenamefont {Flammia},\
  and\ \citenamefont {Wallman}}]{Harper2019}%
  \BibitemOpen
  \bibfield  {author} {\bibinfo {author} {\bibfnamefont {R.}~\bibnamefont
  {Harper}}, \bibinfo {author} {\bibfnamefont {S.~T.}\ \bibnamefont {Flammia}},
  \ and\ \bibinfo {author} {\bibfnamefont {J.~J.}\ \bibnamefont {Wallman}},\
  }\href {http://arxiv.org/abs/1907.13022} {\  (\bibinfo {year}
  {2019}{\natexlab{b}})},\ \Eprint {http://arxiv.org/abs/1907.13022}
  {arXiv:1907.13022} \BibitemShut {NoStop}%
\bibitem [{\citenamefont {Moll}\ \emph {et~al.}(2017)\citenamefont {Moll},
  \citenamefont {Barkoutsos}, \citenamefont {Bishop}, \citenamefont {Chow},
  \citenamefont {Cross}, \citenamefont {Egger}, \citenamefont {Filipp},
  \citenamefont {Fuhrer}, \citenamefont {Gambetta}, \citenamefont {Ganzhorn}
  \emph {et~al.}}]{moll2017quantum}%
  \BibitemOpen
  \bibfield  {author} {\bibinfo {author} {\bibfnamefont {N.}~\bibnamefont
  {Moll}}, \bibinfo {author} {\bibfnamefont {P.}~\bibnamefont {Barkoutsos}},
  \bibinfo {author} {\bibfnamefont {L.~S.}\ \bibnamefont {Bishop}}, \bibinfo
  {author} {\bibfnamefont {J.~M.}\ \bibnamefont {Chow}}, \bibinfo {author}
  {\bibfnamefont {A.}~\bibnamefont {Cross}}, \bibinfo {author} {\bibfnamefont
  {D.~J.}\ \bibnamefont {Egger}}, \bibinfo {author} {\bibfnamefont
  {S.}~\bibnamefont {Filipp}}, \bibinfo {author} {\bibfnamefont
  {A.}~\bibnamefont {Fuhrer}}, \bibinfo {author} {\bibfnamefont {J.~M.}\
  \bibnamefont {Gambetta}}, \bibinfo {author} {\bibfnamefont {M.}~\bibnamefont
  {Ganzhorn}},  \emph {et~al.},\ }\href@noop {} {\bibfield  {journal} {\bibinfo
   {journal} {arXiv preprint arXiv:1710.01022}\ } (\bibinfo {year}
  {2017})}\BibitemShut {NoStop}%
\bibitem [{\citenamefont {Goodman}\ and\ \citenamefont {Wallach}(2009)}]{Roe}%
  \BibitemOpen
  \bibfield  {author} {\bibinfo {author} {\bibfnamefont {R.}~\bibnamefont
  {Goodman}}\ and\ \bibinfo {author} {\bibfnamefont {N.~R.}\ \bibnamefont
  {Wallach}},\ }\href@noop {} {\emph {\bibinfo {title} {Symmetry,
  Representations, and Invariants}}}\ (\bibinfo  {publisher} {Graduate Texts in
  Mathematics, Springer},\ \bibinfo {year} {2009})\BibitemShut {NoStop}%
\bibitem [{\citenamefont {Simon}(1995)}]{Simon_1995}%
  \BibitemOpen
  \bibfield  {author} {\bibinfo {author} {\bibfnamefont {B.}~\bibnamefont
  {Simon}},\ }in\ \href {\doibase 10.1090/gsm/010/03} {\emph {\bibinfo
  {booktitle} {Representations of Finite and Compact Groups}}}\ (\bibinfo
  {publisher} {American Mathematical Society},\ \bibinfo {year} {1995})\ pp.\
  \bibinfo {pages} {35--63}\BibitemShut {NoStop}%
\bibitem [{\citenamefont {Helsen}\ \emph {et~al.}(2017)\citenamefont {Helsen},
  \citenamefont {Wallman}, \citenamefont {Flammia},\ and\ \citenamefont
  {Wehner}}]{Helsen_2017}%
  \BibitemOpen
  \bibfield  {author} {\bibinfo {author} {\bibfnamefont {J.}~\bibnamefont
  {Helsen}}, \bibinfo {author} {\bibfnamefont {J.~J.}\ \bibnamefont {Wallman}},
  \bibinfo {author} {\bibfnamefont {S.~T.}\ \bibnamefont {Flammia}}, \ and\
  \bibinfo {author} {\bibfnamefont {S.}~\bibnamefont {Wehner}},\ }\href@noop {}
  {\bibfield  {journal} {\bibinfo  {journal} {ArXiv e-prints}\ } (\bibinfo
  {year} {2017})},\ \Eprint {http://arxiv.org/abs/1701.04299} {arXiv:1701.04299
  [quant-ph]} \BibitemShut {NoStop}%
\bibitem [{\citenamefont {{Siudzi{\'n}ska}}\ and\ \citenamefont
  {{Chru{\'s}ci{\'n}ski}}(2018)}]{Siudzinska2018}%
  \BibitemOpen
  \bibfield  {author} {\bibinfo {author} {\bibfnamefont {K.}~\bibnamefont
  {{Siudzi{\'n}ska}}}\ and\ \bibinfo {author} {\bibfnamefont {D.}~\bibnamefont
  {{Chru{\'s}ci{\'n}ski}}},\ }\href {\doibase 10.1063/1.5013604} {\bibfield
  {journal} {\bibinfo  {journal} {Journal of Mathematical Physics}\ }\textbf
  {\bibinfo {volume} {59}},\ \bibinfo {eid} {033508} (\bibinfo {year}
  {2018})},\ \Eprint {http://arxiv.org/abs/1711.10823} {arXiv:1711.10823
  [math-ph]} \BibitemShut {NoStop}%
\bibitem [{\citenamefont {{Stilck Fran{\c c}a}}\ and\ \citenamefont
  {{Hashagen}}(2018)}]{Stilck_2018}%
  \BibitemOpen
  \bibfield  {author} {\bibinfo {author} {\bibfnamefont {D.}~\bibnamefont
  {{Stilck Fran{\c c}a}}}\ and\ \bibinfo {author} {\bibfnamefont {A.-L.}\
  \bibnamefont {{Hashagen}}},\ }\href@noop {} {\bibfield  {journal} {\bibinfo
  {journal} {quant-ph:1803.03621}\ } (\bibinfo {year} {2018})},\ \Eprint
  {http://arxiv.org/abs/1803.03621} {arXiv:1803.03621 [quant-ph]} \BibitemShut
  {NoStop}%
\bibitem [{\citenamefont {Gross}(2006)}]{Gross2006}%
  \BibitemOpen
  \bibfield  {author} {\bibinfo {author} {\bibfnamefont {D.}~\bibnamefont
  {Gross}},\ }\href {\doibase 10.1063/1.2393152} {\bibfield  {journal}
  {\bibinfo  {journal} {Journal of Mathematical Physics}\ }\textbf {\bibinfo
  {volume} {47}},\ \bibinfo {pages} {122107} (\bibinfo {year}
  {2006})}\BibitemShut {NoStop}%
\bibitem [{\citenamefont {Gross}(2007)}]{Gross2007}%
  \BibitemOpen
  \bibfield  {author} {\bibinfo {author} {\bibfnamefont {D.}~\bibnamefont
  {Gross}},\ }\href {\doibase 10.1007/s00340-006-2510-9} {\bibfield  {journal}
  {\bibinfo  {journal} {Applied Physics B}\ }\textbf {\bibinfo {volume} {86}},\
  \bibinfo {pages} {367} (\bibinfo {year} {2007})}\BibitemShut {NoStop}%
\bibitem [{\citenamefont {Veitch}\ \emph {et~al.}(2012)\citenamefont {Veitch},
  \citenamefont {Ferrie}, \citenamefont {Gross},\ and\ \citenamefont
  {Emerson}}]{Veitch2012}%
  \BibitemOpen
  \bibfield  {author} {\bibinfo {author} {\bibfnamefont {V.}~\bibnamefont
  {Veitch}}, \bibinfo {author} {\bibfnamefont {C.}~\bibnamefont {Ferrie}},
  \bibinfo {author} {\bibfnamefont {D.}~\bibnamefont {Gross}}, \ and\ \bibinfo
  {author} {\bibfnamefont {J.}~\bibnamefont {Emerson}},\ }\href {\doibase
  10.1088/1367-2630/14/11/113011} {\bibfield  {journal} {\bibinfo  {journal}
  {New Journal of Physics}\ }\textbf {\bibinfo {volume} {14}},\ \bibinfo
  {pages} {113011} (\bibinfo {year} {2012})}\BibitemShut {NoStop}%
\bibitem [{\citenamefont {Heinrich}\ and\ \citenamefont
  {Gross}(2019)}]{Heinrich2019}%
  \BibitemOpen
  \bibfield  {author} {\bibinfo {author} {\bibfnamefont {M.}~\bibnamefont
  {Heinrich}}\ and\ \bibinfo {author} {\bibfnamefont {D.}~\bibnamefont
  {Gross}},\ }\href {\doibase 10.22331/q-2019-04-08-132} {\bibfield  {journal}
  {\bibinfo  {journal} {Quantum}\ }\textbf {\bibinfo {volume} {3}},\ \bibinfo
  {pages} {132} (\bibinfo {year} {2019})}\BibitemShut {NoStop}%
\end{thebibliography}%
\end{document}